\theoremstyle{plain}
\newtheorem{lemma}{Lemma}
\newtheorem{proposition}{Proposition}
\newcommand{\R}{\mathbb{R}}
\newcommand{\E}{\mathbb{E}}
\newcommand{\diag}{\operatorname{diag}}
\newcommand{\blind}{1}
\begin{document}

\def\spacingset#1{\renewcommand{\baselinestretch}%
{#1}\small\normalsize} \spacingset{1}

%%%%%%%%%%%%%%%%%%%%%%%%%%%%%%%%%%%%%%%%%%%%%%%%%%%%%%%%%%%%%%%%%%%%%%%%%%%%%%

\if1\blind
{
  \title{\bf Scalable Variable Selection and Model Averaging for Latent Regression Models Using Approximate Variational Bayes}
  \author{Gregor Zens\thanks{
    Corresponding author: \href{mailto:zens@iiasa.ac.at}{zens@iiasa.ac.at}. International Institute for Applied Systems Analysis
(IIASA), Laxenburg, Austria.
 %\textit{please remember to list all relevant funding sources in the unblinded version}
    }\hspace{.2cm}\\
    International Institute for Applied Systems Analysis\\
    and \\
    Mark F.J. Steel \\
    Department of Statistics, University of Warwick}
  \maketitle
} \fi

\if0\blind
{
  \bigskip
  \bigskip
  \bigskip
  \begin{center}
    {\LARGE\bf Scalable Variable Selection and Model Averaging for Latent Regression Models Using Approximate Variational Bayes}
\end{center}
  \medskip
} \fi

\bigskip
\begin{abstract}
We propose a fast and theoretically grounded method for Bayesian variable selection and model averaging in latent variable regression models. Our framework addresses three interrelated challenges: (i) intractable marginal likelihoods, (ii) exponentially large model spaces, and (iii) computational costs in large samples. We introduce a novel integrated likelihood approximation based on mean-field variational posterior approximations and establish its asymptotic model selection consistency under broad conditions. To reduce the computational burden, we develop an approximate variational Bayes scheme that fixes the latent regression outcomes for all models at initial estimates obtained under a baseline null model. Despite its simplicity, this approach locally and asymptotically preserves the model-selection behavior of the full variational Bayes approach to first order, at a fraction of the computational cost. Extensive numerical studies - covering probit, tobit, semi-parametric count data models and Poisson log-normal regression - demonstrate accurate inference and large speedups, often reducing runtime from days to hours with comparable accuracy. Applications to real-world data further highlight the practical benefits of the methods for Bayesian inference in large samples and under model uncertainty. 
\end{abstract}

\noindent%
{\it Keywords:}  model uncertainty, marginal likelihoods, high-dimensional statistics, mean-field approximation, non-Gaussian data
\vfill

\newpage
\spacingset{1.9} % DON'T change the spacing!

% --- START CONTENT
\section{Introduction}

Latent variable regression models play a central role in modern statistical analysis, particularly when considering non-Gaussian responses such as binary or count data. A canonical latent Gaussian regression framework, for instance, posits a latent response  
\begin{equation}
z_i = \alpha + \bm{x}_i^\top \bm{\beta} + \varepsilon_i, \quad \varepsilon_i \sim \mathcal{N}(0, \sigma^2),\label{eq:zi_regr}
\end{equation}
modeled using covariates $\bm{x}_i$, and with a deterministic or probabilistic mapping from \(z_i\) to an observed response \(y_i\). For instance, probit regression (where \(\sigma^2 = 1\) is typically fixed) uses a deterministic mapping $y_i = \mathbbm{1}(z_i>0)$, whereas Poisson log-normal models $y_i \sim \mathcal{P}(\exp{(z_i)})$ employ a stochastic link, exemplifying the class of univariate link latent Gaussian models (ULLGMs; \citealp{hrafnkelsson2023statistical}; \citealp{steel2024model}). More examples are discussed below. The Bayesian paradigm typically allows for flexible and straightforward inference in such models via data augmentation algorithms (\citealp{tanner1987calculation}). 

In many applications, the selection of relevant covariates to explain the data is central to the statistical exercise at hand. This requires either identifying a single best-performing model based on a selection criterion or adopting a model averaging perspective, where model uncertainty is taken into account by considering an average over all candidate models weighted by their posterior probabilities. In the Bayesian paradigm, both model selection and model averaging are often based on integrated likelihoods of the form
\[
p(\bm{y}) = \int %\int \int \int 
p(\bm{y}, \bm{z}, \alpha, \bm{\beta}, \sigma^2) \, d\alpha \, d\bm{\beta} \, d\sigma^2 \, d\bm{z},
\]
where $\bm{y}=(y_1,\dots,y_n)^\top$ and $\bm{z}=(z_1,\dots,z_n)^\top$. 

While adopting $p(\bm{y})$ for accounting for model uncertainty is conceptually appealing, it comes with three major obstacles: First, marginal likelihoods $p(\bm{y})$ are typically unavailable in closed form for non-Gaussian settings, requiring approximate solutions or employing computationally expensive numerical methods, especially when sample sizes are large. Second, the model space usually grows exponentially in the number of covariates. Third, marginal likelihoods (or their approximations) often need to be computed independently for each model. These repeated computations further multiply the computational demand of existing solutions and can quickly become prohibitive. These issues severely limit the appeal of otherwise theoretically sound and rigorous Bayesian model selection and model averaging techniques in settings with large datasets, which become more and more prevalent in various scientific fields.

In this article, we aim to jointly address intractable marginal likelihoods, large model spaces, and the computational burden of repeated integrated likelihood approximations in the context of model uncertainty with large sample sizes in latent variable regression models. First, we consider novel integrated likelihood approximations based on variational posterior approximations and establish their frequentist asymptotic properties for model selection. We show that in many cases, these approximations are simple to compute and achieve model selection consistency for commonly used latent variable regression models. Second, to further reduce computational overhead, we introduce a scalable \textit{approximate variational Bayes} algorithm that avoids iterative updates of latent variables after initial precomputations. We prove that, under mild conditions, this approximate method locally and asymptotically preserves, to first order, the model-selection behavior of the full variational approach, but at a much lower computational cost. Finally, leveraging these approximations, we construct a model space exploration procedure that navigates probabilistically through potentially large sets of candidate models.  

Beyond these methodological developments, we contribute new theoretical and algorithmic insights for (variational) Bayesian inference in specific latent regression settings, including proofs of posterior existence under improper priors, and closed-form variational Bayes (VB) coordinate ascent updates for several illustrative models. We demonstrate the practical performance of our methods through a series of simulations and real-data analyses, showing that the proposed criteria lead to reliable model selection and robust model averaging across a wide range of scenarios.

The remainder of this article is organized as follows. In Sec.~\ref{sec:general}, we introduce a marginal likelihood approximation based on mean-field variational Bayes posterior approximations and discuss related literature. Sec.~\ref{sec:variational_approx} gives details on obtaining mean-field approximations in latent regression models under a set of convenient, improper priors. In Sec.~\ref{sec:avb}, we discuss scalable model selection using a novel approximate variational Bayes algorithm.  Sec.~\ref{sec:illustrations} presents three illustrative examples in models with deterministic links between $z_i$ and $y_i$. Extensions to stochastic links are discussed in Sec.~\ref{sec:nonconjugate}. Sec.~\ref{sec:algorithmic} provides algorithmic details and Sec.~\ref{sec:numerical} presents extensive numerical studies based on artificial data. In Sec.~\ref{sec:applications}, the algorithms are applied to real-world data. Finally, Sec.~\ref{sec:conclusion} discusses  and concludes.

\section{Variational Integrated Likelihood Approximations}
\label{sec:general}

In the general case, consider a $d$-dimensional vector of regression parameters $\bm{\theta}$ and an  $n$-dimensional latent data vector $\bm{z}$ in order to model an $n$-dimensional observed data vector $\bm{y}$. We will focus specifically on cases where the joint density factorizes as $p(\bm{y}, \bm{z}, \bm{\theta}) = p(\bm{y}|\bm{z})p(\bm{z}|\bm{\theta}) p(\bm{\theta})$ where $p(\bm{y}|\bm{z}) = \prod_i p(y_i|z_i)$ is the observed data likelihood given the latent data, $p(\bm{z}|\bm{\theta}) = \prod_i p(z_i|\bm{\theta})$ is the latent (typically Gaussian) data likelihood given the parameters and $p(\bm{\theta})$ is the prior density on $\bm{\theta}$, where $\bm{\theta} = (\alpha, \bm{\beta}, \sigma^2)$ will be the special case discussed in the regression examples below. We assume these quantities are of known form and easy to evaluate. This covers a broad range of widely used latent regression models. Dependence on further observed quantities such as observed regressors $\bm{x}_i$ is not explicit for notational convenience. In this setting, the integrated likelihood $p(\bm{y})$ can be written as
\begin{equation}
\label{eq:main_ml}
p(\bm{y}) = \frac{p(\bm{y}|\bm{z})p(\bm{z}|\bm{\theta})p(\bm{\theta})}{p(\bm{z}, \bm{\theta}| \bm{y})},
\end{equation}
where $p(\bm{z}, \bm{\theta}| \bm{y})$ is the joint posterior of the parameters $\bm{\theta}$ and the latent data $\bm{z}$. Identity (\ref{eq:main_ml}) is known as the \textit{candidate's formula}, the \textit{basic marginal likelihood identity} or \textit{Chib's identity} and has been previously considered in the context of computing marginal likelihoods in MCMC settings (\citealp{besag_89}; \citealp{chib1995marginal}; \citealp{chib2001marginal}). Crucially, exploiting this relationship for integrated likelihood calculations requires evaluation of the posterior density $p(\bm{z}, \bm{\theta}| \bm{y})$, which is typically unavailable. In addition, while the identity holds at any point $(\bm{z}, \bm{\theta})$, it is typically more numerically stable when a high density point is chosen, which is usually unknown as well.

We address both issues by replacing the true posterior distribution with a variational approximation and selecting the evaluation point based on a variational high density point estimate. Specifically, suppose we have a posterior approximation  $p(\bm{z}, \bm{\theta}| \bm{y}) \approx q(\bm{z}, \bm{\theta})$ and posterior point estimates based on posterior expectations $\hat{\bm{z}} = \mathbb{E}_{q(\bm{z},\bm{\theta})}\left[\bm{z}\right]$ and $\hat{\bm{\theta}} = \mathbb{E}_{q(\bm{z},\bm{\theta})}\left[\bm{\theta}\right]$ available. Then,
\begin{equation}
\label{eq:approx1}
p(\bm{y}) \approx \frac{p(\bm{y}|\hat{\bm{z}})p(\hat{\bm{z}}|\hat{\bm{\theta}})p(\hat{\bm{\theta}})}{q(\hat{\bm{z}}, \hat{\bm{\theta}})}
\end{equation}
is an approximation to the integrated likelihood $p(\bm{y})$. Evidently, this approximation becomes exact only when $p(\hat{\bm{z}}, \hat{\bm{\theta}}| \bm{y}) = q(\hat{\bm{z}}, \hat{\bm{\theta}})$. Note that this is a weaker condition than perfect point-wise density approximations $p(\bm{z}, \bm{\theta}| \bm{y}) = q(\bm{z}, \bm{\theta})$. In this article, we specifically focus on the theoretical and empirical behavior of model selection based on simple mean-field variational Bayes posterior approximations of the form $p(\bm{z}, \bm{\theta}| \bm{y}) \approx q(\bm{z})q(\bm{\theta})$,
resulting in a marginal likelihood approximation

\begin{equation}
\label{eq:approx2}
p(\bm{y}) \approx p_{\text{VBC}}(\bm{y}) = \frac{p(\bm{y}|\hat{\bm{z}})\, p(\hat{\bm{z}}|\hat{\bm{\theta}})\, p(\hat{\bm{\theta}})}{q(\hat{\bm{z}})q(\hat{\bm{\theta}})},
\end{equation}
where we will use \textit{VBC} as a shorthand for \textit{variational Bayes criterion}. The VBC approximation typically differs from the evidence lower bound (ELBO), a standard quantity in variational inference. While the ELBO averages over the variational distribution, VBC evaluates at specific posterior expectations $\hat{\bm{z}}$ and $\hat{\bm{\theta}}$. We provide a comparison in the supplementary material (Sec.~\ref{app:elbo_vs_vbc}), showing that VBC deviates from the true marginal likelihood only through point-wise density approximation errors rather than the full Kullback-Leibler divergence. A template for constructing ELBOs in latent regression models is given in Sec.~\ref{sec:additional_details} in the supplementary material. In our empirical exercises considering latent variable regression models, we find that using $\log p_{\text{VBC}}(\bm{y})$ typically leads to slightly improved model selection and averaging performance compared to using the ELBO directly. 

\subsection{Asymptotic Behavior and Model Selection Consistency}

We choose a finite collection of candidate models
\(\mathcal M=\{\mathcal M_{k}:k=0,\dots ,K\}\), parametrized by the $d_k$-dimensional parameter vector $\bm{\theta}_k$. Let $k=k^*$ index the `true' model that generated the data \(\bm y\). Let $\hat{\bm\theta}_k = \mathbb{E}_{q_k(\bm{\theta}_k)}\left[\bm{\theta}_k \right]$ denote the expectation of the variational posterior of $\bm{\theta}_k$ under model $\mathcal{M}_k$, which will be used as variational posterior parameter point estimate. Similarly, let $\hat{\bm z}_k = \mathbb{E}_{q_k(\bm{z}_k)}\left[\bm{z}_k \right]$ denote the expectation of the variational posterior of the latent data vector $\bm{z}_k$ under model $\mathcal{M}_k$. Finally, let  
\begin{equation}
\label{eq:vbc_define}
\mathrm{VBC}_k\;=\;-2
\log\!\Bigl\{
  \frac{p(\bm y\mid \hat{\bm z}_k)\,
        p(\hat{\bm z}_k \mid\hat{\bm\theta}_{k})\,
        p(\hat{\bm\theta}_{k})}
       {q_{k}(\hat{\bm z}_k)\,q_{k}(\hat{\bm\theta}_{k})}
\Bigr\}
\end{equation}
denote the VBC where scaling with $-2$ enhances comparability to classical model selection criteria such as the BIC and where we have made explicit in the notation that the densities $q_k(\bm{z}_k)$ and $q_k(\bm{\theta}_k)$ vary across models $k$.

To assess asymptotic model selection behavior of $\mathrm{VBC}_k$, a set of assumptions is required. Beyond assuming (i) standard regularity conditions\footnote{Standard regularity conditions include the parameter space being compact, the log-likelihood being three times continuously differentiable, the Fisher information matrix being positive definite, and certain moment conditions holding; see, e.g., \citet[Ch.~5]{vandervaart1998asymptotic} for details.}, we assume that, for all $k$ (ii) $q_k(\bm{\theta}_k)$ approaches a multivariate Gaussian density asymptotically in $n$, (iii) this multivariate Gaussian is centered on the `true' model-specific parameter values, denoted as $\bm{\theta}^*_{k}$,\footnote{Conditions (ii) and (iii) can, for instance, be motivated based on related Bernstein-von-Mises results; compare \cite{zhang2024bayesian}.} and (iv) the variational posterior $q_k(\bm{\theta}_k)$ contracts around $\bm{\theta}_k^*$ at the standard parametric rate $1/\sqrt{n}$, implying that the variational posterior variance scales as $1/n$. Under these conditions, we have asymptotic consistency in $\bm{\theta}_k$ in the sense that $q_k(\bm{\theta}_k)$ concentrates at $\bm{\theta}_k^*$ as well as convergence of the variational posterior covariance $V_{\bm{\theta}_k} \xrightarrow[n\rightarrow\infty]{} \frac{1}{n}\bm{\Omega}_{k}$, where $\bm{\Omega}_{k}$ is a (fixed) positive definite matrix and where %we have defined 
$V_{\bm{\theta}_k} := \mathbb{E}_{q_k(\bm{\theta}_k)}[(\bm{\theta}_k - \hat{\bm{\theta}}_k)(\bm{\theta}_k - \hat{\bm{\theta}}_k)^T]$. Suppressing model subscripts $k$ for simplicity, these conditions lead to the asymptotic relationship 
\begin{equation}
\label{eq:asymptotic_vbc}
\begin{split}
    \log p_{\text{VBC}}(\bm{y}) \xrightarrow[n\rightarrow\infty]{} &\log p(\bm{y}|\hat{\bm{z}}) + \log p(\hat{\bm{z}}|\hat{\bm{\theta}}) + \log p(\hat{\bm{\theta}}) - \log q(\hat{\bm{z}})
 +\\&\frac{d}{2}\log(2\pi) - \frac{d}{2}\log(n)+\frac{1}{2}\log|\bm{\Omega}|.
\end{split}
\end{equation}

This quantity resembles a marginal likelihood approximation as encountered in deriving the Bayesian information criterion (BIC). The asymptotic term (\ref{eq:asymptotic_vbc}) is useful in characterizing model selection consistency based on the VBC in the context of the illustrative model classes discussed in Sec.~\ref{sec:illustrations} and Sec.~\ref{sec:nonconjugate}. Specifically, we can derive the following result:

\begin{proposition}[VBC selects the true model asymptotically]\label{lem:vbc_consistency}

Assume conditions (i) - (iv) as stated above hold. Assume further that, for every model $\mathcal{M}_k$ under consideration, the following condition (v) holds at least asymptotically in $n$:
      \[
      \frac{p(\bm y\mid\hat{\bm z})\,p(\hat{\bm z}\mid\hat{\bm\theta})}
           {q(\hat{\bm z})}\;=\;
      p(\bm y\mid\hat{\bm\theta})
      \]%       and (iii)
where the parameters of the variational distribution $q(\bm z)$ were obtained from the variational inference procedure. Then, the VBC rule is asymptotically model selection consistent:
\[
\Pr\!\Bigl\{\arg\min_{k}\mathrm{VBC}_k=k^*\Bigr\}
\;\xrightarrow[n\to\infty]{}1.
\]

\end{proposition}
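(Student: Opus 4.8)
The plan is to collapse $\mathrm{VBC}_k$ into a BIC-type criterion and then run the classical over-/under-fitting argument over the finite collection $\mathcal M$. First I would feed condition (v) into the asymptotic expansion (\ref{eq:asymptotic_vbc}): the three $\bm z$-dependent terms $\log p(\bm y\mid\hat{\bm z}_k)+\log p(\hat{\bm z}_k\mid\hat{\bm\theta}_k)-\log q_k(\hat{\bm z}_k)$ collapse to $\log p(\bm y\mid\hat{\bm\theta}_k)$, leaving
\[
\mathrm{VBC}_k \;=\; -2\log p(\bm y\mid\hat{\bm\theta}_k) \;+\; d_k\log n \;+\; C_k \;+\; o_p(r_n),
\]
where $C_k=-2\log p(\hat{\bm\theta}_k)-d_k\log(2\pi)-\log|\bm\Omega_k|$ collects terms that are $O(1)$ because $\hat{\bm\theta}_k\to\bm\theta^*_k$ and $\bm\Omega_k$ is fixed and positive definite, and $o_p(r_n)$ absorbs the error from using (v) only asymptotically. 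The object to analyze is thus a BIC-type criterion, evaluated at the variational mean $\hat{\bm\theta}_k$ in place of the MLE, which we minimize over $k$.

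Second, because $\mathcal M$ is finite, consistency follows from a union bound once I show $\Pr(\mathrm{VBC}_k\le\mathrm{VBC}_{k^*})\to 0$ for each fixed $k\neq k^*$, and I would split these comparisons into two regimes. For an \emph{over-fitted} model ($\mathcal M_k\supset\mathcal M_{k^*}$, so $d_k>d_{k^*}$), both variational means lie within $O_p(n^{-1/2})$ of the common embedded true value by conditions (iii)–(iv); a second-order Taylor expansion of the log-likelihood about $\bm\theta^*$ — using that the score is $O_p(\sqrt n)$ and the Hessian $O_p(n)$ under the regularity conditions (i) — yields $\log p(\bm y\mid\hat{\bm\theta}_k)-\log p(\bm y\mid\hat{\bm\theta}_{k^*})=O_p(1)$, since the two log-likelihoods coincide at the shared data-generating value. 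Hence $\mathrm{VBC}_k-\mathrm{VBC}_{k^*}=(d_k-d_{k^*})\log n+O_p(1)\to+\infty$ in probability. For an \emph{under-fitted} model (missing at least one active covariate), $\hat{\bm\theta}_k\to\bm\theta^*_k$, the pseudo-true Kullback–Leibler projection, while $\hat{\bm\theta}_{k^*}\to\bm\theta^*_{k^*}$; a law of large numbers gives $n^{-1}[\log p(\bm y\mid\hat{\bm\theta}_{k^*})-\log p(\bm y\mid\hat{\bm\theta}_k)]\to\delta_k:=\E[\log p(y_i\mid\bm\theta^*_{k^*})]-\E[\log p(y_i\mid\bm\theta^*_k)]$, which is strictly positive by identifiability. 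Thus $\mathrm{VBC}_k-\mathrm{VBC}_{k^*}=2n\delta_k(1+o_p(1))+O(\log n)\to+\infty$, the linear term dominating the logarithmic penalty gap. Combining both regimes across the finitely many $k\neq k^*$ gives the claim.

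The main obstacle I anticipate is not the comparison skeleton — which is the standard BIC consistency argument — but cleanly justifying the two rate statements that feed it, both of which ride on the assumed variational Bernstein–von Mises behavior. Specifically, I must ensure that conditions (iii)–(iv) deliver $\hat{\bm\theta}_k-\bm\theta^*_k=O_p(n^{-1/2})$ at the level of the variational posterior \emph{mean} (not merely posterior contraction), so that the Taylor remainder stays $O_p(1)$ in the over-fitted regime, and that $\hat{\bm\theta}_k$ converges to the correct pseudo-true value so the empirical log-likelihood average converges to its population counterpart in the under-fitted regime. The positivity $\delta_k>0$ rests on an identifiability argument — no reduced model reproduces the true response distribution — which I would state as a consequence of strict concavity of the expected log-likelihood in the illustrative settings. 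A secondary point is quantifying the rate $r_n$ at which (v) holds: I only need the induced log-error to be $o_p(\log n)$ in the over-fitted case and $o_p(n)$ in the under-fitted case, a mild requirement I would verify in each model of Sec.~\ref{sec:illustrations} and Sec.~\ref{sec:nonconjugate}.
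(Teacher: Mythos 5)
Your proposal is correct and follows essentially the same route as the paper's proof: condition (v) collapses the VBC to a BIC-type criterion evaluated at the variational mean, $\mathrm{VBC}_k=-2\log p(\bm y\mid\hat{\bm\theta}_k)+d_k\log n+\mathcal{O}(1)$, after which the classical over-/under-fitting comparison (the $\log n$ penalty gap dominates for overspecified models, the $O_p(n)$ likelihood gap dominates for models missing an active covariate) gives consistency over the finite model space. Your write-up is somewhat more explicit than the paper's — which splits into three cases, handles non-nested models by direct assumption, and does not quantify the rate at which (v) must hold — but these are refinements of the same argument rather than a different approach.
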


\begin{proof}[Proof.]
A proof is given in the supplementary material (Sec.~\ref{app:proof_vbc_consistency}).
\end{proof}

Note that condition (v) is key in formally characterizing model selection consistency. It turns out that this property holds for many relevant and widely used latent variable regression models that are based on deterministic relationships of $z_i$ and $y_i$ (discussed in Sec.~\ref{sec:illustrations}). This is due to the combination of the Gaussian  assumption for $z_i$ and $p(y_i|z_i)$ typically restricting the posterior support of $z_i$ to an interval for such models. This results in a ratio of truncated Gaussians and Gaussians, evaluated at the same variational parameters, a scenario where condition (v) holds either exactly or asymptotically; see below for more details and examples. For models  based on stochastic mappings between $z_i$ and $y_i$, condition (v) typically does not hold and we discuss implications of this in Sec.~\ref{sec:nonconjugate}. Nevertheless, we find strong empirical evidence of model selection consistency even if condition (v) is violated, see Sec.~\ref{sec:numerical}. Further important conditions for model selection consistency are asymptotic consistency of the corresponding VB algorithm and posterior existence. For the latent variable regression models considered in Sec.~\ref{sec:illustrations} and Sec.~\ref{sec:nonconjugate}, we will therefore formally characterize posterior existence, and discuss posterior consistency of the respective mean-field VB algorithms in Sec.~\ref{sec:numerical}  and Sec.~\ref{sec:conclusion}.

\subsection{Related Literature}

Previous approaches that use the \textit{basic marginal likelihood identity} to estimate the marginal likelihood include the method of \citet{chib1995marginal}, which applies if all full conditionals have known forms, and the extension of \citet{chib2001marginal}, which allows for Metropolis-Hastings updates but becomes computationally expensive in high dimensions. Modifications of these methods are considered in \cite{de2008improved}. \citet{hsiao2004bayesian} consider multivariate kernel density estimation of the posterior in combination with (\ref{eq:main_ml}), limiting the approach to small-dimensional posteriors. All of these approaches require a sufficient number of MCMC samples to be generated, which is restrictively expensive in the setting we consider. In addition, when many models are to be compared, it is necessary to run the MCMC algorithm for each model, further increasing the computational burden. 

A more efficient alternative is to consider simulation-free approximations to the posterior, as in this article. One option is to employ Laplace approximations, which result in closed-form marginal likelihoods, but do not scale well to high-dimensional posteriors and may impose overly restrictive parametric assumptions. \citet{nott2008approximating} allow for more flexibility and use copulas to approximate the posterior, exploring both simulation-based and simulation-free approaches. \citet{bernardo2003variational} and \citet{mcgrory2007variational} % investigate 
directly use the ELBO for model selection. Recent theoretical evidence (\citealp{zhang2024bayesian}) has demonstrated that the model selected by the ELBO indeed tends to asymptotically agree with the one selected by the BIC as the sample size tends to infinity, but that the ELBO tends to incur smaller approximation error to the log marginal likelihood. \citet{kejzlar2023black} discuss a black-box methodology for model averaging using generic variational posterior approximations.

Recently, some `hybrid' approaches have been proposed. \citet{hajargasht2018accurate} use a variational Bayes posterior density estimate as the weighting density in the modified harmonic mean estimator of \citet{geweke1999using}. \citet{chan2024bayesian} use a variational posterior approximation as weighting density in an importance sampling estimation procedure. Both approaches require simulating samples from the variational posterior approximation, which limits the ability to obtain fast approximations, as convergence is expected to be rather slow in high dimensions. Moreover, variational approximations tend to have less variance than the true posterior, which may lead to instabilities in the importance weights.

\section{Obtaining Variational Posterior Approximations}
\label{sec:variational_approx}

\subsection{Model and Prior Setup}

For the latent $z_i$ we consider regression  models as in (\ref{eq:zi_regr}) that are characterized by the inclusion or exclusion of any of the columns of $\bm X$, which is the $n \times p$ matrix with $\bm{x}^\top_{i}$ as its $i$th row. The total number of potential covariates in $\bm X$ is $p$ while $p_k$ indicates the number of covariates from $\bm X$ that are included in model $\mathcal{M}_k$. An intercept term is included in all models.
This results in a model space with $K=2^p$ elements. We consider settings where for model $\mathcal{M}_k$ the distribution of the latent outcomes $\bm{z}=(z_1,\dots,z_n)^\top$ is given by
\begin{equation}
\label{eq:model_k_z}
\bm{z}|\alpha, \bm{\beta}_k, \sigma^2, \mathcal{M}_k \sim \mathcal{N}(\alpha \bm{\iota}_n + \bm{X}_k \bm{\beta}_k , \sigma^2 \bm{I}_n),
\end{equation}
where $\bm{\iota}_n$ is a column vector of $n$ ones, $\bm{I}_n$ is the $n$-dimensional identity matrix, $\bm{X}_{k}$ consists of the $p_k$ columns of $\bm{X}$ that correspond to the regressors that are included in $\mathcal{M}_k$ and $\bm{\beta}_k$ groups the corresponding regression coefficients. The variance term $\sigma^2$ is common to all models and can be fixed or a free parameter to be estimated, depending on the specific modeling choice for $\bm y$ given $\bm z$. The regressors in $\bm{X}$ are centered by subtracting their means, which makes them orthogonal to the intercept and renders the interpretation of the intercept common to all models.

We will focus on a convenient improper prior setup on the parameters that are common to all models, which is often encountered in the context of Bayesian variable selection and model averaging. Specifically, we assume an improper, `non-informative' Jeffreys-type prior on the intercept parameter %common to all models 
and  $\sigma^2$ 
\begin{equation}
  \label{eq:priora}  p(\alpha) \propto 1,\quad \quad p(\sigma^2) \propto \frac{1}{\sigma^2},
\end{equation}
which %is a convenient prior that 
has the advantage of being invariant with respect to rescaling and translating the $z_{i}$s. For the regression coefficients $\bm{\beta}_k$, we adopt a $g$-prior which is invariant under affine linear transformations of the covariates
\begin{equation}
    \bm{\beta_k} |\sigma^2, \mathcal{M}_k \sim \mathcal{N}( \bm{0}_{p_k}, g\sigma^2(\bm{X}_{k}^\top\bm{X}_{k})^{-1}),\label{PriorThetaMA}
\end{equation}
where $g>0$ is a tuning parameter that we set deterministically to $g=n$. This setup satisfies the conditions required for Proposition~\ref{lem:vbc_consistency} to hold and aligns with choices known to yield consistent model selection in (latent) Gaussian regression settings (\citealp{leysteel2012}; \citealp{steel2024model}). We do not place a prior on $g$ here, but such extensions are natural pathways for future work. Our results are expected to carry over to any point-mass prior on $g$ whose influence is asymptotically dominated by the sample information.

Components of $\bm\beta$ that correspond to excluded regressors under $\mathcal{M}_k$ are assigned a prior point mass at zero for that model. Throughout, we will further assume that the matrix formed by adding a column of ones to $\bm{X}_k$ is of full column rank. If the model space contains models for which this is not the case (for example because $p_k\ge n$), we will assign prior probability zero to those models. For the linear regression model in (\ref{eq:model_k_z}) taken in isolation, this prior setup satisfies many of the desiderata of \cite{Bayarri_etal_12} for objective priors, such as measurement and group invariance and exact predictive matching. However, the considerations of this article extend to different prior setups in a straightforward fashion. 

To derive posterior model probabilities, priors over models have to be specified as well. We consider the beta-binomial structure of \cite{Brown_etal_98}, \cite{ley2009effect} and \cite{ScottBerger}, which amounts to using a Beta$(u,v)$ prior on the common prior inclusion probability for each covariate and results in
\begin{equation}\label{eq:PM_hyper}
P(\mathcal{M}_k)=\frac{\Gamma(u+v)}{\Gamma(u)\Gamma(v)} \frac{\Gamma(u+p_k)\Gamma(v+p-p_k)}{\Gamma(u+v+p)}
\end{equation}
for models where $\bm{X}_k$ is of full column rank. This type of prior is less informative in terms of model size than fixing the prior inclusion probability of the covariates. Following the suggestions of \cite{ley2009effect}, we choose $u=1$ and $v=(p-p_0)/p_0$, where $p_0$ is the prior expected model size over all models for which $\bm{X}_k$ is of full column rank.% the only prio This means that the user only needs to specify a value for $m$. 

\subsection{Variational Posterior}

We consider the class of mean-field approximations where $q(\bm{z}) = \prod_i q_i(z_i)$ and where we assume a suitable factorization exists for \(q(\bm{\theta})\). For latent regression models, we focus on  \(q(\bm{\theta}) = q(\alpha)q(\bm{\beta})q(\sigma^2)\). Throughout, we suppress model subscripts $k$ when working within a single model for notational clarity, reintroducing them only when comparing across models.

The optimal variational densities are obtained using the Coordinate Ascent Variational Inference (CAVI) algorithm, where each variational density is derived by taking the expectation of the joint log density with respect to all other variational densities (\citealp{blei2017variational}), s.t. $q(\bm{\theta}) \propto \exp \left(\mathbb{E}_{-q(\bm{\theta})}\left[\log p(\bm{y}, \bm{z}, \bm{\theta})\right] \right)$, and where $\mathbb{E}_{-q(j)}$ denotes the expectation taken with respect to all variational densities except for $q(j)$. For latent regression models with the prior structure as specified above -- and assuming for now $\sigma^2$ is not fixed -- this CAVI scheme results in closed-form updating rules for $\bm{\theta} = (\alpha, \bm{\beta}, \sigma^2)$. Specifically, the variational densities take the following forms:
\begin{equation}
\begin{aligned}
q(\alpha) &\propto \exp\left( \mathbb{E}_{- q(\alpha)} \left[ \log p(\bm{y}, \bm{z}, \alpha, \bm{\beta}, \sigma^2) \right] \right) \propto \mathcal{N}(\mu_{\alpha}, \omega_{\alpha}), \\
q(\bm{\beta}) &\propto \exp\left( \mathbb{E}_{- q(\bm{\beta})} \left[ \log p(\bm{y}, \bm{z}, \alpha, \bm{\beta}, \sigma^2) \right] \right) \propto \mathcal{N}(\bm{\mu}_{\beta}, \bm{\Omega}_{\beta}), \\
q(\sigma^2) &\propto \exp\left( \mathbb{E}_{- q(\sigma^2)} \left[ \log p(\bm{y}, \bm{z}, \alpha, \bm{\beta}, \sigma^2) \right] \right) \propto \mathcal{IG}(a, b).
\end{aligned}
\end{equation}

 Deriving the variational parameters is relatively straightforward and details can be found in the supplementary material (Sec.~\ref{sec:cavi_updates}). Importantly, up to this point, no specific assumptions have been made regarding the likelihood function $p(y_i | z_i)$ that connects the observed data $y_i$ and the latent data $z_i$, and all the steps outlined are generic. The updating rule for $q(z_i)$ is typically model-dependent and examples are discussed in Secs.~\ref{sec:illustrations} and \ref{sec:nonconjugate}.

\section{Approximate Variational Bayes Approximations}
\label{sec:avb}

Although variational Bayes algorithms are typically more computationally efficient than MCMC methods, they can still pose a significant computational burden when considering many models. In the large $n$ setting, this is mainly because the updating rules for $q(\bm z)$ usually scale at least linearly in the number of observations $n$. Especially when numerical optimization methods for $q(z_i)$ have to be implemented, exploring a large model space becomes computationally infeasible in large sample contexts. Even if efficient model space exploration algorithms are available, the required repeated approximations of $p(\bm{y}|\mathcal{M}_k)$ quickly become prohibitive. The computational burden can be lowered slightly by keeping information on the current $q(z_i)$ for a `warm start' when moving between models, and by parallelizing the updates of each $q(z_i)$ across $i$. However, these updates may remain quite costly, especially when exploring extensive model spaces.

To mitigate this issue, we consider a computationally much cheaper approximation to the variational posterior. Inspired by the work on \textit{approximate Laplace approximations} by \citet{rossell2021approximate}, we consider an \textit{approximate variational Bayes} (AVB) algorithm that scales much more favorably in \( n \) after some precomputations. The key idea is to fix the variational densities \( q(z_i) \) at an initial estimate, which is then not updated anymore and conditioned upon in each iteration and for each model. This makes the algorithm highly scalable for exploring model spaces even for very large samples, since  sufficient statistics such as $\bm{X}^\top\bm{X}$ and $\bm{X}^\top\hat{\bm{z}}$ can be precomputed. 

We specifically consider fixing \( q(z_i) \) based on an initial run of the full variational algorithm under the null model--that is, the model with no covariates present. This implies that the variational densities \( q(z_i) \) are informed solely by the likelihood contributions \( p(y_i \mid z_i) \) as well as the null model mean and variance of the latent variables \( z_i \) across the full sample. More formally, define the null posterior mean at the null intercept $\alpha_0$ and null variance $\sigma^2_0$ by
$m(y_i):=\E[z_i\mid y_i, \alpha_0, \sigma_0^2]$
and the pseudo-outcome $\tilde z_i:=m(y_i)$. Let the stacked vector be $\tilde{\bm{z}}:=(\tilde z_1,\ldots,\tilde z_n)^\top$. The AVB estimator essentially performs a Bayesian regression update using outcome $\tilde{\bm{z}}$ 
and design matrix $(\bm{\iota},\bm{X})$. This setup implies the following local asymptotic relationship between AVB and VB estimates:

\begin{proposition}[Local asymptotic scalar shrinkage of AVB relative to VB]\label{prop:local-shrink-simple}
Assume $p(y_i|z_i)$ is univariate and either an indicator function with respect to a convex interval or a log-concave pdf or pmf that is $C^2$ in $z_i$. Let $\hat{\bm\beta}_{\mathrm{VB},j}$ and $\hat{\bm\beta}_{\mathrm{AVB},j}$ denote the VB and AVB estimators of $\bm\beta$ in a model $\mathcal{M}_j$. Then, in a neighborhood of $\bm{\beta}=0$, as $n\rightarrow\infty$, we have $\hat{\bm\beta}_{\mathrm{AVB},j}
=c\hat{\bm\beta}_{\mathrm{VB},j}
\;+\;o_p\!\bigl(\|\hat{\bm\beta}_{\mathrm{VB},j}\|\bigr)$, with model-independent scaling factor $c\in(0,1)$.
\end{proposition}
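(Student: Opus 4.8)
The plan is to write both estimators as the coefficient block of a $g$-prior Gaussian regression that differ \emph{only} in the outcome vector, and then to relate the two outcomes by a first-order expansion in $\bm{\beta}$. Because the columns of $\bm{X}$ are centred we have $\bm{X}^\top\bm{\iota}=\bm{0}$, so the intercept decouples and the coefficient block of either estimator is $\tfrac{g}{1+g}(\bm{X}^\top\bm{X})^{-1}\bm{X}^\top(\cdot)$ evaluated at the relevant latent vector. For AVB the outcome is the fixed pseudo-outcome $\tilde{\bm{z}}=(m(y_1),\dots,m(y_n))^\top$, so $\hat{\bm{\beta}}_{\mathrm{AVB}}=\tfrac{g}{1+g}(\bm{X}^\top\bm{X})^{-1}\bm{X}^\top\tilde{\bm{z}}$, whereas at a VB fixed point the outcome is the converged variational mean $\hat{\bm{z}}_{\mathrm{VB}}$, giving the same expression with $\tilde{\bm{z}}$ replaced by $\hat{\bm{z}}_{\mathrm{VB}}$. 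The whole argument reduces to understanding $\bm{X}^\top(\hat{\bm{z}}_{\mathrm{VB}}-\tilde{\bm{z}})$ near $\bm{\beta}=0$.

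The key analytic tool I would use is the posterior-mean sensitivity identity: for a latent $z$ with Gaussian pseudo-prior $\mathcal{N}(\eta,\sigma^2)$ and likelihood factor $p(y\mid z)$, the coordinate update yields a variational mean $g(\eta)=\mathbb{E}[z\mid y]$ satisfying $g'(\eta)=\Var[z\mid y]/\sigma^2$ (Brown/Tweedie). The stated regularity conditions are exactly what make this applicable: for a log-concave $C^2$ factor the map $g$ is smooth, and for an indicator on a convex interval the variational posterior is a truncated Gaussian whose mean and variance are smooth in $\eta$. In the mean-field update the effective prior mean for $z_i$ is the fitted value $\hat\eta_i=\hat\alpha+\bm{x}_i^\top\hat{\bm{\beta}}_{\mathrm{VB}}$ (variance corrections being lower order), so $\hat z_i^{\mathrm{VB}}=g_i(\hat\eta_i)$ while $\tilde z_i=g_i(\alpha_0)$. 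Expanding to first order around the null linear predictor, and using that centring forces any $O(\hat\alpha-\alpha_0)$ level shift to be orthogonal to $\bm{X}$, gives $\hat z_i^{\mathrm{VB}}=\tilde z_i+(v_i/\sigma^2)\,\bm{x}_i^\top\hat{\bm{\beta}}_{\mathrm{VB}}+o_p(\cdot)$ with $v_i=\Var[z_i\mid y_i,\alpha_0,\sigma_0^2]$, i.e. in matrix form $\bm{X}^\top\hat{\bm{z}}_{\mathrm{VB}}=\bm{X}^\top\tilde{\bm{z}}+\bm{X}^\top D\bm{X}\,\hat{\bm{\beta}}_{\mathrm{VB}}$ with $D=\diag(v_i/\sigma^2)$.

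Substituting this into the VB fixed-point relation and recognising the leading AVB term gives $\hat{\bm{\beta}}_{\mathrm{AVB}}=\bigl[\bm{I}-\tfrac{g}{1+g}(\bm{X}^\top\bm{X})^{-1}\bm{X}^\top D\bm{X}\bigr]\hat{\bm{\beta}}_{\mathrm{VB}}+o_p(\|\hat{\bm{\beta}}_{\mathrm{VB}}\|)$. The asymptotic-diagonalisation step is the heart of the proof: under the local analysis around $\bm{\beta}=0$ (equivalently under local alternatives $\bm{\beta}=O(n^{-1/2})$) the $y_i$, and hence the $v_i$, are asymptotically independent of the regressors $\bm{x}_i$, so a law of large numbers gives $\tfrac1n\bm{X}^\top D\bm{X}\to(\bar v/\sigma^2)\,\Sigma_X$ and $\tfrac1n\bm{X}^\top\bm{X}\to\Sigma_X$, whence $(\bm{X}^\top\bm{X})^{-1}\bm{X}^\top D\bm{X}\to(\bar v/\sigma^2)\,\bm{I}$ with $\bar v=\mathbb{E}[v_i]$. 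This collapses the matrix multiplier to the scalar $c=1-\tfrac{g}{1+g}\,\bar v/\sigma^2$ (with $g=n$, $c\to1-\bar v/\sigma^2$), which is model-independent precisely because $\bar v$, $\sigma^2$ and $g$ are features of the common null fit rather than of the regressor set $\bm{X}_j$. Finally $c\in(0,1)$ follows because $\tfrac{g}{1+g}\in(0,1]$ and, by the Brascamp--Lieb inequality for a log-concave likelihood factor (or directly by variance reduction under truncation in the indicator case), $\bar v\le\sigma^2$ with strict inequality whenever $p(y\mid z)$ is informative.

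The main obstacle I anticipate is rigorously controlling the remainders rather than the formal computation. Three points need care. First, the Taylor remainder in $g_i$ must be shown to be uniformly $o_p(\|\hat{\bm{\beta}}_{\mathrm{VB}}\|)$ after aggregation through $\bm{X}^\top(\cdot)$, which requires a bound on $g_i''$ supplied by the $C^2$/log-concavity assumptions. Second, the law-of-large-numbers step needs the $\bm{x}_i$-dependence of $v_i$ under a non-null $\bm{\beta}$ to be genuinely negligible, which is why the statement is local and is cleanest under $n^{-1/2}$ local alternatives, where that dependence enters only at order $n^{-1/2}$ and washes out of $\tfrac1n\bm{X}^\top D\bm{X}$. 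Third, the self-referential fixed-point structure, with $\hat{\bm{\beta}}_{\mathrm{VB}}$ appearing on both sides of the expansion, must be shown to be stable: since the spectral radius of $\tfrac{g}{1+g}(\bm{X}^\top\bm{X})^{-1}\bm{X}^\top D\bm{X}$ is asymptotically $\tfrac{g}{1+g}\bar v/\sigma^2<1$, the linearisation is contractive, so the implied feedback converges and the back-substitution producing the scalar relation is legitimate.
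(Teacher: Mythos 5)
Your proposal is correct and follows essentially the same route as the paper's proof: the paper derives the identical linear map via first-order conditions of Danskin-profiled ELBOs rather than your CAVI fixed-point algebra, but both arguments rest on the same ingredients --- the score--variance (Tweedie-type) identity $\partial m_i/\partial\eta_i=\Var(z_i\mid y_i)/\sigma^2$, a first-order expansion at the null predictor, an LLN-plus-independence step collapsing $(\bm{X}^\top\bm{X})^{-1}\bm{X}^\top D\bm{X}$ to $(\bar v/\sigma^2)\,\bm{I}$ with the drift term vanishing, and truncation/Brascamp--Lieb bounds yielding $c\in(0,1)$. One small caution: centring alone does not make the intercept-induced level shift orthogonal to $\bm{X}$, since $\bm{X}^\top D\bm{\iota}\neq \bm{0}$ in finite samples; as in the paper, killing that drift requires the same asymptotic-independence LLN argument you invoke for the main term, so state it that way rather than attributing it to centring.
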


\begin{proof}
A proof is given in the supplementary material (Sec.~\ref{app:proof_asymptotic_shrinkage}).
\end{proof}

Intuitively, this result reflects that the bias toward the null model encoded in $q(z_i)$ causes the AVB posterior density of $\bm{\beta}_k$ to be an approximate and asymptotic compromise between $\bm{\beta}=0$ (as implied by the null model) and the full VB point estimate for $\bm{\beta}_k$ (as implied by the usual VB algorithm). An empirical example of this behavior is presented in the supplementary material (Fig.~\ref{fig:avb_convergence}). Note that, as a result, AVB algorithms cannot guarantee consistency for the true parameters in general, even if the corresponding VB algorithm is asymptotically consistent. However, in line with Proposition~\ref{prop:local-shrink-simple}, we find empirically that AVB estimates of $\bm{\beta}_k$ often preserve approximate magnitude ordering and signs of the coefficients. As a consequence, they enable very efficient approximate screening of a large number of models to identify inclusion/exclusion patterns in large sample contexts. More specifically, Proposition~\ref{prop:local-shrink-simple} has the implication that, asymptotically, locally and as a first order approximation, AVB inherits the model selection consistency of VB, which we formalize as follows:

\begin{proposition}[AVB inherits VB model selection consistency locally]
\label{prop:avb-inherits}
Let $\{\mathcal M_k\}_{k=0}^K$ be a finite set of models. Suppose condition (v) holds and, by Proposition~\ref{lem:vbc_consistency}, the VB-based VBC selector is model-selection consistent, i.e., $\Pr\!\left\{\arg\min_{k}\mathrm{VBC}_k^{\mathrm{VB}}=k^*\right\}\to1.$
Assume furthermore that the conditions of  Proposition~\ref{prop:local-shrink-simple} hold, so that uniformly over $k$ in a neighborhood of $\bm\beta=\bm 0$, the AVB plug-in satisfies the local first order asymptotic scalar-shrinkage relation $\hat{\bm\beta}_{\mathrm{AVB},k}
= c\,\hat{\bm\beta}_{\mathrm{VB},k} + o_p\!\big(\|\hat{\bm\beta}_{\mathrm{VB},k}\|\big)$ with $c\in(0,1)$.
Then, still within the above local first order asymptotic regime, the AVB-based VBC selector inherits this model selection consistency:
\[
\Pr\!\left\{\arg\min_{k}\mathrm{VBC}_k^{\mathrm{AVB}}=k^*\right\}\to1.
\]
\end{proposition}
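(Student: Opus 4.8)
The plan is to show that, under condition (v), the AVB and VB criteria differ only through terms that cannot reverse the ranking of the finitely many models, so that both are minimized at $k^*$ with probability tending to one. First I would invoke the asymptotic expansion (\ref{eq:asymptotic_vbc}) together with condition (v) to reduce the criterion to the BIC-like form
\[
\mathrm{VBC}_k \;=\; -2\log p(\bm y\mid\hat{\bm\theta}_k) \;+\; d_k\log n \;+\; O_p(1),
\]
valid for both the VB and the AVB plug-in estimates $\hat{\bm\theta}_k$. A short computation with the $g$-prior at $g=n$ shows that the prior contributes only $O_p(1)$ (the $\log n$ factors in its log-determinant cancel against $\log|\bm X_k^\top\bm X_k|$), so the only genuinely $n$-growing pieces are the data-fit term and the dimension penalty $d_k\log n$. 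Crucially, this penalty depends on $\hat{\bm\theta}_k$ only through the model dimension $d_k$ and is therefore \emph{identical} for VB and AVB.

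The heart of the argument is to quantify how the shrinkage relation of Proposition~\ref{prop:local-shrink-simple} rescales the data-fit gain over the null. Working in the neighborhood of $\bm\beta=\bm 0$ where that relation holds, I would Taylor-expand $\ell_k(\bm\beta):=\log p(\bm y\mid\hat\alpha,\bm\beta,\hat\sigma^2)$ about $\bm\beta=\bm 0$, using centering and orthogonality of the regressors to the intercept so the shared parameters decouple to leading order. Writing $A_k$ for the positive-definite curvature at the null and using the first-order stationarity of the VB estimate, $\bm s_k\approx A_k\hat{\bm\beta}_{\mathrm{VB},k}$, the VB data-fit gain over the null is $\tfrac12 Q_k$ with $Q_k:=\hat{\bm\beta}_{\mathrm{VB},k}^\top A_k\hat{\bm\beta}_{\mathrm{VB},k}\ge 0$. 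Substituting $\hat{\bm\beta}_{\mathrm{AVB},k}=c\,\hat{\bm\beta}_{\mathrm{VB},k}+o_p(\|\hat{\bm\beta}_{\mathrm{VB},k}\|)$ yields an AVB gain of $\big(c-\tfrac{c^2}{2}\big)Q_k=\tfrac{\kappa}{2}Q_k$ with $\kappa:=2c-c^2\in(0,1)$. The key structural fact is that $\kappa$ is \emph{model-independent}: AVB multiplies every model's data-fit improvement over the null by the same positive constant. Relative to the null model this gives
\[
\mathrm{VBC}_k^{\mathrm{VB}}-\mathrm{VBC}_{\mathrm{null}}=-Q_k+p_k\log n+O_p(1),\qquad
\mathrm{VBC}_k^{\mathrm{AVB}}-\mathrm{VBC}_{\mathrm{null}}=-\kappa\,Q_k+p_k\log n+O_p(1).
\]

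With this representation the consistency transfer follows from a two-case comparison against $k^*$, carried out uniformly over $\{\mathcal M_k\}_{k=0}^K$. For overfitting models ($k\supsetneq k^*$) the excess signal $Q_k-Q_{k^*}$ is $O_p(1)$ while the penalty gap $(p_k-p_{k^*})\log n\to\infty$; since AVB leaves the penalty untouched and only damps the already-bounded signal by $\kappa$, both criteria still diverge to $+\infty$, so overfitting is rejected. For underfitting models the VB consistency of Proposition~\ref{lem:vbc_consistency} forces $Q_{k^*}-Q_k$ to dominate $\log n$; multiplying this dominant term by $\kappa>0$ preserves its sign, so $\mathrm{VBC}_k^{\mathrm{AVB}}-\mathrm{VBC}_{k^*}^{\mathrm{AVB}}\to+\infty$ as well. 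Combining the two cases gives $\Pr\{\arg\min_k\mathrm{VBC}_k^{\mathrm{AVB}}=k^*\}\to1$.

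I expect the main obstacle to be rigor in the \emph{local first order} regime: one must verify that the relevant estimates (for $k^*$ and its neighbors) actually remain in the neighborhood of $\bm\beta=\bm 0$ where Proposition~\ref{prop:local-shrink-simple} applies, and that the $o_p$ and $O_p(1)$ remainders are uniform across the finite model set so that they cannot accumulate and swamp the $\log n$ separations. A secondary subtlety is justifying that the AVB variational covariance enters the criterion only through an $O_p(1)$ determinant term, so it never competes with the $\kappa Q_k$ and $d_k\log n$ contributions; this requires checking that the AVB posterior precision for $\bm\beta_k$ scales like $n$, just as in the VB case, up to a constant.
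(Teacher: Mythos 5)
Your proposal is correct and follows essentially the same route as the paper's proof: reduce both criteria to the BIC form via condition (v), Taylor-expand the log-likelihood quadratically around the null, and use the scalar-shrinkage relation to show that every model's data-fit gain is rescaled by the same model-independent factor $2c-c^2\in(0,1)$ while the $d_k\log n$ penalties are untouched, so orderings (and hence consistency) transfer. Your explicit overfitting/underfitting case split and the closing remarks on uniformity are just a slightly more spelled-out version of the paper's final step, which phrases the same conclusion as preservation of pairwise orderings.
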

\begin{proof}
    A proof is provided in the supplementary material (Sec.~\ref{app:proof_avb_inherits_consistency}).
\end{proof}

The choice of the null model to fix the variational densities $q(z_i)$ is both a matter of computational convenience (e.g., choosing the full model may not always be possible, for example due to rank-deficient $\bm{X}$) and a conservative choice regarding model selection. Heuristically, if we were to choose any other model composed of a subset of the columns of $\bm{X}$, $q(z_i)$ would be biased towards that particular model. Similarly, all model selection criteria based on these approximate variational approximations would be biased towards that initial model. By selecting the null model, all models are nudged towards the null, biasing the inclusion probabilities downwards. Hence, stronger signals are more likely to `survive' the information loss occurring when moving from VB to AVB. 

Two caveats of this approach are worth mentioning. First, given AVB coefficient posterior means concentrate approximately between 0 and the corresponding VB posterior mean, even if posterior concentration rates and posterior variances were exactly the same for AVB and VB, AVB needs a larger amount of likelihood information than VB to achieve similar asymptotic model selection behavior (compare Fig.~\ref{fig:avb_convergence}). Factors that determine likelihood information and posterior contraction rates in $n$ include the number of covariates $p$ and the correlation structure of the regressors in $\bm{X}$. For latent variable models, there are other factors at play as well, related to the structure of the outcome. For binary outcomes, the balancedness of the outcomes determines likelihood information and affects estimation algorithms (\citealp{zens2024ultimate}). For count outcomes, the likelihood information grows with the size of the counts (\citealp{steel2024model}). We discuss this further in the application section and conduct targeted simulation experiments illustrating this issue in the supplementary Sec.~\ref{app:additional_sim}. Second, our formal results are local, asymptotic and first-order approximate. We thus mainly consider them as a motivation to study AVB empirically rather than as global and exact guarantees. Nonetheless, we find the theoretical predictions to hold more broadly in our extensive empirical exercises. Importantly, in the univariate case ($p=1$), the local and approximate scalar-shrinkage picture of Proposition~\ref{prop:local-shrink-simple} becomes global and exact (Proposition~\ref{prop:avb-shrinkage-1d}; Sec.~\ref{app:avb-shrinkage-1d}).  A full theoretical treatment deriving global and exact results as well as error bounds in the general case is left for future work.%Further investigations based on direct mappings between AVB and VB for the probit special case are provided in the supplementary material (Sec.~\ref{app:probit_exact_map}).

\section{Illustrations for Deterministic Links}
\label{sec:illustrations}
\subsection{Illustration 1: Probit Regression for Binary Data} 

Assume the outcome of interest $\bm{y} \in \{0,1\}^n$ is a $n$-dimensional vector of binary data. In the standard latent variable representation of the probit model (\citealp{albert1993bayesian}), we have $p(y_i | z_i) = \mathbbm{1}(z_i > 0)^{y_i}\mathbbm{1}(z_i \leq 0)^{1-y_i}$. The latent variance term is typically fixed at $\sigma^2 = 1$ a priori as the scale of $z_i$ is unidentifiable given the binary nature of the outcomes $y_i$. Thus, there is no variational factor $q(\sigma^2)$. It is easy to show that the CAVI steps for $q(\alpha)$ and $q(\bm{\beta})$ outlined in Sec.~\ref{sec:cavi_updates} still apply with $a=b=1$ fixed. 

The probit model allows for a closed-form CAVI solution for $q(\bm{z})$, as $\mathbb{E}_{-q(\bm{z})}\left[\log p(\bm{y}, \bm{z}, \alpha, \bm{\beta})\right]$ yields 
\begin{equation}
    q(z_i) = \begin{cases}
    \mathcal{TN}^+(\mu_i, 1),& \text{if } y_i=1\\
    \mathcal{TN}^-(\mu_i, 1),& \text{if } y_i=0,
\end{cases}
\end{equation}
as the appropriate variational update, where $\mathcal{TN}^+$ ($\mathcal{TN}^-$) indicates a normal distribution truncated to the positive (negative) real line and where $\mu_i = \mu_{\alpha} + \bm{x}_i^\top \bm{\mu}_{\beta}$. Additional details including an explicit ELBO term are provided in Sec.~\ref{app:probit_details}. For the probit model, we obtain:

\begin{proposition}
\label{lem:probit_vbc_elbo}
    For the probit model under a multivariate Gaussian prior on $\bm{\beta}$, $p_{\text{VBC}}$ and ELBO perfectly coincide at $(\bm{\hat \theta}, \bm{\hat z})$.
\end{proposition}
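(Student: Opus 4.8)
The plan is to expand both $\log p_{\text{VBC}}$ and the ELBO into five structurally matched terms and show that the five pairwise differences cancel at the CAVI optimum. Writing the joint density as $\log p(\bm y,\bm z,\bm\theta)=\log p(\bm y\mid\bm z)+\log p(\bm z\mid\bm\theta)+\log p(\bm\theta)$ with $\bm\theta=(\alpha,\bm\beta)$ (since $\sigma^2=1$ is fixed), and using the mean-field factorization $q(\bm z)q(\alpha)q(\bm\beta)$, the ELBO equals $\E_q[\log p(\bm y\mid\bm z)]+\E_q[\log p(\bm z\mid\bm\theta)]+\E_q[\log p(\bm\theta)]-\E_q[\log q(\bm z)]-\E_q[\log q(\bm\theta)]$, whereas $\log p_{\text{VBC}}$ is the same expression with each $\E_q[\log f]$ replaced by $\log f$ evaluated at the variational means $(\hat{\bm z},\hat{\bm\theta})$. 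It therefore suffices to show that the accumulated ``expectation minus evaluation-at-mean'' corrections vanish. The flat intercept prior contributes the same additive constant to both objects and cancels immediately.

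The first, genuinely probit-specific, step is the likelihood term. Because $q(z_i)$ is a normal truncated precisely to the half-line on which $p(y_i\mid z_i)=1$, we have $\log p(y_i\mid z_i)=0$ almost surely under $q$, so $\E_q[\log p(\bm y\mid\bm z)]=0$; and since the truncated-normal mean $\hat z_i$ lies strictly inside that half-line, $\log p(\bm y\mid\hat{\bm z})=0$ as well, so both likelihood contributions vanish and match. For the remaining terms I would use that every relevant log-density is quadratic in its argument, so ``expectation minus value-at-mean'' is exactly a variance correction. Concretely: for the latent Gaussian factor, $\E_q[\log p(\bm z\mid\bm\theta)]-\log p(\hat{\bm z}\mid\hat{\bm\theta})=-\tfrac12[\tr\Cov_q(\bm z)+n\,\omega_\alpha+\tr(\bm X\bm\Omega_\beta\bm X^\top)]$; for the Gaussian prior on $\bm\beta$ with covariance $\bm V_0$, the correction is $-\tfrac12\tr(\bm V_0^{-1}\bm\Omega_\beta)$; for the truncated-normal entropy the normalizing constant cancels and only the Gaussian exponent survives, giving $\log q(\hat z_i)-\E_q[\log q(z_i)]=\tfrac12\Var_q(z_i)$ and hence a $+\tfrac12\tr\Cov_q(\bm z)$ correction; and for the Gaussian factors $q(\alpha),q(\bm\beta)$ the correction (differential entropy plus log-density at the mean) equals $+\tfrac12$ per dimension, i.e.\ $+\tfrac12(1+p_k)$ in total.

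The crux is then the exact cancellation, which is where CAVI optimality enters. The $\tr\Cov_q(\bm z)$ terms from the latent-density and the $q(\bm z)$-entropy corrections cancel directly. For the parameter block, I would invoke the CAVI-optimal precisions $\bm\Omega_\beta^{-1}=\bm X^\top\bm X+\bm V_0^{-1}$ and $\omega_\alpha^{-1}=\bm\iota^\top\bm\iota=n$, which yield the telescoping identities $\tr(\bm X\bm\Omega_\beta\bm X^\top)+\tr(\bm V_0^{-1}\bm\Omega_\beta)=\tr(\bm\Omega_\beta^{-1}\bm\Omega_\beta)=p_k$ and $n\,\omega_\alpha=1$. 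Thus the three parameter-variance corrections total $\tfrac12(1+p_k)$, exactly cancelling the Gaussian-entropy excess $+\tfrac12(1+p_k)$, and summing all five matched differences yields zero, so $\log p_{\text{VBC}}=\mathrm{ELBO}$ at $(\hat{\bm\theta},\hat{\bm z})$. The main obstacle is precisely this bookkeeping-and-telescoping step: one must track the signs of all second-order corrections and recognize that the data-plus-prior precision structure of the CAVI updates forces $\tr(\bm\Omega_\beta^{-1}\bm\Omega_\beta)$ to collapse to the parameter dimension---note this uses only that the prior on $\bm\beta$ is Gaussian, not the specific $g$-prior form, consistent with the statement. The likelihood-vanishing step, by contrast, is the only place where the deterministic probit link (an indicator likelihood) is essential, and it is what makes the coincidence exact rather than merely asymptotic.
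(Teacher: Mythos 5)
Your proof is correct and follows essentially the same route as the paper's: both exploit that the indicator likelihood equals one on the support of $q(\bm z)$, that all remaining log-densities are quadratic so that ``expectation minus evaluation at the mean'' reduces to pure variance corrections (the paper phrases this as exact second-order Taylor expansions whose first-order terms vanish), and that the CAVI covariance identity $\bm\Omega_\beta^{-1}=\bm X^\top\bm X+\bm V_0^{-1}$ (the paper's $\bm V_{\bm\theta}=(\bm W^\top\bm W+\bm V_0^{-1})^{-1}$ with $\bm W=(\bm\iota_n,\bm X)$) forces the residual traces to collapse to the parameter dimension. The only difference is bookkeeping: the paper treats $(\alpha,\bm\beta)$ as a single joint block, while you split the intercept and slope corrections ($n\omega_\alpha=1$ and $\tr(\bm\Omega_\beta^{-1}\bm\Omega_\beta)=p_k$), which agree because centering of $\bm X$ makes the two blocks orthogonal.
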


\begin{proof}
A proof is given in the supplementary material (Sec.~\ref{app:lemma4}).
\end{proof}

\begin{proposition}
    For the probit model, the posterior of all parameters and the VBC exists under the suggested improper prior setup in (\ref{eq:priora}) (with $\sigma^2=1$) and (\ref{PriorThetaMA}) if not all observed $y_i$'s are equal.
\end{proposition}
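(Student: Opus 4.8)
The plan is to treat the two assertions separately, after first noting that the only source of impropriety is the flat prior on $\alpha$: with $\sigma^2=1$ fixed and the $g$-prior \eqref{PriorThetaMA} proper, the prior mass in $\bm\beta$ is finite, and the probit likelihood $p(\bm y\mid\alpha,\bm\beta)=\prod_i\Phi(\alpha+\bm x_i^\top\bm\beta)^{y_i}\,(1-\Phi(\alpha+\bm x_i^\top\bm\beta))^{1-y_i}$ is bounded in $[0,1]$. Hence the integral over $\bm\beta$ is automatically finite for each $\alpha$, and the whole propriety question reduces to showing that the $\alpha$-marginal $L(\alpha):=\int p(\bm y\mid\alpha,\bm\beta)\,p(\bm\beta)\,d\bm\beta$ is integrable over $\R$.

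For posterior propriety I would exploit the hypothesis that not all $y_i$ are equal: there exist indices $i_0$ with $y_{i_0}=0$ and $i_1$ with $y_{i_1}=1$. Discarding all but one factor gives the two bounds $p(\bm y\mid\alpha,\bm\beta)\le 1-\Phi(\alpha+\bm x_{i_0}^\top\bm\beta)$ and $p(\bm y\mid\alpha,\bm\beta)\le\Phi(\alpha+\bm x_{i_1}^\top\bm\beta)$. Under the Gaussian prior, $\bm x_i^\top\bm\beta\sim\mathcal N(0,\tau_i^2)$ with $\tau_i^2=g\,\bm x_i^\top(\bm X^\top\bm X)^{-1}\bm x_i$, so the standard identity $\E[\Phi(c+W)]=\Phi(c/\sqrt{1+\Var W})$ yields $L(\alpha)\le\Phi(-\alpha/\sqrt{1+\tau_{i_0}^2})$ and $L(\alpha)\le\Phi(\alpha/\sqrt{1+\tau_{i_1}^2})$. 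Integrating the first bound over $[0,\infty)$ and the second over $(-\infty,0]$, each integral is finite because $\Phi$ has Gaussian tails; positivity of $\int_\R L(\alpha)\,d\alpha$ is immediate since the integrand is strictly positive. This establishes $p(\bm y)<\infty$, hence propriety of the joint posterior of $(\alpha,\bm\beta)$.

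For existence of the VBC I would verify that each factor in \eqref{eq:vbc_define} is finite and strictly positive at the variational point estimates, so that its logarithm is real. The indicator numerator is the crucial term: since the closed-form update makes each $q(z_i)$ a normal truncated to the half-line dictated by $y_i$, its mean $\hat z_i$ lies strictly inside that half-line, so $\hat z_i>0$ exactly when $y_i=1$ and hence $p(\bm y\mid\hat{\bm z})=1$. The remaining numerator factors $p(\hat{\bm z}\mid\hat{\bm\theta})$ and $p(\hat{\bm\theta})$ are a nondegenerate Gaussian density and the product of the (constant) flat-prior value with the Gaussian $g$-prior density, all evaluated at finite points and hence positive and finite; the arbitrary constant from the improper prior on $\alpha$ only shifts $\log p_{\mathrm{VBC}}$ by a model-independent amount and cancels in comparisons. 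The denominator $q(\hat{\bm z})q(\hat{\bm\theta})$ is a product of proper truncated-normal and Gaussian densities at their own means, positive and finite as soon as $\omega_\alpha>0$ and $\bm\Omega_\beta\succ0$; the latter follows from the CAVI updates (Sec.~\ref{sec:cavi_updates}) under the full-column-rank assumption on $(\bm\iota,\bm X_k)$.

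The main obstacle is to rule out degeneracy of the variational fixed point, i.e.\ to show that the variational means $\mu_\alpha,\bm\mu_\beta$ (and hence all $\mu_i$) remain finite. This is precisely where the two-class condition re-enters: if all $y_i$ agreed, the latent regression mean could be driven to $\pm\infty$ (perfect separation), the truncated-normal density $q(\hat z_i)$ would diverge, and the VBC would fail to exist. I would establish finiteness through the latent-Gaussian term $\E_q[\log p(\bm z\mid\alpha,\bm\beta)]$ in the ELBO: for any wrong-side observation the truncated-normal mean $\hat z_i$ stays bounded near the truncation boundary while the regression mean $\mu_\alpha+\bm x_i^\top\bm\mu_\beta$ runs off, so this term contributes a quadratic penalty diverging to $-\infty$ as $\mu_\alpha\to\pm\infty$. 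The presence of both outcome classes thus makes the ELBO coercive in $\mu_\alpha$, so its maximizer is attained at a finite value, while the proper $g$-prior already confines $\bm\mu_\beta$. Combining the finite, positive-definite variational parameters with the factor-by-factor bounds of the previous paragraph then yields a real, finite $\mathrm{VBC}$.
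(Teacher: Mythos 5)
Your proposal is correct in substance, but it follows a genuinely different route from the paper: the paper offers no self-contained argument at all, deferring the entire proposition to the Appendix of \cite{garcia2021bayes}, whereas you prove everything directly. For posterior propriety your argument is complete: since $\sigma^2=1$ is fixed and the $g$-prior on $\bm\beta$ is proper, impropriety can only come from the flat prior on $\alpha$; bounding the probit likelihood by the single factor contributed by one observation from each class and integrating $\bm\beta$ out via $\E[\Phi(c+W)]=\Phi\bigl(c/\sqrt{1+\Var W}\bigr)$ gives $L(\alpha)\le\Phi\bigl(-\alpha/\sqrt{1+\tau_{i_0}^2}\bigr)$ and $L(\alpha)\le\Phi\bigl(\alpha/\sqrt{1+\tau_{i_1}^2}\bigr)$, whose Gaussian tails make $\int_\R L(\alpha)\,d\alpha$ finite. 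This is the standard argument one would expect the cited reference to contain, and it makes transparent exactly where the two-class condition enters (the same computation in fact shows the condition is also necessary, though the proposition claims only sufficiency). Your treatment of the VBC is a further gain over the paper, which never spells out why the criterion itself is well defined: you correctly note that $p(\bm y\mid\hat{\bm z})=1$ because truncated-normal means lie strictly inside the truncation region, that all remaining densities in (\ref{eq:vbc_define}) are positive and finite at finite variational parameters, that the flat-prior constant only shifts $\log p_{\mathrm{VBC}}$ by a model-independent amount, and that the real issue is ruling out divergence of the variational means --- the VB analogue of perfect separation.

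That last step is the one place where your argument remains a sketch rather than a proof. To make the coercivity claim rigorous you need joint coercivity of the profiled ELBO, i.e.\ divergence to $-\infty$ along any sequence with $\|(\mu_\alpha,\bm\mu_\beta)\|\to\infty$, not only in $\mu_\alpha$ for fixed $\bm\mu_\beta$; the $g$-prior term $-\tfrac{1}{2g}\bm\mu_\beta^\top\bm X^\top\bm X\,\bm\mu_\beta$ controls the $\bm\mu_\beta$ directions, the Mills-ratio asymptotics you invoke (for $y_i=0$ and $\mu_i\to+\infty$ one has $m_i\to0^-$, so $(m_i-\mu_i)^2\sim\mu_i^2$) control $\mu_\alpha$, and one then needs (upper semi)continuity of the profiled ELBO to conclude that a maximizer, hence a finite CAVI fixed point, is attained. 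Note also that boundedness of the ELBO from above --- which follows from $\mathrm{ELBO}\le\log p(\bm y)<\infty$ using your first part --- is not by itself sufficient, since the supremum could in principle be approached only as the means diverge; coercivity is precisely what excludes this. These remaining steps are routine, but they should be written out for the VBC half of the claim to stand as a proof.
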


\begin{proof}
See the Appendix to \cite{garcia2021bayes}.
\end{proof}

\begin{proposition}
    Model selection based on VBC is asymptotically model selection consistent in the probit case.
\end{proposition}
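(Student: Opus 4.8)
The plan is to obtain the result as a direct corollary of Proposition~\ref{lem:vbc_consistency}: it suffices to verify that the probit model, under the prior setup in (\ref{eq:priora}) with $\sigma^2=1$ fixed and (\ref{PriorThetaMA}), satisfies conditions (i)--(v) for every candidate model $\mathcal{M}_k$. I would organize the argument so that the generic regularity and contraction conditions (i)--(iv) are dispatched first, leaving the probit-specific cancellation in condition (v) as the centerpiece.

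For conditions (i)--(iv): the standard regularity conditions (i) hold for the probit likelihood $\prod_i \Phi(\mu_i)^{y_i}\Phi(-\mu_i)^{1-y_i}$ on any model with $(\bm\iota,\bm X_k)$ of full column rank. Condition (ii) is immediate rather than merely asymptotic here, since the CAVI updates produce $q(\alpha)=\mathcal{N}(\mu_\alpha,\omega_\alpha)$ and $q(\bm\beta)=\mathcal{N}(\bm\mu_\beta,\bm\Omega_\beta)$ that are exactly Gaussian. For the centering and parametric-rate contraction in (iii)--(iv), I would appeal to variational Bernstein--von Mises behavior for this construction with the $g$-prior at $g=n$, whose influence is asymptotically dominated by the data, noting that posterior existence --- required for $\hat{\bm\theta}_k$ and $\hat{\bm z}_k$ to be well defined --- is guaranteed by the preceding proposition as long as the $y_i$ are not all equal.

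The key step is the exact verification of condition (v). Here I would compute each factor explicitly at the variational estimates. Writing $\mu_i=\mu_\alpha+\bm x_i^\top\bm\mu_\beta$, so that the linear predictor at $\hat{\bm\theta}$ equals $\mu_i$, the posterior mean $\hat z_i$ of the truncated normal $\mathcal{TN}^+(\mu_i,1)$ (resp.\ $\mathcal{TN}^-(\mu_i,1)$) is strictly positive when $y_i=1$ (resp.\ strictly negative when $y_i=0$); hence $p(\bm y\mid\hat{\bm z})=\prod_i \mathbbm{1}(\hat z_i>0)^{y_i}\mathbbm{1}(\hat z_i\le 0)^{1-y_i}=1$. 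Next, $p(\hat{\bm z}\mid\hat{\bm\theta})=\prod_i\phi(\hat z_i-\mu_i)$, while the variational density evaluated at its own mean is the truncated-normal density $q(\hat z_i)=\phi(\hat z_i-\mu_i)/\Phi(\mu_i)$ for $y_i=1$ and $\phi(\hat z_i-\mu_i)/\Phi(-\mu_i)$ for $y_i=0$. The Gaussian factors cancel and only the truncation normalizers survive, giving
\[
\frac{p(\bm y\mid\hat{\bm z})\,p(\hat{\bm z}\mid\hat{\bm\theta})}{q(\hat{\bm z})}
=\prod_i \Phi(\mu_i)^{y_i}\Phi(-\mu_i)^{1-y_i}
=p(\bm y\mid\hat{\bm\theta}),
\]
so condition (v) holds exactly --- not merely asymptotically --- for every model and at every admissible set of variational parameters. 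With (i)--(v) in hand, Proposition~\ref{lem:vbc_consistency} yields $\Pr\{\arg\min_k \mathrm{VBC}_k=k^*\}\to 1$.

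I expect the main obstacle to be not the exact cancellation in (v), which is essentially forced by the matched truncated-Gaussian/Gaussian structure, but the rigorous justification of the contraction conditions (iii)--(iv): establishing that the mean-field variational posterior of $(\alpha,\bm\beta)$ centers on $\bm\theta_k^*$ and concentrates at rate $1/\sqrt n$ with $V_{\bm\theta_k}\to n^{-1}\bm\Omega_k$. This requires a variational Bernstein--von Mises argument that controls the effect of the mean-field factorization $q(\bm z)q(\bm\theta)$ on the induced marginal for $(\alpha,\bm\beta)$; I would either invoke existing results of this type or analyze the fixed-point limit of the CAVI recursion directly, checking that the implied limiting information matrix $\bm\Omega_k^{-1}$ is positive definite under the full-rank assumption on $(\bm\iota,\bm X_k)$.
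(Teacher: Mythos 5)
Your proposal is correct and takes essentially the same route as the paper's own proof: the paper likewise verifies condition (v) exactly for probit via the truncated-Gaussian/Gaussian cancellation at the shared location parameter $\mu_i$ (yielding $\prod_i \Phi(\mu_i)^{y_i}(1-\Phi(\mu_i))^{1-y_i}=p(\bm y\mid\hat{\bm\theta})$) and then applies Proposition~\ref{lem:vbc_consistency} to collapse the VBC to BIC form, treating the regularity/contraction conditions (i)--(iv) as standing assumptions rather than re-deriving them. The paper's only addition is a shortcut you do not use: since VBC and ELBO coincide exactly for probit (Proposition~\ref{lem:probit_vbc_elbo}), consistency also follows immediately from the ELBO results of \citet{zhang2024bayesian}.
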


\begin{proof}
A proof is given in the supplementary material (Sec.~\ref{app:coroll2}).
\end{proof}

\subsection{Illustration 2: Tobit Regression for Censored Data}
\label{sec:tobit}
Consider (without loss of generality) the left-censored Tobit type I regression model, where the relationship between an observed real-valued outcome \(y_i\) and a latent variable \(z_i\) is given by
$y_i = \max(y_L,z_i)$ for some fixed lower bound $y_L$. Thus, when \(y_i>y_L\) the latent variable is observed (i.e. \(z_i=y_i\)), whereas when \(y_i=y_L\) we only know that \(z_i\le y_L\). To make this explicit, define the index sets
$\mathcal{I}_{\mathrm{obs}} = \{ i: y_i>y_L \} \quad \text{and} \quad \mathcal{I}_{\mathrm{lat}} = \{ i: y_i=y_L \}$. For \(i\in\mathcal{I}_{\mathrm{obs}}\) we fix $z_i = y_i$, while for \(i\in\mathcal{I}_{\mathrm{lat}}\) we treat \(z_i\) as latent variable with variational distribution $q(z_i)$. Thus, the joint variational distribution is written as
\[
q(\bm{z},\alpha,\boldsymbol{\beta},\sigma^2) = \left[\prod_{i\in\mathcal{I}_{\mathrm{lat}}} q(z_i)\right]\,
q(\alpha)\,q(\boldsymbol{\beta})\,q(\sigma^2).
\]

Such left-censored data frequently arises in econometric analyses of labor supply or environmental and serological data with measurements falling below detection limits. In this setting, the error variance $\sigma^2$ is a free parameter, but becomes more weakly identified when the share of observations with $y_i = y_L$ increases. The ideas formalized below extend to right-censored data as well as data with upper and lower bounds with minor modifications. A CAVI update for the censored outcome variational densities $q(z_i)$ leads to 
\begin{equation}
    q(z_i) =  \mathcal{TN}(\mu_i, \xi)~~\text{trunc. to}~~(-\infty, y_L]~~\text{if}~~y_i=y_L
\end{equation}
with $\mu_i = \mu_{\alpha} + \bm{x}_i^\top \bm{\mu}_{\beta}$ and $\xi = \frac{b}{a}$. Additional details and a derivation of the ELBO are given in Sec.~\ref{app:tobit_details} in the supplementary material. For the Tobit model, we can derive:
\begin{proposition}
For the left-censored Tobit model, the posterior of all parameters and the VBC exists under the improper prior setup in (\ref{eq:priora}) and (\ref{PriorThetaMA}) if and only if the dataset contains at least two uncensored observations.
\end{proposition}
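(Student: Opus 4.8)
The plan is to work directly with the integrated likelihood obtained by marginalizing $(\alpha,\bm\beta,\sigma^2)$ against the priors in (\ref{eq:priora}) and (\ref{PriorThetaMA}), and to show that its finiteness is controlled entirely by the \emph{uncensored} observations. Writing $n_{\mathrm{obs}}=|\mathcal I_{\mathrm{obs}}|$, the likelihood factorizes into Gaussian densities $\phi\big((y_i-\alpha-\bm x_i^\top\bm\beta)/\sigma\big)/\sigma$ for $i\in\mathcal I_{\mathrm{obs}}$ (where $z_i=y_i$ is observed) and censoring probabilities $\Phi\big((y_L-\alpha-\bm x_i^\top\bm\beta)/\sigma\big)$ for $i\in\mathcal I_{\mathrm{lat}}$. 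The two directions are handled asymmetrically: sufficiency via an upper bound that discards the (bounded) censored factors, and necessity via a lower bound that must retain them. In both cases the binding constraint is integrability in $\sigma^2$ as $\sigma^2\to\infty$, and the threshold turns out to be a residual-degrees-of-freedom condition, in direct analogy with the probit case (``not all $y_i$ equal'') cited above.

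For sufficiency, I would use $\Phi(\cdot)\le 1$ to bound the integrand by the integrated likelihood of the ordinary Gaussian linear model on the uncensored rows $(\bm\iota_{\mathrm{obs}},\bm X_{\mathrm{obs}})$ under the same conjugate priors. Integrating out $\bm\beta$ against its proper $g$-prior and then $\alpha$ against the flat prior is a standard Gaussian computation; collecting the powers of $\sigma^2$ (namely $-n_{\mathrm{obs}}/2$ from the likelihood, $-p_k/2$ from the $g$-prior normalization, $-1$ from the Jeffreys prior, and $+(1+p_k)/2$ from the two Gaussian integrals) leaves an inverse-gamma kernel in $\sigma^2$ with shape $(n_{\mathrm{obs}}-1)/2$ and scale $Q^\star_{\min}/2$, where $Q^\star_{\min}=\min_{\alpha,\bm\beta}\big\{\|\bm y_{\mathrm{obs}}-\alpha\bm\iota_{\mathrm{obs}}-\bm X_{\mathrm{obs}}\bm\beta\|^2+n^{-1}\bm\beta^\top\bm X^\top\bm X\bm\beta\big\}$ is the minimized penalized residual sum of squares. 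This kernel integrates to a finite value precisely when the shape is positive, i.e. $n_{\mathrm{obs}}\ge 2$, and the scale is positive; since the $g$-prior ridge keeps the $\bm\beta$-block of the precision positive definite, $Q^\star_{\min}>0$ whenever the uncensored outcomes are not all identical, which holds almost surely for the continuous responses envisaged here.

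For necessity, suppose $n_{\mathrm{obs}}\le 1$. As the integrand is nonnegative I may (by Tonelli) marginalize $\bm\beta$ first against its proper $g$-prior; let $\bar L(\alpha,\sigma^2)$ denote the result. It is convenient to whiten the $g$-prior by the substitution $\bm\beta=\sqrt n\,\sigma(\bm X^\top\bm X)^{-1/2}\bm\gamma$, under which $\bm\gamma$ carries a standard normal prior with no residual power of $\sigma^2$, and to rescale the intercept as $\alpha=\sigma s$. With $a_i:=(\bm X^\top\bm X)^{-1/2}\bm x_i$, the censored arguments become $\tfrac{y_L-\alpha}{\sigma}-\sqrt n\,a_i^\top\bm\gamma\to -s-\sqrt n\,a_i^\top\bm\gamma$ and any uncensored exponent becomes $\tfrac12\big((y_1-\alpha)/\sigma-\sqrt n\,a_1^\top\bm\gamma\big)^2\to\tfrac12\big(s+\sqrt n\,a_1^\top\bm\gamma\big)^2$ as $\sigma\to\infty$. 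Dominated convergence (the $\Phi$ factors are bounded and the Gaussian prior dominates) then gives $\bar L(\sigma s,\sigma^2)\sim(\sigma^2)^{-n_{\mathrm{obs}}/2}G(s)$ for a fixed strictly positive profile $G$, and the Jacobian of $\alpha=\sigma s$ supplies a factor $\sigma$, so that $\int\bar L\,d\alpha\sim(\sigma^2)^{(1-n_{\mathrm{obs}})/2}\int G(s)\,ds$. For $n_{\mathrm{obs}}=0$ the profile does not decay as $s\to-\infty$ (there every $\Phi\to1$), so $\int G=\infty$ and the $\alpha$-integral already diverges; for $n_{\mathrm{obs}}=1$ the prefactor is $(\sigma^2)^0$ while $\int G<\infty$, so $\int\bar L\,d\alpha$ tends to a positive constant and the residual $\int(\sigma^2)^{-1}\,d\sigma^2$ diverges. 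Equivalently, the full integrand behaves like $(\sigma^2)^{-(n_{\mathrm{obs}}+1)/2}$ at infinity, reproducing the threshold $n_{\mathrm{obs}}\ge 2$ from both sides and completing the equivalence.

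The main obstacle is exactly this necessity step: the censored factors are bounded and cannot be discarded (that would give the wrong inequality direction), and one must show they fail to restore integrability. The crux is that, under the natural $g$-prior scaling, the linear predictor spreads over scale $\sigma$, so $\Phi$ supports a \emph{growing} region in $(\alpha,\bm\beta)$ rather than localizing it like a genuine Gaussian observation; making the rescaled limit $\bar L(\sigma s,\sigma^2)\sim(\sigma^2)^{-n_{\mathrm{obs}}/2}G(s)$ rigorous (the dominated-convergence interchange, and the dichotomy $\int G=\infty$ versus $\int G<\infty$ that separates $n_{\mathrm{obs}}=0$ from $n_{\mathrm{obs}}=1$) is the delicate part. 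A secondary technical point is that the $g$-prior covariance is built from the full design $\bm X$ while the likelihood involves only $\bm X_{\mathrm{obs}}$; this mismatch is harmless because the full-rank ridge $n^{-1}\bm X^\top\bm X$ keeps the $(\alpha,\bm\beta)$ precision positive definite as soon as $n_{\mathrm{obs}}\ge 1$. Finally, existence of the VBC follows along the same lines: the CAVI updates of Sec.~\ref{sec:cavi_updates} return proper Gaussian and inverse-gamma factors precisely when $q(\sigma^2)$ has positive shape, which again requires two uncensored observations to supply the necessary residual degrees of freedom, so the plug-in $p_{\mathrm{VBC}}$ in (\ref{eq:vbc_define}) is finite under the identical condition.
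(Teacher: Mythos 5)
Your core propriety argument is correct and, in one respect, more complete than the paper's own proof. The paper proves sufficiency via a sequential decomposition $p(\bm y\mid M_k)=p(\bm y_Z\mid\bm y_N,M_k)\,p(\bm y_N\mid M_k)$: it integrates out $(\alpha,\bm\beta_k,\sigma^2)$ first to bound the marginal density $p(\bm z_N\mid M_k)$ of the latent variables attached to the uncensored observations by a constant --- the exponent $-(n_{\mathrm{obs}}-1)/2$ appearing there is exactly your inverse-gamma shape, so the degrees-of-freedom count is identical --- then uses the fact that uncensored likelihood contributions are Dirac deltas integrating to one, and handles the censored block by noting that a proper posterior built from the uncensored data, updated with likelihood factors bounded by one, remains proper. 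Your sufficiency step ($\Phi\le 1$ plus a direct conjugate computation on the uncensored rows) is the same calculation carried out in parameter space rather than latent-data space, with the same almost-sure caveat on strict positivity of the residual quadratic form. What you genuinely add is the necessity direction: the paper states the result as ``if and only if,'' but its displayed proof establishes only the ``if'' half (failure of its upper bound when $n_{\mathrm{obs}}\le 1$ is not a divergence proof). Your rescaling $\alpha=\sigma s$, $\bm\beta\propto\sigma\bm\gamma$, giving integrand decay $(\sigma^2)^{-(n_{\mathrm{obs}}+1)/2}$ and hence a divergent tail against the Jeffreys prior for $n_{\mathrm{obs}}\le 1$ --- with the $n_{\mathrm{obs}}=0$ case already diverging in $\alpha$ alone --- supplies the missing half, and the dichotomy between a non-decaying profile ($n_{\mathrm{obs}}=0$) and a finite profile with log-divergent $\sigma^2$ integral ($n_{\mathrm{obs}}=1$) is the right way to organize it.

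However, your argument for the VBC part of the statement is wrong as written. You claim the CAVI updates return a proper inverse-gamma factor ``precisely when $q(\sigma^2)$ has positive shape, which again requires two uncensored observations''; but from Sec.~\ref{sec:cavi_updates} the shape is $a=(n+p)/2$, which is positive regardless of the censoring pattern, and the scale $b$ is also strictly positive because censored observations contribute strictly positive truncated-Gaussian variances $s_i$. So positivity of the CAVI shape cannot be the mechanism tying VBC existence to $n_{\mathrm{obs}}\ge 2$: each CAVI iterate is formally proper even when the true posterior is improper, and what fails in that regime is rather the existence of a sensible fixed point (e.g.\ the location or scale parameters drifting without bound across iterations). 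The paper itself offers no argument for the VBC half either --- its proof addresses only the posterior --- so the gap is shared, but your specific justification should be removed or replaced rather than kept as is.
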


\begin{proof} 
    A proof is given in the supplementary material (Sec.~\ref{app:lemm5}).
\end{proof}

\begin{proposition}
    Model selection based on the VBC is consistent in the left-censored Tobit case considered.
\end{proposition}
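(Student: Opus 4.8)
The plan is to verify that the left-censored Tobit model satisfies all the hypotheses of Proposition~\ref{lem:vbc_consistency}, so that consistency follows immediately. Conditions (i)--(iv) are inherited from standard asymptotic theory for the Tobit MLE together with Bernstein--von-Mises--type behavior of the mean-field variational posterior: the observed-data log-likelihood is smooth with a positive-definite information matrix whenever the design has full column rank, and $q(\bm\theta)=q(\alpha)q(\bm\beta)q(\sigma^2)$ concentrates at the parametric rate around $\bm\theta^*_k$ with covariance of order $1/n$. Posterior existence (needed for these quantities to be well defined) was already established in the preceding proposition, provided the dataset contains at least two uncensored observations. The genuine work is therefore confined to checking condition (v), and I would do this observation by observation, exploiting the fact that everything factorizes across $i$ into an uncensored block $\mathcal{I}_{\mathrm{obs}}$ and a censored block $\mathcal{I}_{\mathrm{lat}}$.

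For the uncensored indices $i\in\mathcal{I}_{\mathrm{obs}}$ the latent variable is fixed at $\hat z_i=y_i$ and contributes no factor $q(z_i)$. Because the link $y_i=\max(y_L,z_i)$ is deterministic and $y_i>y_L$, we have $p(y_i\mid\hat z_i)=1$, while $p(\hat z_i\mid\hat{\bm\theta})=\tfrac{1}{\hat\sigma}\phi\!\left(\tfrac{y_i-\hat\mu_i}{\hat\sigma}\right)$ with $\hat\mu_i=\hat\alpha+\bm x_i^\top\hat{\bm\beta}$. This is exactly the uncensored contribution to the observed-data Tobit likelihood $p(y_i\mid\hat{\bm\theta})$, so these terms match on the nose.

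The censored indices $i\in\mathcal{I}_{\mathrm{lat}}$ are where the truncated-Gaussian structure does the real work. Here $q(z_i)=\mathcal{TN}(\hat\mu_i,\xi)$ truncated to $(-\infty,y_L]$, its mean $\hat z_i$ satisfies $\hat z_i\le y_L$, hence $p(y_i\mid\hat z_i)=\mathbbm{1}(\hat z_i\le y_L)=1$. Writing the truncated-normal density explicitly, the ratio telescopes,
\[
\frac{p(y_i\mid\hat z_i)\,p(\hat z_i\mid\hat{\bm\theta})}{q(\hat z_i)}
=\frac{\tfrac{1}{\hat\sigma}\phi\!\left(\tfrac{\hat z_i-\hat\mu_i}{\hat\sigma}\right)}
{\tfrac{1}{\sqrt{\xi}}\phi\!\left(\tfrac{\hat z_i-\hat\mu_i}{\sqrt{\xi}}\right)\big/\Phi\!\left(\tfrac{y_L-\hat\mu_i}{\sqrt{\xi}}\right)},
\]
and when the variational scale $\xi$ coincides with the plug-in variance $\hat\sigma^2$ the Gaussian kernels cancel, leaving precisely $\Phi\!\left(\tfrac{y_L-\hat\mu_i}{\hat\sigma}\right)$, i.e.\ the censored contribution to $p(y_i\mid\hat{\bm\theta})$. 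Thus, taking products over $\mathcal{I}_{\mathrm{obs}}\cup\mathcal{I}_{\mathrm{lat}}$, condition (v) reduces to the scale-matching statement $\xi=\hat\sigma^2$.

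The main obstacle is exactly this scale mismatch: the CAVI update uses $\xi=\mathbb{E}_{q(\sigma^2)}[1/\sigma^2]^{-1}=b/a$ inside $q(z_i)$, whereas the plug-in point estimate is $\hat\sigma^2=\mathbb{E}_{q(\sigma^2)}[\sigma^2]=b/(a-1)$, so $\xi=\hat\sigma^2\,(a-1)/a\neq\hat\sigma^2$ at finite $n$. I would close this by noting that the $\mathcal{IG}(a,b)$ shape parameter satisfies $a\to\infty$ as $n\to\infty$, whence $\xi/\hat\sigma^2=(a-1)/a\to1$ and the residual from the cancellation is $o_p(1)$ in the $\log$-scale, uniformly over the finite model collection $\{\mathcal M_k\}_{k=0}^K$. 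Hence condition (v) holds asymptotically, and the uniform negligibility of this error guarantees that the leading-order $\mathrm{VBC}_k$ behavior—governed by the BIC-type expansion~\eqref{eq:asymptotic_vbc}—determines $\arg\min_k\mathrm{VBC}_k$. Proposition~\ref{lem:vbc_consistency} then applies verbatim, yielding $\Pr\{\arg\min_k\mathrm{VBC}_k=k^*\}\to1$ in the left-censored Tobit case. (Right-censoring and two-sided bounds follow by replacing $\Phi$ with the appropriate tail and repeating the argument.)
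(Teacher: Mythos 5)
Your proof is correct and follows essentially the same route as the paper's: both reduce the problem to verifying condition (v) via the censored-observation ratio of a Gaussian to a truncated Gaussian (with uncensored observations matching trivially), identify the location parameters as identical and the scale mismatch $\xi = b/a$ versus $\hat\sigma^2 = b/(a-1)$, resolve it asymptotically since $a = (n+p)/2 \to \infty$, and then invoke the BIC-form expansion and Proposition~\ref{lem:vbc_consistency} to conclude. One minor quibble: your claim that the residual from the imperfect kernel cancellation is $o_p(1)$ in aggregate is slightly stronger than the per-observation $O(1/n)$ errors summed over up to $n$ censored observations actually justify (it is $O_p(1)$), but this is immaterial because the $\mathcal{O}(1)$ slack in the BIC-form expansion absorbs it --- the same level of rigor as the paper's assertion that the ``scale parameters coincide asymptotically.''
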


\begin{proof}
A proof is given in the supplementary material (Sec.~\ref{app:coroll3}).
\end{proof}

\subsection{Illustration 3: STAR Models for Count Data}
\label{sec:star}

Assume the outcome of interest $\bm{y} \in \mathbb{N}_0^n$ is an $n$-dimensional vector of counts, i.e., non-negative integers. A semiparametric regression tool for integer-valued data is discussed in \citet{kowal2020simultaneous}. We consider a special case of their framework, based on a simple deterministic transformation and rounding operator applied to a latent Gaussian variable $z_i$, s.t., $y_i = \lfloor \exp(z_i)\rfloor$. Compared to standard parametric count data regression models such as Poisson regression, this so-called Simultaneous Transformation and Rounding (STAR) model has a number of advantages, including closed form conditional posterior densities, and the potential to account for overdispersion.

For $q(\bm{z})$, the CAVI update leads to 
\begin{equation}
    q(z_i) = \begin{cases}
    \mathcal{TN}(\mu_i, \xi)~~\text{trunc. to}~~(-\infty, 0)&  \iff y_i=0\\
    \mathcal{TN}(\mu_i, \xi)~~\text{trunc. to}~~\left[\log(y_i), \log(y_i+1)\right)&  \iff y_i>0\\
\end{cases}
\end{equation}
with $\mu_i = \mu_{\alpha} + \bm{x}_i^\top \bm{\mu}_{\beta}$ and $\xi = \frac{b}{a}$. Additional details and a derivation of the ELBO are given in Sec.~\ref{app:star_details} in the supplementary material. The STAR model leads to the following:
\begin{proposition}
For the considered STAR model, the posterior of all parameters and the VBC exist under the suggested improper prior in (\ref{eq:priora}) and (\ref{PriorThetaMA})  if and only if the data contain at least two positive counts.
\end{proposition}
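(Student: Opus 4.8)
The plan is to integrate out the parameters first and reduce posterior propriety to the finiteness of a single integral over a box in $\bm z$-space. Since the STAR likelihood is $p(\bm y\mid\bm z)=\prod_i \mathbbm{1}(z_i\in A_i)$ with $A_i=(-\infty,0)$ when $y_i=0$ and $A_i=[\log y_i,\log(y_i+1))$ when $y_i>0$, Tonelli's theorem gives $p(\bm y)=\int_{\prod_i A_i} m(\bm z)\,d\bm z$, where $m(\bm z)=\int p(\bm z\mid\alpha,\bm\beta,\sigma^2)\,p(\alpha)p(\bm\beta\mid\sigma^2)p(\sigma^2)\,d\alpha\,d\bm\beta\,d\sigma^2$ is the (improper) marginal prior predictive of the latent vector. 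First I would compute $m(\bm z)$ in closed form: marginalizing the $g$-prior slopes and the flat intercept yields a Gaussian in $\bm z$ with covariance $\sigma^2(\bm I_n+g\bm P)$, where $\bm P$ is the projection onto the column span of the centered $\bm X$, together with an improper constant direction along $\bm\iota$; the remaining Jeffreys integral over $\sigma^2$ then produces $m(\bm z)\propto Q(\bm z)^{-(n-1)/2}$, where $Q(\bm z)=\bm z_c^\top\bigl(\bm I_n-\tfrac{g}{1+g}\bm P\bigr)\bm z_c$ and $\bm z_c=\bm z-\bar z\,\bm\iota$. The structural facts I would extract are that $Q$ is positive definite on $\bm\iota^\perp$, so $Q(\bm z)\asymp\|\bm z_c\|^2$, and hence $m$ has a single singular set, the diagonal line $\mathrm{span}(\bm\iota)$, near which $m(\bm z)\asymp\|\bm z_c\|^{-(n-1)}$, with the same power governing decay at infinity.

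For necessity I would show that at most one positive count forces divergence. If there are no positive counts, the box is $(-\infty,0)^n$, which contains the entire negative half of the diagonal; since $m$ blows up like $\|\bm z_c\|^{-(n-1)}$ on this codimension-$(n-1)$ set the integral diverges (equivalently, sending $\alpha\to-\infty$ drives the likelihood to $1$ against the flat prior on $\alpha$). If there is exactly one positive count, the box is bounded in a single coordinate and a half-line in the remaining $n-1$. A shell computation in these free directions shows $\|\bm z_c\|\asymp R$ while the shell volume grows like $R^{\,n-2}\,dR$, so that $\int^{\infty} R^{-(n-1)}R^{\,n-2}\,dR=\int^{\infty} R^{-1}\,dR=\infty$. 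This is precisely the $\sigma\to\infty$ tail, and it is only logarithmically divergent, which is the analytic reason a single bounded interval cannot suffice.

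For sufficiency I would show that two bounded coordinates supply exactly the missing power. With $m\ge 2$ positive counts the box is bounded in (at least) two coordinates and a half-line in the remaining $n-m$ zero-count directions; repeating the shell computation now gives shell volume $\asymp R^{\,n-3}\,dR$ against integrand $\asymp R^{-(n-1)}$, so the tail behaves like $\int^{\infty} R^{-2}\,dR<\infty$ (when $m=n$ there are no free directions and only the next step is needed). It then remains to rule out the diagonal singularity, i.e.\ to show $\bigcap_i A_i$ has empty interior so that $Q$ is bounded below on the box: whenever a zero count is present the diagonal needs $c<0$ while a positive count needs $c\ge\log y_i\ge 0$, and two positive counts with different values have disjoint half-open intervals, so in either case the common intersection is empty. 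Combining the convergent tail with the bounded, singularity-free integrand over the bounded coordinates yields $p(\bm y)<\infty$; the same bounds guarantee that the denominator entering $p_{\text{VBC}}$ evaluated at finite $(\hat{\bm z},\hat{\bm\theta})$ is finite and strictly positive, so the VBC exists.

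The hard part will be the sufficiency direction, and specifically the simultaneous and tight control of the two improper directions: the $\sigma\to\infty$ (equivalently infinite-volume) tail, which needs exactly two bounded intervals, and the location direction along $\bm\iota$, whose singularity the box must avoid. The genuinely delicate boundary case is when every observation is an identical positive count, so the box equals $A^n$ and contains a segment of the diagonal; there $Q$ vanishes and $m\asymp\|\bm z_c\|^{-(n-1)}$ is non-integrable, marking the exact edge of the condition. I would therefore make explicit in the proof that the two positive counts must supply two independent constraints—automatic once any zero count is present or once the positive counts are not all equal—and verify that the half-open rounding intervals render the relevant intersections empty in precisely these situations.
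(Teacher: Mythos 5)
Your route is genuinely different from the paper's, and it is the more penetrating one. The paper recycles its Tobit argument: it isolates the positive-count observations, asserts that the improper prior predictive density of the corresponding latent subvector, proportional to $[\bm z_N'\bm P_k\bm z_N]^{-(n_N-1)/2}$, is bounded above by a constant $c$, bounds the marginal likelihood of those observations by $c\prod_i\{\log(y_i+1)-\log y_i\}<\infty$, and then handles the zero counts by conditioning on the resulting proper posterior. You instead integrate out $(\alpha,\bm\beta,\sigma^2)$ exactly, obtain $m(\bm z)\propto Q(\bm z)^{-(n-1)/2}$ with $Q(\bm z)\asymp\|\bm z_c\|^2$, and reduce propriety to the geometry of the box $\prod_i A_i$ relative to the diagonal $\operatorname{span}(\bm\iota)$ and to the tail. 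This buys you an actual proof of the ``only if'' direction, which the paper merely asserts (your shell counts, giving $\int^\infty R^{-1}\,dR$ for one positive count and the diagonal ray for none, are correct), and it makes visible exactly where sufficiency can break.

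Indeed, your ``delicate boundary case'' is not an edge to be finessed: it is a counterexample to the proposition as stated. If all $n\ge 2$ observations equal the same positive count $y^*$, the box $[\log y^*,\log(y^*+1))^n$ contains a segment of the diagonal in its interior, where $m\asymp\|\bm z_c\|^{-(n-1)}$ is non-integrable transversally ($\int_0 r^{-(n-1)}r^{n-2}\,dr$ diverges logarithmically). Parametrically this is the $\sigma^2\to0$ direction: with $\alpha$ interior to the common interval and $\bm\beta$ in the collapsing prior bulk, the likelihood tends to one, so the integrand behaves like the non-integrable prior $\sigma^{-2}$ near zero. Such data contain far more than two positive counts, yet the posterior is improper. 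The paper's proof misses this because the boundedness claim it imports from the Tobit case is only ``almost sure'' for continuous responses, where the density is evaluated at a Dirac point that avoids the diagonal; for STAR the same density is integrated over a box, and boundedness fails precisely when the box meets the diagonal. Your corrected condition (at least two positive counts \emph{and} the counts not all identical) is the statement that should be, and can be, proved.

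One genuine gap remains in your own sufficiency step: $\bigcap_i A_i=\emptyset$ does \emph{not} imply that $Q$ is bounded below on the box. Whenever two counts are consecutive integers, or a zero count coexists with a count of one, the half-open intervals are disjoint but their closures share an endpoint $c_0$, so $\inf_B Q=0$ along the corner at $c_0\bm\iota$; your phrase ``bounded, singularity-free integrand over the bounded coordinates'' is then false. The integral still converges there, but this needs an argument: since $\bigcap_i A_i=\emptyset$ forces $\bigcap_i\bar A_i$ to be at most the single point $c_0$, and since all interval endpoints are logarithms of integers, no interval can straddle $c_0$; hence near $c_0\bm\iota$ every coordinate satisfies $z_i<c_0$ or $z_i\ge c_0$, with both groups nonempty, and the opposite signs give $\|\bm z_c\|^2\ge\|\bm z-c_0\bm\iota\|^2/n$ on the box. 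The integrand is therefore $O\bigl(\|\bm z-c_0\bm\iota\|^{-(n-1)}\bigr)$ in a full $n$-dimensional neighborhood of the corner, which is integrable against the volume element $R^{n-1}\,dR$. With that local lemma added to your tail bound and your empty-intersection analysis, your proof of the corrected proposition is complete.
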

\begin{proof}
 A proof is given in the supplementary material (Sec.~\ref{app:lemm6}).
\end{proof}

\begin{proposition}
    Model selection based on VBC is asymptotically consistent in the STAR model considered here.
\end{proposition}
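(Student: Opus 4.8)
The plan is to reduce the claim to Proposition~\ref{lem:vbc_consistency} by verifying its hypotheses (i)--(v) for the STAR specification. Conditions (i)--(iv) are the standard regularity conditions together with the variational Bernstein--von-Mises requirements (asymptotic Gaussianity of $q(\bm\theta)$, centering at $\bm\theta^*$, and $1/\sqrt n$ contraction); I would take these as given on the strength of the posterior-existence result just established for the STAR model (at least two positive counts) and the asymptotic behavior of the CAVI posterior for $(\alpha,\bm\beta,\sigma^2)$ discussed for these models. The entire substantive burden then falls on condition (v), namely that
\[
\frac{p(\bm y\mid\hat{\bm z})\,p(\hat{\bm z}\mid\hat{\bm\theta})}{q(\hat{\bm z})}=p(\bm y\mid\hat{\bm\theta})
\]
holds, at least asymptotically, at the variational point estimates.

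Because the STAR link is deterministic, this reduces to a clean per-observation calculation. Here $p(y_i\mid z_i)=\mathbbm{1}(z_i\in\mathcal A_i)$ is the indicator of the rounding interval $\mathcal A_i$ --- namely $(-\infty,0)$ when $y_i=0$ and $[\log y_i,\log(y_i+1))$ when $y_i>0$ --- the latent density $p(z_i\mid\bm\theta)$ is an untruncated Gaussian, and the CAVI factor $q(z_i)$ is exactly that Gaussian truncated to $\mathcal A_i$. I would first observe that $\hat z_i=\E_{q}[z_i]$ lies in the convex interval $\mathcal A_i$, so that $p(y_i\mid\hat z_i)=1$. Next, since $q(z_i)$ and $p(z_i\mid\hat{\bm\theta})$ share the mean $\mu_i=\hat\alpha+\bm x_i^\top\hat{\bm\beta}$, their Gaussian kernels evaluated at the common point $\hat z_i$ cancel, leaving only the reciprocal of the truncation normalizing constant $Z_i=\Pr_{\mathcal N(\mu_i,\,\cdot\,)}(z_i\in\mathcal A_i)$. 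Because $p(y_i\mid\hat{\bm\theta})=\int_{\mathcal A_i}p(z_i\mid\hat{\bm\theta})\,dz_i=Z_i$ is precisely that mass, the per-observation ratio collapses to $p(y_i\mid\hat{\bm\theta})$, and multiplying over $i$ (using conditional independence given $\bm z$) delivers (v).

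The main obstacle is the one caveat in this cancellation: the variational factor $q(z_i)$ carries variance $\xi=b/a$, whereas $p(z_i\mid\hat{\bm\theta})$ uses the variational posterior mean $\hat\sigma^2=\E_q[\sigma^2]=b/(a-1)$, so the two variances differ at order $O(1/n)$. I therefore expect the hard part to be arguing that this finite-sample mismatch is asymptotically negligible: since both $\xi$ and $\hat\sigma^2$ converge to the true $\sigma^{2*}$ and the Gaussian density is smooth in its variance, the residual discrepancies in both the kernel and the normalizing constant vanish as $n\to\infty$, so (v) holds asymptotically (it would hold exactly were $\sigma^2$ fixed, as in the probit case). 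Having verified (i)--(v), I would invoke Proposition~\ref{lem:vbc_consistency} directly to conclude that $\Pr\{\arg\min_k \mathrm{VBC}_k=k^*\}\to 1$, completing the argument.
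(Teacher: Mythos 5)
Your proposal is correct and follows essentially the same route as the paper's proof: both verify condition (v) by exploiting that $p(y_i\mid z_i)=1$ on the support of $q(z_i)$ and that the Gaussian $p(z_i\mid\hat{\bm\theta})$ and truncated-Gaussian $q(z_i)$ share the location $\mu_i$, so their ratio collapses to the truncation mass $\Phi\left(\frac{b_i-\mu_i}{\sigma}\right)-\Phi\left(\frac{a_i-\mu_i}{\sigma}\right)=p(y_i\mid\hat{\bm\theta})$, and both then dispose of the scale mismatch $\xi=b/a$ versus $\hat\sigma^2=b/(a-1)$ by noting it vanishes as $n\to\infty$ before invoking Proposition~\ref{lem:vbc_consistency}. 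Your treatment is, if anything, slightly more explicit than the paper's on why $\hat z_i\in\mathcal A_i$ and on the smoothness argument for the variance discrepancy, but the decomposition and key cancellation are identical.
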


\begin{proof}
A proof is given in the supplementary material (Sec.~\ref{app:coroll4}).

\end{proof}

\section{Extensions to Stochastic Links}
\label{sec:nonconjugate}

The three illustrative models considered in Sec.~\ref{sec:illustrations} share certain characteristics, such as closed-form updates for $q(z_i)$ as well as likelihood contributions $p(y_i|z_i)$ that provide information for the posterior of $z_i$ only in the form of interval restrictions. This simplifies certain derivations and proofs of theoretical results and is key for condition (v) in Proposition~\ref{lem:vbc_consistency} to hold. This section considers extensions to cases where no closed-form update for $q(z_i)$ is available and $p(y_i|z_i)$ is a continuous likelihood contribution. Specifically, we consider the class of univariate link latent Gaussian models (\citealp{hrafnkelsson2023statistical}; \citealp{steel2024model}). In this model class, the relationship between $z_i$ and $y_i$ is stochastic and not deterministic. As a specific example, we focus on the Poisson log-normal (PLN) model where $y_i \sim \mathcal{P}(\exp(z_i))$. Many of the insights and derivations below can be extended to other models of this class, including for instance binomial logistic or Erlang log-normal regression models (as defined in \citealp{steel2024model}). Posterior existence for a range of such models using improper priors under mild conditions is discussed in \citet{steel2024model}.

A CAVI update does not lead to a known form for \( q(\bm{z}) \) in the PLN case. Therefore, to infer the variational parameters \( m_i \) and \( s_i \), we will make the parametric assumption that %$q(\bm{z}) = \prod_i 
$q(z_i)=\mathcal{N}(m_i, s_i)$ and resort to direct optimization of the ELBO in $m_i$ and $s_i$, similar to the approach outlined in \citet{chiquet2018variational}. Details on the ELBO, gradient-based optimization to obtain $m_i$ and $s_i$ as well as some investigation into the quality of approximation of Gaussian $q(z_i)$ for the PLN model are given in Sec.~\ref{app:pln_details}. 

In this section, we focus on examining how our previous results are affected by such settings. Most importantly, for ULLGMs, the key condition (v) $\frac{p(\bm{y}|\bm{z})p(\bm{z}|\bm{\theta})}{q(\bm{z})} = p(\bm{y}|\bm{\theta})$ at the evaluation point $(\hat{\bm{z}}, \hat{\bm{\theta}})$ that established model selection consistency of the VBC in Proposition~\ref{lem:vbc_consistency} does not hold anymore. Instead, we have
\begin{equation}
    \log p(\bm{y}|\bm{\theta}) = \log p(\bm{y}|\bm{z}) + \log p(\bm{z}|\bm{\theta}) - \log q(\bm{z}) + \log \frac{q(\bm{z})}{p(\bm{z}|\bm{y}, \bm{\theta})},
\end{equation}
based on basic probability rules. Hence, asymptotically, $\log p_{\text{VBC}}(\bm{y})$ approaches
\begin{equation}
\label{eq:pln_ms_cons}
    \log p_{\text{VBC}}(\bm{y}) \xrightarrow[n\rightarrow\infty]{} \log p(\bm{y}|\hat{\bm{\theta}})  + \log \frac{q(\hat{\bm{z}})}{p(\hat{\bm{z}}|\bm{y}, \hat{\bm{\theta}})} - \frac{d}{2}\log(n) + \mathcal{O}(1).
\end{equation}

The additional term $\log \frac{q(\hat{\bm{z}})}{p(\hat{\bm{z}}|\bm{y}, \hat{\bm{\theta}})}$ captures the discrepancy between the true posterior of $\bm{z}$ and the corresponding variational approximation at the point $(\hat{\bm{z}}, \hat{\bm{\theta}})$. This error could, in theory, prevent asymptotic model selection consistency. In general, the additional term in (\ref{eq:pln_ms_cons}) may grow linearly in $n$ in the worst case. Thus, for model selection consistency in such cases, we require the additional assumption that, asymptotically, the term $\log \frac{q(\hat{\bm{z}})}{p(\hat{\bm{z}}|\bm{y}, \hat{\bm{\theta}})}$ does not lead to any strong adverse interference with the likelihood term $p(\bm{y}|\hat{\bm{\theta}})$ and the penalty term proportional to $\log(n)$. More specifically, we require that the term cannot diverge to $-\infty$ faster than $\log p(\bm{y}|\hat{\bm{\theta}})$ diverges to $\infty$ as $n$ grows when considering the true model $\mathcal{M}_{k^*}$. Intuitively, this requires that the variational approximation $q(\bm{z})$ is a reasonable approximation to the true conditional $p(\bm{z}|\bm{y}, \bm{\theta})$ at the evaluation point. 

The size of this additional term depends on the model and the data at hand. Typically, any detrimental impact of this term on model selection consistency can be ruled out for various corner cases that do, however, depend on the studied model: For parameter-consistent ULLGMs with Gaussian $q(\bm{z})$, the case where $\sigma^2 \rightarrow 0$ leads to $q(\bm{z})$ being a perfect approximation to $p(\bm{z}|\bm{y}, \bm{\theta})$ and the additional term cancels. As $\sigma^2 \rightarrow \infty$, $p(\bm{z}|\bm{y}, \bm{\theta})$ depends purely on the likelihood contributions $p(\bm{y}|\bm{z})$ which do not vary across models, and hence the additional term cancels when considering Bayes factors. For the PLN model specifically, when $y_i \rightarrow \infty$, $p(z_i|y_i, \bm{\theta})$ approaches a Gaussian, making a Gaussian $q(z_i)$ a perfect approximation and the additional term cancels. Furthermore, asymptotically and given parameter consistency, any model that nests the true model shares the same $\bm{\theta}$ and hence the same $p(\bm{z}|\bm{y}, \bm{\theta})$ and $q(\bm{z})$, leading to the additional term canceling in Bayes factors. These considerations imply that the `bias' induced by the additional term pushes the PLN model at most towards models that are too small. While such considerations can be used to better understand in which situations the additional term may manifest itself, the term itself is difficult to bound formally in the general case. We leave a formal investigation of this issue for future research. Importantly, in virtually all empirical examples studied below, the additional term becomes negligible, and model selection consistency is observed empirically.

\section{Algorithmic Implementation for Large Model Spaces}
\label{sec:algorithmic}

In regression models where the number of covariates $p$ exceeds a certain threshold, enumerating all possible models becomes computationally infeasible. This issue is exacerbated by large sample sizes $n$, since computing model selection criteria for every possible model becomes more costly as $n$ grows. To address this challenge, we consider a simple algorithmic strategy that explores large model spaces in a manner similar to well-known MCMC methods. 

Starting from any model $\mathcal{M}_k$, we compute an approximation $\tilde p_k(\bm{y})$ to the marginal likelihood $p_k(\bm{y})$ and iterate through the following steps: (1) propose a move to a `neighboring' model $\mathcal{M}_j$ using a suitable proposal algorithm; (2) compute a marginal likelihood approximation $\tilde p_j(\bm{y})$; and (3) accept or reject moving to the new model based on the Metropolis-Hastings acceptance probability
\begin{equation}
\alpha = \min\!\Biggl\{ 1,\; 
\frac{\,\tilde p_j(\bm{y})\,p\bigl(\mathcal{M}_j\bigr)\,r\!\bigl(\mathcal{M}_k \mid \mathcal{M}_j\bigr)\,}
     {\,\tilde p_k(\bm{y})\,p\bigl(\mathcal{M}_k\bigr)\,r\!\bigl(\mathcal{M}_j \mid \mathcal{M}_k\bigr)\,}
\Biggr\},
\end{equation}
where $p(\mathcal{M}_k)$ and $p(\mathcal{M}_j)$ are the model priors, and the terms $r(\mathcal{M}_j \mid \mathcal{M}_k)$, $r(\mathcal{M}_k \mid \mathcal{M}_j)$ account for the varying proposal probabilities of moving between models.

We employ a simple add-delete-swap proposal mechanism. Extensions to adaptive proposals (\citealp{zanella2020informed}; \citealp{griffin2021search}) or sampling strategies without replacement (\citealp{clyde2011bayesian}) could enhance efficiency but are left for future research. Various marginal likelihood approximations can be implemented, including ELBO variants, the BIC, or the VBC approximation introduced in Sec.~\ref{sec:general}. The results can be used either to identify a single `best' model based on visit frequency or to derive model-averaged estimates. Note that although the algorithm is inspired by MCMC model space exploration, it essentially defines a quasi-MCMC over models with stationary distribution proportional to the approximate evidence. Hence, its empirical behavior ultimately depends on the quality of the chosen marginal likelihood approximations. We investigate the performance of the algorithm in Sec.~\ref{sec:numerical} and provide full implementation details in the supplementary material (Sec.~\ref{app:algorithmic}).

\section{Numerical Exercises Using Synthetic Data}
\label{sec:numerical}

We simulate synthetic data with various sample sizes between $n=250$ and $n=50,000$ from the respective model-specific data generating process. We consider a setting with \( p = 10 \) where we can enumerate all $1,024$ resulting models, and a setting with \(p = 30\) predictors which requires model space exploration using the algorithm described in Sec.~\ref{sec:algorithmic}. The regressors $\bm{x}_i$ are drawn from a normal distribution with mean 0 and covariance matrix $\bm{\Sigma}$, where $\Sigma_{jk}=\rho^{\vert j - k\vert}$, determined by a correlation coefficient $\rho$. We set $\rho=0.25$ in our examples, representing a setting with moderate correlation among the regressors. The true coefficient vector is specified as $\bm{\beta} = (0.5, -0.5, 0.25, -0.25, 0, \dots, 0)$ in a `sparse' setting and as $\bm{\beta} = (0.5, -0.5, 0.25, -0.25, 0.15, \dots, 0.15)$ in a `dense' setting, and we fix $\alpha=0$ across all settings. All simulation settings are replicated 40 times. For the model space exploration algorithms, we consider inference based on the first $10,000$ models visited after an initial `burn-in' period of $2,000$ models. We consider marginal likelihood approximations based on the ELBO and the VBC, based on AVB and VB plug-in parameter estimates. All marginal likelihood approximations are combined with a beta-binomial model space prior as in \citet{ley2009effect} with a prior mean model size of $p_0=p/2$ and we use a $g$-prior as in (\ref{PriorThetaMA}) with \( g = n \), along with the improper priors in (\ref{eq:priora}).

\textbf{Accuracy.} As a first accuracy measure we consider Brier scores of estimated posterior inclusion probabilities against the true inclusion patterns. All selection criteria considered provide very high quality posterior inclusion probability estimates as $n$ grows in the setting with $p=10$ (Fig.~\ref{fig:brier_enumeration}; log scale in Supplementary Fig.~\ref{fig:brier_enumeration_log}). The results of the full VB and AVB algorithms are essentially indistinguishable, even in moderately large samples of a few thousand observations. In settings with sparse DGP, the bias of the AVB algorithm towards the null model even leads to slightly improved Brier scores relative to the full VB algorithm, which is reversed when the DGP is dense. For the discrete link models, VBC and ELBO give either equivalent or very close results. For the PLN model, VBC performs significantly better in small samples. On average across all settings and models, VB-VBC (Brier: 0.0139) outperforms VB-ELBO (0.0141) slightly. The VB algorithms outperform the AVB algorithms. Among the AVB algorithms, AVB-VBC (0.0227) performs better than AVB-ELBO (0.0312). Importantly, all approximations approach perfect scores as $n$ grows. This provides empirical evidence towards model selection consistency, also in the PLN case where establishing theoretical results is more challenging compared to the other considered examples. As a second measure of model selection performance, the share of simulation runs where the correct model scored best in terms of the respective criterion is provided in Supplementary Fig. \ref{fig:true_enumeration} and \ref{fig:true_enumeration_log}, leading to very similar conclusions. For the setting with $p=30$, where algorithmic exploration of the model space becomes necessary, results are similar, see Supplementary Fig. \ref{fig:brier_exploration}, \ref{fig:brier_exploration_log}, \ref{fig:true_exploration} and \ref{fig:true_exploration_log}.

\begin{figure}[t]
    \centering
    \includegraphics[width=0.95\textwidth]{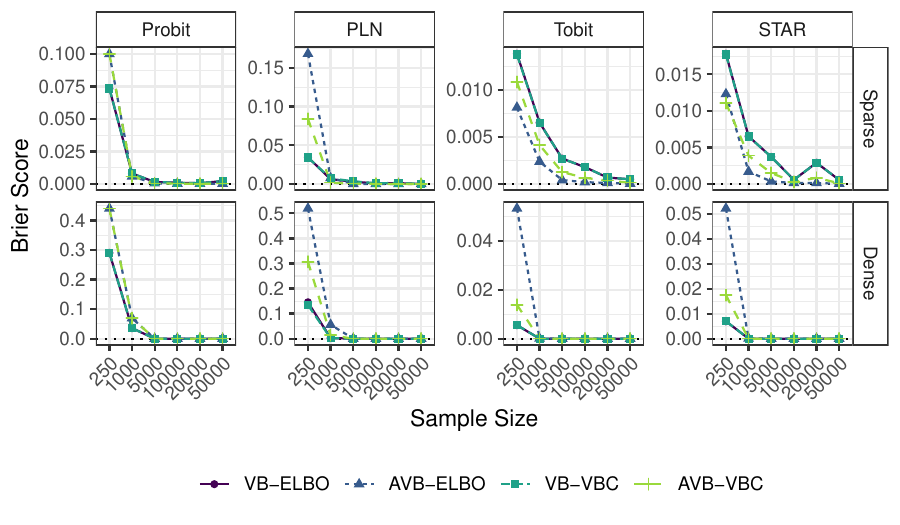}
    \caption{Brier scores in the $p=10$ setting based on model space enumeration using the VBC based on a full VB algorithm (VB-VBC) and approximate variational approximations (AVB-VBC) as well as on the ELBO of the full VB algorithm (VB-ELBO) and an AVB algorithm (AVB-ELBO). Black dotted line refers to zero (perfect accuracy). Results are averaged across 40 replicate datasets.}
    \label{fig:brier_enumeration}
\end{figure}

\textbf{Timing.} Importantly, while AVB-VBC provides very similar accuracy compared to VB-VBC, it comes at a much smaller computational cost (Fig.~\ref{fig:timing_enumeration}; log scale in Supplementary Fig.~\ref{fig:timing_enumeration_log}). 
This is already observable in moderate sample sizes. For instance, in the sparse DGP setting with $n=1,000$, the full probit VB algorithm requires 4.8 seconds to enumerate $1,024$ models, while the AVB algorithm takes 1.4 seconds (a reduction of around 71\%). However, the strongest gains in computational efficiency become visible in large samples. On average across all models and settings, for $n=50,000$ the VB algorithm requires around 742 seconds, while the AVB algorithm takes 14 seconds (a reduction of more than $98\%$). The largest gain we observe is in the $n=50,000$ dense setting of the STAR model, where computation time drops from 1402 seconds to 16 seconds ($-99\%$). The average, enumeration-based Brier scores in this setting are numerically indistinguishable from 0 for both VB and AVB. For the setting with $p=30$, where algorithmic exploration of the model space becomes necessary, results are again similar, see Figs. \ref{fig:timing_exploration} and \ref{fig:timing_exploration_log}.

\begin{figure}[t]
    \centering
    \includegraphics[width=0.95\textwidth]{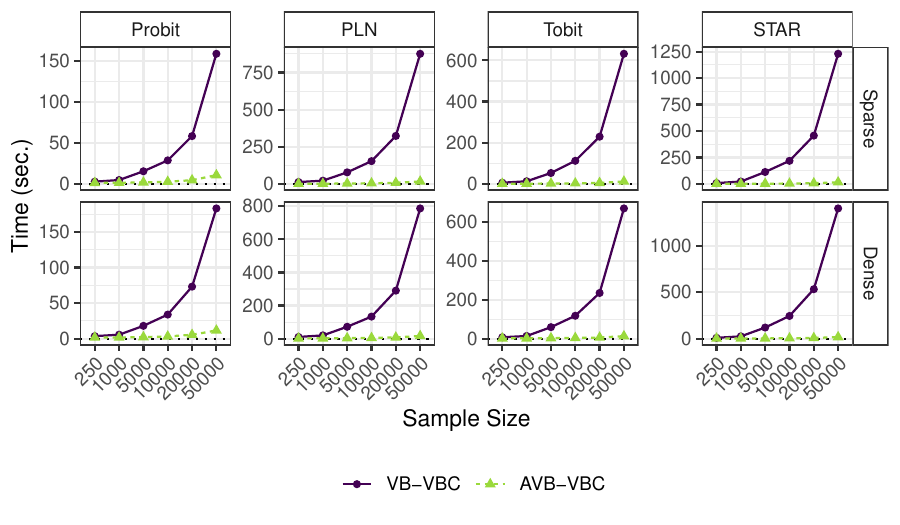}
    \caption{Timing in seconds to evaluate $1,024$ models using the VBC based on a full VB algorithm (VB-VBC) and approximate variational approximations (AVB-VBC). Black dotted line refers to zero. Results are averaged across 40 replicate datasets.}
    \label{fig:timing_enumeration}
\end{figure}

\textbf{Posterior Consistency.} To evaluate empirically whether we can expect posterior consistency of the full VB algorithms of the four illustrative models, we conduct an additional simulation exercise where settings up to $n = 200,000$ are considered. We collect root mean square errors (RMSEs) of all model parameters relative to their true values as well as the average estimated posterior variance. Fig. \ref{fig:consistency} and \ref{fig:consistency_log} in the supplementary material report these measures and show that clearly and across all settings, posterior variability as well as RMSEs quickly approach zero. This provides empirical evidence that posterior consistency holds for the four considered models.

\section{Applications to Real-World Data}
\label{sec:applications}

To illustrate the methodology and explore the performance of the algorithms in real-world settings, we consider two empirical exercises. First, we use VB and AVB algorithms to fit probit models to survey data with $n=41,587$ observations and $p=74$ covariates. The data is compiled based on harmonized Afrobarometer surveys\footnote{The Afrobarometer is a pan-African research network that measures public attitudes in Africa using face-to-face surveys, see \texttt{https://www.afrobarometer.org/}.} conducted in 35 African countries. The outcome of interest is a binary indicator measuring whether the surveyed individual has intentions to migrate internationally. Explanatory variables include various potential drivers of migration intentions, as well as country indicators, socio-economic and demographic characteristics of the survey respondent, among others. For detailed information on the dataset, refer to \citet{hoffmann2024interrelated} where this data is introduced. % and analyzed in detail.

For the second illustration, we fit STAR models to count-valued data with $n = 8,742$ observations and $p = 43$ covariates. The counts represent register-based long-run origin-destination specific migration flows between 94 districts of Austria in the 10-year period 2014-2023. As explanatory variables for these flows, we consider classical migration determinants, including distance and contiguity between regions, as well as population counts, age, sex and educational composition of the populations, availability of childcare facilities, unemployment rates, and sectoral economic composition in both origin and destination regions, among others. All of these variables are measured for the baseline year 2013, are publicly available, and have been sourced from the online data explorer of the Austrian national statistical office (\textit{Statistik Austria}).

These applications were selected as they relate to typical problems in applied research and cover both count and binary outcomes. Importantly, they also illustrate the behavior of the algorithms in two rather extreme scenarios. For the Afrobarometer data, the combination of a moderately large number of covariates, an intricate correlation structure between the covariates and little likelihood information per observation due to the binary setting creates a scenario where we cannot assume that asymptotic results will apply, as the information available to estimate each parameter is relatively low. We thus expect the overall likelihood information to be relatively weak and AVB to be conservative relative to VB. For the migration count data, likelihood information is stronger, with a mean count of approximately 283 and only 2\% zero counts in the dataset. In this case, we expect essentially equivalent behavior between the VB and AVB algorithms, as the likelihood information in $p(y_i \mid z_i)$ is quite strong. 

We use the model space exploration algorithm described in Section~\ref{sec:algorithmic} to screen 250,000 models after an initial burn-in period of 50,000 models. For brevity, we do not describe the specific coefficient estimates and potential theoretical channels that may explain the estimated relationships, but focus on the variable selection and model averaging results instead. The time differences between the AVB and VB algorithms in these applications are substantial. For the Afrobarometer data, the full VB probit algorithm took more than two days to screen 300,000 models, while the AVB probit algorithm took two hours (a reduction of 96\% in runtime). For the migration data, the VB STAR algorithm took 6.5 hours versus 20 minutes for the AVB algorithm (a reduction of 95\%).

\begin{figure}[t!]
    \centering
    \begin{subfigure}[t]{0.8\linewidth}
        \centering
        \includegraphics[width=\linewidth]{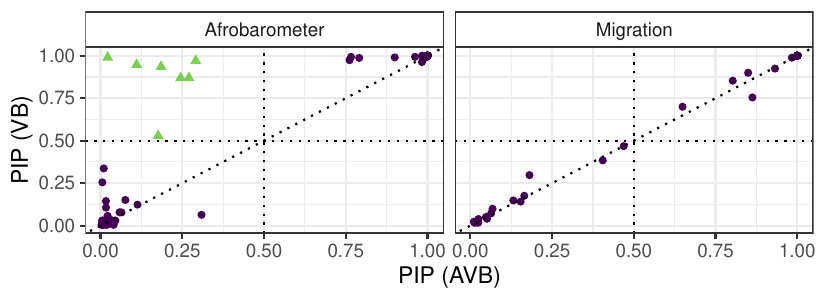}
        \caption{Posterior Inclusion Probabilities.}
    \end{subfigure}\\
    \begin{subfigure}[t]{0.8\linewidth}
        \centering
        \includegraphics[width=\linewidth]{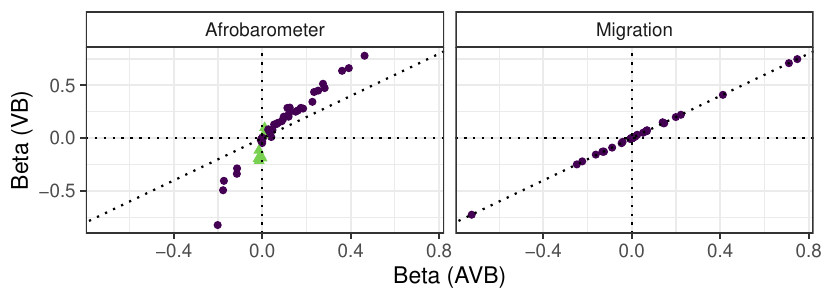}
        \caption{Posterior Means $\bm{\beta}$.}
    \end{subfigure}
    \caption{Panel (a) shows posterior inclusion probability estimates. Panel (b) shows model-averaged posterior mean estimates for the regression coefficients $\bm{\beta}$. Results are based on 250,000 sampled models. Green triangles indicate coefficients where the median probability models obtained from AVB and VB disagree.}
    \label{fig:app_res}
\end{figure}

Despite the significant time difference, the model averaging results are similar between both methods. Fig.~\ref{fig:app_res} compares posterior inclusion probabilities and model-averaged posterior mean estimates of $\bm{\beta}$ of the AVB and VB algorithms. For the Afrobarometer application, the highest scoring model in terms of VBC and the median probability model under AVB and VB agree on inclusion/exclusion of 67 variables, while VB includes 7 variables that are excluded under the AVB results. For the migration data, the highest scoring models agree on 41 inclusion/exclusion patterns, while VB includes 2 more variables that are excluded by AVB. The median probability models are equivalent under VB and AVB for the migration data. These results are also reflected in posterior model size estimates, which for the Afrobarometer data are 34.48 (SD: 1.53) for AVB and 40.77 (SD: 1.39) for VB, and for the migration data 24.00 (SD: 1.53) for AVB versus 24.23 (SD: 1.54) for VB. 

The few variables where disagreement between AVB and VB occurs are typically not those that have parameter estimates clearly close to zero or decidedly away from zero -- on those both methods tend to agree, see Panel (b) in Fig.~\ref{fig:app_res}. Panel (b) also clearly shows the shrinkage towards $\bm{\beta}=0$ built into the AVB algorithm by construction when the likelihood is rather weak (Afrobarometer data), while estimates agree almost perfectly in cases with strong likelihood (migration data). Disagreement typically happens for coefficients that are neither clearly predictive nor clearly non-predictive for the outcome. Here, the shrinkage of AVB towards zero is stronger than the likelihood information that pulls the coefficients away from zero. Importantly, variables that are clearly included by AVB are also clearly included by VB. Overall, these results are thus in line with theoretical expectations. In Sec.~\ref{app:additional_sim} in the supplementary material, we replicate this behavior empirically and provide further explanations in the context of an additional simulation experiment. Further heuristics on the relationship between AVB and VB are provided in supplementary Sec.~\ref{app:proof_asymptotic_shrinkage}. 

\section{Discussion \& Concluding Remarks}
\label{sec:conclusion}

This article introduces a novel framework for Bayesian variable selection and model averaging in latent variable regression models under large samples. We leverage mean-field variational posterior approximations and consider a new marginal likelihood approximation, showing that it is both computationally tractable and asymptotically consistent for a broad class of latent regression settings. We further propose an AVB algorithm, which fixes the latent outcomes at an initial null-model estimate. AVB is demonstrated to greatly reduce computation times across large model spaces, while approximately and locally preserving the asymptotic model selection properties of the full variational Bayes approach to first order. Extensive simulations and real-data applications confirmed that the proposed approximations deliver substantial gains in computational efficiency with negligible loss in accuracy for large sample model selection or averaging.

Several directions for future research emerge from this work. Our theoretical framework currently addresses asymptotic settings where the number of potential regressors $p$ remains fixed as sample size $n$ grows. Extending to high-dimensional scenarios where $p$ and $n$ diverge at comparable rates presents both theoretical and computational challenges, particularly regarding posterior contraction and model selection consistency. These extensions would require careful analysis of how our approximations behave when the model space grows with the sample size.

The assumptions we made also raise questions about guarantees for asymptotic parameter consistency in latent variable regression models more broadly. While our empirical results suggest parameter consistency in the model families we consider here, formal consistency proofs remain to be developed for the general class of latent variable regression models. Recent work provides promising directions - for instance, \citet{hall2011asymptotic} and \citet{batardiere2024evaluating} provide multiple relevant findings for PLN models that could potentially be extended to a broader class of models. Such theoretical developments would strengthen the theoretical foundation of several of our results.

From a methodological perspective, the finite-sample behavior of AVB when signals are weak suggests several refinements worth exploring. These include partial updating schemes for latent outcomes, data-driven diagnostics to assess approximation quality, and ex-post correction factors derived from null models. Our supplementary results provide starting points for developing such enhancements. Additionally, while we focused on unit information $g$-priors, extending to more flexible prior specifications through non-degenerate $g \sim p(g)$ distributions (\citealp{liang_etal_08}; \citealp{leysteel2012}) may improve practical model selection and averaging performance. Moreover, latent variable representations with multiple auxiliary variable layers - such as those discussed for logistic models in \citealp{zens2024ultimate} - are not directly covered by our analysis and would require further investigation.

Finally, the framework naturally extends to richer model classes including multivariate latent regression models like multinomial probit, models with latent thresholds such as ordered probit, and other latent variable formulations including factor and mixture models. Each extension would require adapting our theoretical results and algorithms to the specific structure of these models, but we believe the core principles we establish here provide an important  foundation for such developments.

\spacingset{1.6}

\bibliographystyle{agsm}
\bibliography{AVB}

\spacingset{1.9} % DON'T change the spacing!

\clearpage
\begin{center}
{\large\bf SUPPLEMENTARY MATERIAL}
\end{center}

\setcounter{figure}{0}
\setcounter{section}{0}
\setcounter{table}{0}
\setcounter{equation}{0}

\renewcommand\thesection{S\arabic{section}}
\renewcommand{\theHsection}{S\arabic{section}}
\renewcommand\theequation{S\arabic{equation}}
\renewcommand\thefigure{S\arabic{figure}}
\renewcommand\thetable{S\arabic{table}}

\section{Relation of VBC approximation and ELBO}
\label{app:elbo_vs_vbc}

The \textit{evidence lower bound} (ELBO) is a key quantity in variational inference. In our setting, it is given by 
\begin{equation}
\label{eq:elbo}
    \text{ELBO}= \mathbb{E}_{q(\bm{z})q(\bm{\theta})}\left[\log\frac{p(\bm{y}|\bm{z})\, p(\bm{z}|\bm{\theta})\, p(\bm{\theta})}{q(\bm{z})q(\bm{\theta})}\right],
\end{equation}
and recent evidence has shown that the ELBO can be a useful and consistent asymptotic model selection criterion, recovering the BIC asymptotically in the sample size $n$, but typically leading to closer marginal likelihood approximations (\citealp{zhang2024bayesian}). Both $\log p_{\text{VBC}}(\bm{y})$ and $\text{ELBO}$ can be written in terms of the true log marginal likelihood $p(\bm{y})$ and a `residual term':
\begin{equation}
\begin{split}
    \log p(\bm{y}) &= \log p_{\text{VBC}}(\bm{y}) + \log \frac{q(\hat{\bm{z}})q(\hat{\bm{\theta}})}{p(\hat{\bm{z}}, \hat{\bm{\theta}}|\bm{y})}\\
    \log p(\bm{y}) &= \text{ELBO} + \mathbb{E}_{q(\bm{z})q(\bm{\theta})}\left[\log \frac{q(\bm{z})q(\bm{\theta})}{p(\bm{z}, \bm{\theta}|\bm{y})}\right],
\end{split}
\end{equation}
from which can be seen that, while the $\text{ELBO}$ deviates from the true marginal likelihood due to the Kullback-Leibler divergence between the true posterior distribution and the variational approximation, the $p_{\text{VBC}}(\bm{y})$ approximation deviates from the true marginal likelihood due to discrepancies between the true posterior probability density function and its variational approximation at the specific point $(\hat{\bm{z}}, \hat{\bm{\theta}})$. In the general case, the ELBO will thus be different from $p_{\text{VBC}}(\bm{y})$. In our empirical exercises, we find that $p_{\text{VBC}}(\bm{y})$ typically leads to slightly improved model selection and model averaging performance compared to the $\text{ELBO}$.

Asymptotically, and under the conditions stated in Sec.~\ref{sec:general}, the VBC marginal likelihood approximation approaches 
\begin{equation}
\begin{split}
    \log p_{\text{VBC}}(\bm{y}) \xrightarrow[n\rightarrow\infty]{} &\log p(\bm{y}|\hat{\bm{z}}) + \log p(\hat{\bm{z}}|\hat{\bm{\theta}}) + \log p(\hat{\bm{\theta}}) - \log q(\hat{\bm{z}})
 +\\&\frac{d}{2}\log(2\pi) - \frac{d}{2}\log(n)+\frac{1}{2}\log|\bm{\Omega}|,
\end{split}
\end{equation}
see (\ref{eq:asymptotic_vbc}). For comparison, the ELBO behaves asymptotically as
\begin{equation}
\begin{split}
    \text{ELBO} \xrightarrow[n\rightarrow\infty]{} &\mathbb{E}_{q(\bm{z})}[\log p(\bm{y}|\bm{z})] + \mathbb{E}_{q(\bm{z})}[\log p(\bm{z}|\hat{\bm{\theta}})] + \log p(\hat{\bm{\theta}}) - \mathbb{E}_{q(\bm{z})}[\log q(\bm{z})] + \\&\frac{d}{2} \log(2\pi e) - \frac{d}{2}\log(n) + \frac{1}{2} \log | \bm{\Omega}|.
\end{split}    
\end{equation}

\section{CAVI Updates}
\label{sec:cavi_updates}

In this section, we take $q(z_i)$ as given and consider updates for any given model with $p$ covariates. To determine the variational parameters \( \mu_{\alpha} \), \( \omega_{\alpha} \), \( \bm{\mu}_{\beta} \), \( \bm{\Omega}_{\beta} \), \( a \), and \( b \), we define \( \mathbb{E}_{q(z_i)}\left[z_i\right] = m_i \) and \( \mathbb{V}_{q(z_i)}\left[z_i\right] = s_i \), representing the expected value and variance of $z_i$ under variational density \( q(z_i) \), respectively.

\paragraph{Updating $q(\alpha)$}

To update the variational density of the intercept term $\alpha$, we consider all terms involving $\alpha$ in the joint log density $\log p(\bm{y}, \bm{z}, \alpha, \bm{\beta}, \sigma^2)$:
\[
\log q(\alpha) \propto \mathbb{E}_{-q(\alpha)} \left[ \log p(\bm{z} | \alpha, \bm{\beta}, \sigma^2) \right] \propto \mathbb{E}_{-q(\alpha)} \left[ -\frac{1}{2\sigma^2} \sum_{i=1}^n (z_i - \alpha - \bm{x}_i^\top \bm{\beta})^2 \right].
\]

After expanding, taking expectations, and rearranging, we obtain
\[
q(\alpha) \propto \mathcal{N}\left( \bar{m}, \frac{b}{n a} \right),
\]
where $\bar{m}=\frac{1}{n}\sum_i m_i$ and we have utilized the facts that $\sum_i \bm{x}_i=\bm{0}$ due to the centering of covariates and that $\mathbb{E}\left[ \frac{1}{\sigma^2} \right]=\frac{a}{b}$ corresponds to the expectation of a Gamma-distributed variable.

\paragraph{Updating $q(\bm{\beta})$}

Similarly, for the coefficients $\bm{\beta}$, we have
\[
\begin{aligned}
\log q(\bm{\beta}) &\propto \mathbb{E}_{-q(\bm{\beta})} \left[ \log p(\bm{z} | \alpha, \bm{\beta}, \sigma^2) + \log p(\bm{\beta} | \sigma^2) \right]\\ &\propto \mathbb{E}_{-q(\bm{\beta})} \left[ -\frac{1}{2\sigma^2} \sum_{i=1}^n (z_i - \alpha - \bm{x}_i^\top \bm{\beta})^2 - \frac{1}{2} \bm{\beta}^\top \left( \frac{1}{g \sigma^2} \bm{X}^\top \bm{X} \right) \bm{\beta} \right].
\end{aligned}
\]

After expanding, taking expectations, and rearranging, we obtain
\[
\begin{aligned}
q(\bm{\beta}) &\propto \mathcal{N}\left( \delta (\bm{X}^\top \bm{X})^{-1} \bm{X}^\top \bm{m}, \delta \frac{b}{a} (\bm{X}^\top \bm{X})^{-1} \right),
\end{aligned}
\]
where $\delta = \frac{g}{1 + g}$ and $\bm{m} = (m_1, \dots, m_n)^\top$.

\paragraph{Updating $q(\sigma^2)$}

Finally, to derive the variational update for $q(\sigma^2)$, we consider
\[
\begin{aligned}
    \log q(\sigma^2) &\propto \mathbb{E}_{-q(\sigma^2)} \left[ \log p(\bm{z} | \alpha, \bm{\beta}, \sigma^2) + \log p(\bm{\beta} | \sigma^2) + \log p(\sigma^2) \right]\\
    &\propto \mathbb{E}_{-q(\sigma^2)} \Bigg[-\frac{n}{2}\log \sigma^2 -\frac{1}{2\sigma^2} \sum_{i=1}^n (z_i - \alpha - \bm{x}_i^\top \bm{\beta})^2\\
    &\qquad\qquad\quad - \frac{p}{2}\log \sigma^2 - \frac{1}{2} \bm{\beta}^\top \left( \frac{1}{g \sigma^2} \bm{X}^\top \bm{X} \right) \bm{\beta} - \log \sigma^2 \Bigg].
\end{aligned}
\]

After taking expectations, dropping constant terms in $\sigma^2$ and rearranging, the variational density for $\sigma^2$ can be shown to be
\[
\begin{aligned}
    q(\sigma^2) &\propto \mathcal{IG} \left(a, b \right)\\
    a &= \frac{n + p}{2}\\
    b &= \frac{1}{2}\left[\sum_{i=1}^n \Bigl(m_i - \mu_\alpha - \bm{x}_i^\top\bm{\mu}_\beta\Bigr)^2 + \sum_{i=1}^n s_i + n\omega_\alpha +\right. \\[1ex]
    &\left.\quad + \frac{1}{g}\Bigl(\bm{\mu}_\beta^\top \bm{X}^\top \bm{X}\bm{\mu}_\beta\Bigr) + (1+\frac{1}{g})\operatorname{tr}\Bigl(\bm{X}^\top \bm{X}\bm{\Omega}_\beta\Bigr)\right],
\end{aligned}
\]
where the first term in $b$ is the classical sum of squared residuals and the remaining quantities account for uncertainty in $z_i$, $\alpha$ and $\bm{\beta}$. Some of these quantities collapse to simple terms that can be very efficiently computed under the conjugate $g$-prior setup.

\paragraph{Summary.}

We summarize the variational parameter updates as follows, suppressing model indices and defining $\bm{m}=(m_1,\dots,m_n)^\top$:
\begin{equation}
\begin{aligned}
\mu_\alpha &= \frac{1}{n}\sum_{i=1}^n m_i, \quad 
\omega_\alpha = \frac{b}{na}, \\[1ex]
\bm{\mu}_\beta &= \frac{g}{1+g}\,(\bm{X}^\top \bm{X})^{-1}\bm{X}^\top \bm{m}, \quad 
\bm{\Omega}_\beta = \frac{g}{1+g}\,\frac{b}{a}\,(\bm{X}^\top \bm{X})^{-1}, \\[1ex]
a &= \frac{n+p}{2}, \\[1ex]
b &= \frac{1}{2}\left[\sum_{i=1}^n \Bigl(m_i - \mu_\alpha - \bm{x}_i^\top\bm{\mu}_\beta\Bigr)^2 + \sum_{i=1}^n s_i + n\omega_\alpha +\right. \\[1ex]
    &\left.\quad + \frac{1}{g}\Bigl(\bm{\mu}_\beta^\top \bm{X}^\top \bm{X}\bm{\mu}_\beta\Bigr) + (1+\frac{1}{g})\operatorname{tr}\Bigl(\bm{X}^\top \bm{X}\bm{\Omega}_\beta\Bigr)\right].
\end{aligned}
\end{equation}

%Given the current variational density estimates \( q(z_i) \), iterating through the updates for $q(\alpha)$, $q(\bm{\beta})$ and $q(\sigma^2)$ will successively increase the ELBO. 

\newpage
\section{Proofs}
\label{app:proofs}

We start by stating two results that we will repeatedly refer to:

\begin{lemma}[Score--variance identity]\label{lem:score-var-identity}
Fix $i$ and let $Z_i\sim\mathcal N(\eta_i,\sigma^2)$ with likelihood contribution $p(y_i\mid z_i)$. Define
\[
m_i(\eta_i):=\E[Z_i\mid Y_i=y_i,\eta_i].
\]
Then, for all $\eta_i$,
\[
\frac{\partial m_i}{\partial\eta_i}(\eta_i)=\frac{\mathbb{V}(Z_i\mid Y_i=y_i;\eta_i)}{\sigma^2}.
\]
\end{lemma}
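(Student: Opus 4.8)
The plan is to recognize this as a Gaussian location–score (Tweedie-type) identity and to obtain it by differentiating the posterior mean under the integral sign. First I would write the conditional law of $Z_i$ given $Y_i=y_i$ explicitly. Since $Z_i\sim\mathcal N(\eta_i,\sigma^2)$, letting $\phi(z;\eta_i):=(2\pi\sigma^2)^{-1/2}\exp\{-(z-\eta_i)^2/(2\sigma^2)\}$, the posterior density is $p(z\mid y_i,\eta_i)=p(y_i\mid z)\,\phi(z;\eta_i)/M(\eta_i)$, where $M(\eta_i):=\int p(y_i\mid z)\,\phi(z;\eta_i)\,dz$ is the normalizing constant and $A(\eta_i):=\int z\,p(y_i\mid z)\,\phi(z;\eta_i)\,dz$, so that $m_i(\eta_i)=A(\eta_i)/M(\eta_i)$. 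The single tool driving everything is the elementary Gaussian identity $\partial_{\eta_i}\phi(z;\eta_i)=\sigma^{-2}(z-\eta_i)\,\phi(z;\eta_i)$, which turns a derivative in the location parameter into multiplication by the centered argument inside the integral.

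Next I would differentiate $M$ and $A$ using this identity (granting the interchange of $\partial_{\eta_i}$ and $\int$, discussed below). This gives $M'(\eta_i)=\sigma^{-2}\bigl(A(\eta_i)-\eta_i M(\eta_i)\bigr)$ and, introducing the second moment $B(\eta_i):=\int z^2 p(y_i\mid z)\phi(z;\eta_i)\,dz$, also $A'(\eta_i)=\sigma^{-2}\bigl(B(\eta_i)-\eta_i A(\eta_i)\bigr)$. Applying the quotient rule to $m_i=A/M$ and dividing through by $M$ yields
\[
m_i'(\eta_i)=\frac{A'}{M}-m_i\frac{M'}{M}
=\frac{1}{\sigma^2}\Bigl(\tfrac{B}{M}-\eta_i m_i\Bigr)-\frac{m_i}{\sigma^2}\bigl(m_i-\eta_i\bigr).
\]
Recognizing $B/M=\E[Z_i^2\mid y_i,\eta_i]$, the two $\eta_i m_i$ terms cancel and what remains is $\sigma^{-2}\bigl(\E[Z_i^2\mid y_i]-m_i^2\bigr)=\sigma^{-2}\,\mathbb{V}(Z_i\mid Y_i=y_i;\eta_i)$, which is exactly the claimed identity.

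The main obstacle is the purely technical justification of differentiating under the integral sign, which I would handle via a dominated-convergence (Leibniz) argument and which splits along the two regimes allowed by the surrounding analysis. When $p(y_i\mid z)$ is an indicator of a convex interval, the integrals reduce to integrals of $1,z,z^2$ against a Gaussian over a \emph{fixed} interval whose endpoints do not depend on $\eta_i$, so differentiation passes through trivially and all moments are finite. When $p(y_i\mid z)$ is a log-concave $C^2$ density or pmf in $z$, the Gaussian factor supplies sub-Gaussian tails that dominate $z^j p(y_i\mid z)\,|\partial_{\eta_i}\phi|$ uniformly for $\eta_i$ in a neighborhood, so the required integrals and their $\eta_i$-derivatives exist and the interchange is valid; positivity of $M(\eta_i)$ (hence well-definedness of $m_i$) follows since $p(y_i\mid z)$ is not identically zero. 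I expect the bookkeeping here to be routine given the log-concavity and Gaussian-tail assumptions already in force, so the substance of the proof is the one-line score identity and the telescoping cancellation above.
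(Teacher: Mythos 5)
Your proof is correct and follows essentially the same route as the paper's: both write $m_i(\eta_i)$ as a ratio of a first-moment integral to the normalizing constant, invoke the Gaussian location-score identity $\partial_{\eta_i}\phi(z;\eta_i)=\sigma^{-2}(z-\eta_i)\,\phi(z;\eta_i)$ to differentiate inside the integrals, and then apply the quotient rule so that the cross terms cancel, leaving the posterior variance over $\sigma^2$. Your additional care about justifying differentiation under the integral sign (which the paper passes over silently) is a welcome refinement but does not change the substance of the argument.
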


\begin{proof}
Write the unnormalized posterior $\tilde\pi(z_i):=\phi_\sigma(z_i-\eta_i)\,L(z_i)$ with $L(z_i):=p(y_i\mid z_i)$ and normalizing constant  $C(\eta_i):=\int \tilde\pi(u_i)\,du_i$, where $\phi_\sigma(\cdot)$ is the pdf of a Gaussian with mean zero and variance $\sigma^2$. Then, by the definition of the expected value,
\[
m_i(\eta_i)=\frac{\int z_i\,\tilde\pi(z_i)\,dz_i}{C(\eta_i)}.
\]
By the properties of Gaussian densities $\partial_{\eta_i}\log\phi_\sigma(z_i-\eta_i)=(z_i-\eta_i)/\sigma^2$. Due to the chain rule, we have $\partial_\theta f(\theta)=f(\theta)\,\partial_\theta \log f(\theta)$ for any positive $f(\theta)$. Hence, it follows that
$\partial_{\eta_i} \tilde\pi(z_i)=\tilde\pi(z_i)\,(z_i-\eta_i)/\sigma^2$. Differentiating $m_i(\eta_i)$ via the quotient rule, replacing $\tilde\pi(z_i)\,(z_i-\eta_i)/\sigma^2$ with the partial derivative $\partial_{\eta_i} \tilde\pi(z_i)$ and writing expectations under the normalized posterior $\pi(z_i):=\tilde\pi(z_i)/C(\eta_i)$ gives
\begin{align*}
\frac{\partial m_i}{\partial\eta_i}
&=\frac{1}{\sigma^2}\big(\E_\pi[Z_i^2]-\eta_i\,\E_\pi[Z_i]\big)
-\E_\pi[Z_i]\cdot \frac{1}{\sigma^2}\big(\E_\pi[Z_i]-\eta_i\big)\\[3pt]
&=\frac{1}{\sigma^2}\Big(\E_\pi[Z_i^2]-\eta_i\,\E_\pi[Z_i]-\E_\pi[Z_i]^2+\eta_i\,\E_\pi[Z_i]\Big)\\[3pt]
&=\frac{1}{\sigma^2}\Big(\E_\pi[Z_i^2]-\E_\pi[Z_i]^2\Big)
=\frac{\mathbb{V}_\pi(Z_i)}{\sigma^2}.
\end{align*}
\end{proof}

\begin{lemma}[Weight bounds under log-concave likelihoods]\label{lem:weight-bounds}
Under the setup of Lemma~\ref{lem:score-var-identity}, assume that $p(y_i\mid z_i)$ is univariate and either (i) an indicator function with respect to a convex interval (as in probit, tobit, STAR models) or (ii) a $C^2$ log-concave pmf or pdf (as in many common settings, including for example Poisson likelihoods with mean $\lambda=e^{z_i}$). Let
\[
w_{0i}:=\left.\frac{\partial m_i}{\partial\eta_i}\right.
\]
Then $0<w_{0i}\leq1$.
\end{lemma}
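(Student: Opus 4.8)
The plan is to convert the claim into a statement about the posterior variance of $Z_i$ and then bound that variance via (strong) log-concavity. By Lemma~\ref{lem:score-var-identity}, $w_{0i}=\mathbb{V}_\pi(Z_i)/\sigma^2$, where $\pi(z_i)\propto\phi_\sigma(z_i-\eta_i)\,L(z_i)$ is the normalized posterior density of $Z_i$ given $Y_i=y_i$, with $L(z_i)=p(y_i\mid z_i)$ and $\phi_\sigma$ the mean-zero variance-$\sigma^2$ Gaussian pdf as in Lemma~\ref{lem:score-var-identity}. Hence the two-sided bound $0<w_{0i}\le1$ is equivalent to $0<\mathbb{V}_\pi(Z_i)\le\sigma^2$, and I would establish the lower and upper inequalities separately.

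For strict positivity I would simply argue that $\pi$ is non-degenerate in both cases. In case (i), $L$ is the indicator of a convex interval of positive Lebesgue measure (a half-line or a bounded interval, as in the probit, tobit, and STAR updates), so $\pi$ is a Gaussian restricted to a set of positive length and is not a point mass; in case (ii), $L$ is strictly positive and $C^2$, so $\pi$ is absolutely continuous with support an interval of positive length. In either case $\mathbb{V}_\pi(Z_i)>0$, giving $w_{0i}>0$.

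The substance is the upper bound $\mathbb{V}_\pi(Z_i)\le\sigma^2$, which I would obtain from strong log-concavity of $\pi$. Writing $\pi\propto e^{-V}$ with $V(z)=(z-\eta_i)^2/(2\sigma^2)-\log L(z)$, in case (ii) log-concavity of $L$ gives $(\log L)''\le0$, so $V''(z)=\sigma^{-2}-(\log L)''(z)\ge\sigma^{-2}$; in case (i) the indicator restricts $\pi$ to a convex support on whose interior the curvature is exactly $\sigma^{-2}$. In both cases $\pi$ is $\sigma^{-2}$-strongly log-concave on its (convex) support. Applying the Brascamp--Lieb variance inequality, which states that for a density $\propto e^{-V}$ with $V''\ge\kappa>0$ one has $\mathbb{V}_\pi(g(Z))\le\kappa^{-1}\,\mathbb{E}_\pi[g'(Z)^2]$ for smooth $g$, and choosing $g(z)=z$ yields $\mathbb{V}_\pi(Z_i)\le\kappa^{-1}=\sigma^2$. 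Combined with the score--variance identity, this gives $w_{0i}\le1$.

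The main obstacle is this upper bound; the positivity is essentially immediate. The cleanest route I see is to treat both cases uniformly through strong log-concavity and Brascamp--Lieb, the only subtlety being that case (i) concerns a density supported on a convex set rather than all of $\mathbb{R}$, so one should invoke the version of Brascamp--Lieb for log-concave densities on convex domains (or, equivalently, note directly that a Gaussian truncated to a convex interval has variance at most $\sigma^2$). If one prefers to avoid Brascamp--Lieb, case (i) can be dispatched by the elementary fact that truncating a Gaussian to an interval cannot increase its variance, and case (ii) by a Stein-type integration-by-parts argument; but the strong-log-concavity argument is the most economical and handles both cases at once.
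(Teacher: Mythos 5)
Your proposal is correct and follows essentially the same route as the paper's proof: both reduce the claim to a posterior-variance bound via the score--variance identity of Lemma~\ref{lem:score-var-identity}, establish the upper bound through $\sigma^{-2}$-strong log-concavity of the posterior and the one-dimensional Brascamp--Lieb/Poincar\'e inequality with $f(z)=z$, and get positivity from non-degeneracy of the posterior. The only cosmetic difference is that the paper dispatches case (i) directly by the elementary truncation-reduces-Gaussian-variance fact and reserves Brascamp--Lieb for case (ii), whereas you treat both cases uniformly through strong log-concavity (while noting the elementary alternative), which is an equivalent argument.
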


\begin{proof}
For interval functions on a convex set, the result immediately follows from the definition of $w_{0i}$ derived in Lemma~\ref{lem:score-var-identity}, as the posterior of $z_i$ is a truncated version of its Gaussian prior, with posterior variance necessarily smaller than prior variance $\sigma^2$ due to truncation. For log-concave densities, write the posterior density as $\pi(z_i)\propto e^{-R(z_i)}$ with $R(z_i)=\frac{(z_i-\eta_i)^2}{2\sigma^2}-\log L(z_i)$. Note that $L$ being log-concave implies $\log L''\le 0$, and $R''(z_i)\ge 1/\sigma^2$ follows. Now for a random variable $Z$, if $Z\sim \pi(Z)\propto e^{-R(Z)}$ with smooth $R$, the one-dimensional Brascamp-Lieb/Poincaré inequality gives, for smooth $f$, $\mathbb{V}(f(Z))\le \mathbb{E}\!\left[\frac{(f'(Z))^2}{R''(Z)}\right]$. Applying this with $f(Z)=Z$ yields $\mathbb{V}(Z)\le \mathbb{E}[1/R''(Z)]\le \sigma^2$. By Lemma~\ref{lem:score-var-identity}, and using $\mathbb{V}(z_i)$ as shorthand for the posterior variance of $z_i$, we thus have $w_{0i}=\mathbb{V}(z_i)/\sigma^2\le 1$. With $\sigma^2>0$ and any nondegenerate log-concave likelihood contribution $p(y_i|z_i)$, we have $\mathbb{V}(z_i)>0$ and thus $w_{0i}>0$, proving the result.
\end{proof}

\subsection{Proposition 1: VBC selects the true model asymptotically}
\label{app:proof_vbc_consistency}
Given condition (v) is fulfilled, it is easy to show from (\ref{eq:approx2}) that $\log p_{\text{VBC}}(\bm{y}\mid\mathcal{M}_k)$ asymptotically collapses to the BIC form  
\begin{equation}
\label{eq:bic_form}
    \log p_{\text{VBC}}(\bm{y}\mid\mathcal{M}_k)
=\log p(\bm{y}\mid\hat{\bm\theta}_{k})
-\frac{d_k}{2}\log n+\mathcal{O}(1)
\end{equation}
at the evaluation point $(\hat{\bm{z}}_k,\hat{\bm{\theta}}_k)$. The rest of the proof then follows along the lines of showing asymptotic model selection consistency for the BIC given consistent parameter estimates. From (\ref{eq:bic_form}), for any two models $k$ and $j$, the difference in their VBC values is asymptotically
\[
\mathrm{VBC}_k - \mathrm{VBC}_j = -2\log\frac{p(\bm{y}|\hat{\bm{\theta}}_k)}{p(\bm{y}|\hat{\bm{\theta}}_j)} + (d_k - d_j)\log n + \mathcal{O}(1).
\]

For the true model $k^*$ versus any misspecified model $k \neq k^*$, we have three cases: If model $k$ is nested within $k^*$ (underspecified), the likelihood ratio term dominates with $-2\log\frac{p(\bm{y}|\hat{\bm{\theta}}_k)}{p(\bm{y}|\hat{\bm{\theta}}_{k^*})} = \mathcal{O}(n)$, overwhelming the penalty difference. If model $k$ contains $k^*$ (overspecified), then $d_k > d_{k^*}$ and the penalty term $(d_k - d_{k^*})\log n$ dominates asymptotically. For non-nested models, the true model achieves the highest likelihood asymptotically by assumption. Thus, $\Pr\{\mathrm{VBC}_{k^*} < \mathrm{VBC}_k \text{ for all } k \neq k^*\} \to 1$ as $n \to \infty$.

\subsection{Proposition 2: Asymptotic local shrinkage relationship of AVB and VB}
\label{app:proof_asymptotic_shrinkage}

We will prove the result in two steps. First, we derive a local linear map between AVB and VB parameter estimates at $\bm{\beta}=0$. Second, we show that this map asymptotically collapses to a scalar shrinkage relationship between AVB and VB.

\begin{lemma}[Local linear relationship]\label{lem:local-linear-danskin}
Assume for each $i$, the posterior mean $m_i(\eta):=\E[z_i\mid y_i,\eta]$ is $C^1$ at the null predictor $\eta=\alpha_0$, and define
\[
w_{0i}\ :=\ \left.\frac{\partial m_i}{\partial\eta}\right|_{\eta=\alpha_0}
=\frac{\mathbb{V}(z_i\mid y_i;\,\eta=\alpha_0)}{\sigma^2}\in(0,1],
\qquad
\bm{W}_0:=\diag(w_{01},\ldots,w_{0n}),
\]
using Lemma \ref{lem:score-var-identity}. Let $\bm{X}\in\R^{n\times p}$ have centered columns ($\bm{X}^\top\bm{\iota}=0$) and full column rank. Let $\kappa:=1+\tfrac1g>1$ and let $\hat{\bm\beta}_{\mathrm{VB}}$ and $\hat{\bm\beta}_{\mathrm{AVB}}$ be the maximizers of the profiled ELBOs in $\bm\beta$ for VB and AVB, with VB intercept $\alpha_{\mathrm{VB}}$. Then the VB/AVB first-order conditions admit the local linearization
\[
\hat{\bm\beta}_{\mathrm{AVB}}
=\Big[\bm{I}-\tfrac1\kappa\,(\bm{X}^\top \bm{X})^{-1}\bm{X}^\top \bm{W}_0 \bm{X}\Big]\hat{\bm\beta}_{\mathrm{VB}}
-\frac{\alpha_{\mathrm{VB}}-\alpha_0}{\kappa}\,(\bm{X}^\top \bm{X})^{-1}\bm{X}^\top \bm{W}_0\bm{\iota}
+\bm r_n,
\]
where $\bm r_n=o_p\!\big(\|\hat{\bm\beta}_{\mathrm{VB}}\|+|\alpha_{\mathrm{VB}}-\alpha_0|\big)$. Moreover,
\[
\bm{M}\ :=\ \bm{I}-\tfrac1\kappa\,(\bm{X}^\top \bm{X})^{-1}\bm{X}^\top \bm{W}_0 \bm{X}
\]
is a contraction in the $\bm{X}^\top \bm{X}$-geometry (its eigenvalues lie in $(0,1]$).
\end{lemma}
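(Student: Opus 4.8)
The plan is to read off the first-order (stationarity) conditions that the profiled ELBOs impose on $\bm\beta$ for the two algorithms, and then to linearize the VB condition about the null predictor $\eta=\alpha_0\bm\iota$. For a fixed linear predictor $\eta_i=\alpha+\bm{x}_i^\top\bm\beta$, maximizing the ELBO over the remaining variational factors sends $q(z_i)$ to the exact one-dimensional posterior of $z_i$, so by Danskin's/the envelope theorem the gradient of the profiled ELBO in $\bm\beta$ enters through $q(z_i)$ only via its mean $m_i(\eta_i)=\E[z_i\mid y_i,\eta_i]$. Combined with the conjugate $g$-prior update of Sec.~\ref{sec:cavi_updates}, the stationarity condition in both cases reads $\bm{X}^\top\bm{m}(\eta)=\kappa\,(\bm{X}^\top\bm{X})\bm\beta$ with $\kappa=1+\tfrac1g$; for VB the predictor is $\eta_i^{\mathrm{VB}}=\alpha_{\mathrm{VB}}+\bm{x}_i^\top\hat{\bm\beta}_{\mathrm{VB}}$, whereas for AVB the latent means are frozen at the null, giving $\bm{X}^\top\bm{m}(\alpha_0\bm\iota)=\kappa\,(\bm{X}^\top\bm{X})\hat{\bm\beta}_{\mathrm{AVB}}$.

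The linearization is then mechanical. I would Taylor-expand the VB posterior-mean vector coordinatewise about $\eta=\alpha_0$, writing $m_i(\eta_i^{\mathrm{VB}})=m_i(\alpha_0)+w_{0i}\bigl[(\alpha_{\mathrm{VB}}-\alpha_0)+\bm{x}_i^\top\hat{\bm\beta}_{\mathrm{VB}}\bigr]+R_i$, where $w_{0i}=(\partial m_i/\partial\eta)(\alpha_0)$ is exactly the score--variance derivative of Lemma~\ref{lem:score-var-identity}, so that $\bm{W}_0=\diag(w_{01},\ldots,w_{0n})$. Substituting this expansion into the VB stationarity condition and replacing $\bm{X}^\top\bm{m}(\alpha_0\bm\iota)$ by $\kappa\,(\bm{X}^\top\bm{X})\hat{\bm\beta}_{\mathrm{AVB}}$ through the AVB condition, the leading terms rearrange to $\hat{\bm\beta}_{\mathrm{AVB}}=\bm{M}\hat{\bm\beta}_{\mathrm{VB}}-\tfrac{\alpha_{\mathrm{VB}}-\alpha_0}{\kappa}(\bm{X}^\top\bm{X})^{-1}\bm{X}^\top\bm{W}_0\bm\iota+\bm r_n$, with $\bm{M}=\bm{I}-\tfrac1\kappa(\bm{X}^\top\bm{X})^{-1}\bm{X}^\top\bm{W}_0\bm{X}$ and collected remainder $\bm r_n=-\tfrac1\kappa(\bm{X}^\top\bm{X})^{-1}\bm{X}^\top\bm R$, where $\bm R=(R_1,\ldots,R_n)^\top$.

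The step I expect to be the main obstacle is controlling $\bm r_n$, i.e.\ upgrading the pointwise remainders $R_i=o\!\bigl(|\eta_i^{\mathrm{VB}}-\alpha_0|\bigr)$ into the uniform statement $\bm r_n=o_p\!\bigl(\|\hat{\bm\beta}_{\mathrm{VB}}\|+|\alpha_{\mathrm{VB}}-\alpha_0|\bigr)$. This is where the hypotheses bite: working in a neighborhood of $\bm\beta=\bm 0$, together with consistency of the null-centered fit, forces $\hat{\bm\beta}_{\mathrm{VB}}$ and $\alpha_{\mathrm{VB}}-\alpha_0$ to be $o_p(1)$, while the $C^2$ log-concavity of $p(y_i\mid z_i)$ in case (ii), or the exact truncation structure in case (i), makes $m_i$ locally $C^1$ with a remainder uniformly negligible relative to the increment. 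One then checks that premultiplication by $(\bm{X}^\top\bm{X})^{-1}\bm{X}^\top$ does not inflate the error, using the full-rank and boundedness assumptions on the centered design so that the $n$ individual $o(\cdot)$ contributions aggregate rather than accumulate.

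Finally, for the contraction claim I would argue in the $\bm{X}^\top\bm{X}$-inner product $\langle u,v\rangle=u^\top(\bm{X}^\top\bm{X})v$. Since $(\bm{X}^\top\bm{X})\bm{M}=\bm{X}^\top\bm{X}-\tfrac1\kappa\bm{X}^\top\bm{W}_0\bm{X}$ is symmetric, $\bm{M}$ is self-adjoint in that geometry, and its eigenvalues equal $1-\tfrac1\kappa\lambda_j$, where the $\lambda_j$ are the eigenvalues of $(\bm{X}^\top\bm{X})^{-1/2}\bm{X}^\top\bm{W}_0\bm{X}(\bm{X}^\top\bm{X})^{-1/2}$. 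Lemma~\ref{lem:weight-bounds} gives $0<w_{0i}\le1$, hence $\bm 0\prec\bm{X}^\top\bm{W}_0\bm{X}\preceq\bm{X}^\top\bm{X}$ and $\lambda_j\in(0,1]$; combined with $\tfrac1\kappa\in(0,1)$ this yields $1-\tfrac1\kappa\lambda_j\in(0,1]$, so $\bm{M}$ is a contraction whose spectrum lies in $(0,1]$.
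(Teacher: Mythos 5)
Your proposal is correct and follows essentially the same route as the paper's proof: profiling the ELBO in $\bm\mu$, invoking Danskin's theorem so the gradient enters only through the latent posterior means, Taylor-expanding $m_i$ at the null with slope $w_{0i}$ from Lemma~\ref{lem:score-var-identity}, subtracting the VB and AVB first-order conditions, and bounding the spectrum via $0\preceq\bm{X}^\top\bm{W}_0\bm{X}\preceq\bm{X}^\top\bm{X}$ using Lemma~\ref{lem:weight-bounds}. Your treatment of the remainder aggregation and the explicit self-adjointness argument in the $\bm{X}^\top\bm{X}$-geometry are in fact slightly more careful than the paper's own (terser) handling of these steps, but they are refinements of the same argument rather than a different approach.
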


\begin{proof}
Consider a mean-field variational approximation with Gaussian factor
\(
q_\beta(\bm{\beta})=\mathcal{N}(\bm{\mu}_\beta,\bm{\Omega}_\beta)
\)
and remaining factors
\(
(q_\alpha,\,q_{\sigma^2},\,q_{\bm z})
\).
Write \(\lambda:=(q_\alpha,q_{\sigma^2})\).
Let the ELBO be
\[
\mathcal{L}(\bm{\mu}_\beta,\bm{\Omega}_\beta,\lambda,q_{\bm z}).
\]
We profile this ELBO in \(\bm{\mu}\equiv\bm{\mu}_\beta\):
\begin{align}
\text{VB (full profiling):}\quad
\mathcal{L}^*(\bm{\mu})
&:= \sup_{\bm{\Omega}_\beta,\ \lambda,\ q_{\bm z}}\ \mathcal{L}(\bm{\mu},\bm{\Omega}_\beta,\lambda,q_{\bm z}),\\[2mm]
\text{AVB (fix $q_{\bm z}$ at null):}\quad
\bar{\mathcal{L}}(\bm{\mu})
&:= \sup_{\bm{\Omega}_\beta,\ \lambda}\ \mathcal{L}(\bm{\mu},\bm{\Omega}_\beta,\lambda,\bar q_{\bm z}),
\end{align}
where \(\bar q_{\bm z}\) is the optimizer of \(q_{\bm z}\) in the \emph{full} VB problem at \(\bm{\mu}=\bm{0}\) (the null fit). A direct differentiation using Danskin's theorem yields
\begin{equation}
\label{eq:grad-VB}
\nabla_{\bm{\mu}} \mathcal{L}^*(\bm{\mu})
= \bm{X}^\top \bm{m}^*(\bm{\mu}) \;-\; \kappa\,\bm{X}^\top \bm{X}\,\bm{\mu},
\qquad \kappa:=1+\frac{1}{g}>1.
\end{equation}

in the VB case and

\begin{equation}
\label{eq:grad-AVB}
\nabla_{\bm{\mu}} \bar{\mathcal{L}}(\bm{\mu})
= \bm{X}^\top \bar{\bm m} \;-\; \kappa\,\bm{X}^\top \bm{X}\,\bm{\mu}.
\end{equation}

in the AVB case. By construction, \(\hat q_{\bm z}(\bm{0})=\bar q_{\bm z}\), hence \(\bm{m}^*(\bm{0})=\bar{\bm m}\) and
\(
\nabla_{\bm{\mu}} \mathcal{L}^*(\bm{0})=\nabla_{\bm{\mu}} \bar{\mathcal{L}}(\bm{0})=X^\top \bar{\bm m}.
\)
Hence, the profiled gradients satisfy
\(
\nabla_{\bm\mu}\mathcal L^*(\bm\mu)=\bm{X}^\top \bm m^*(\bm\mu)-\kappa \bm{X}^\top \bm{X}\,\bm\mu
\)
and
\(
\nabla_{\bm\mu}\bar{\mathcal L}(\bm\mu)=\bm{X}^\top \bar{\bm m}-\kappa \bm{X}^\top \bm{X}\,\bm\mu,
\)
with components $m_i(\eta_i)$, $\eta_i=\alpha+\bm x_i^\top\bm\mu$, and $\bar{\bm m}$ from the null ($\eta_i=\alpha=\alpha_0$). A first-order expansion at $\eta_i=\alpha_0$ gives
\(
m_i(\eta_i)=m_i(\alpha_0)+w_{0i}\{(\alpha-\alpha_0)+\bm x_i^\top\bm\mu\}+o(|\alpha-\alpha_0|+\|\bm\mu\|).
\)
Stacking and premultiplying by $\bm{X}^\top$,
\[
\bm{X}^\top \bm m^*(\bm\mu)
= \bm{X}^\top \bar{\bm m}+\bm{X}^\top \bm{W}_0\{(\alpha-\alpha_0)\bm{\iota}+\bm{X}\bm\mu\}
+o(|\alpha-\alpha_0|+\|\bm\mu\|).
\]
Set the VB first order condition (FOC) $0=\bm{X}^\top \bm m^*(\hat{\bm\mu}_{\mathrm{VB}})-\kappa \bm{X}^\top \bm{X}\hat{\bm\mu}_{\mathrm{VB}}$ and the AVB FOC $0=\bm{X}^\top \bar{\bm m}-\kappa \bm{X}^\top \bm{X}\hat{\bm\mu}_{\mathrm{AVB}}$, subtract, and solve for $\hat{\bm\mu}_{\mathrm{AVB}}$ to obtain the desired result. Since $0\preceq \bm{W}_0\preceq \bm{I}$, we have $0\preceq \bm{X}^\top \bm{W}_0 \bm{X}\preceq \bm{X}^\top \bm{X}$, hence the spectrum claim.
\end{proof}

This relationship reveals that, in a local first order approximation, AVB estimates are essentially a generalized shrinkage version of VB estimates. The map between the two implies `leakage' via the off-diagonals of $M$ that potentially lead to rotations away from a scalar shrinkage relationship, as well as a drift term due to the mismatch between the intercept estimates. We find that, in practice and in large samples, this drift and rotation are weak and AVB is approximately a scalar shrinkage version of VB. More formally, we can show that, asymptotically in $n$, this result extends to the first-order approximate scalar shrinkage relationship of AVB and VB formulated in Proposition~\ref{prop:local-shrink-simple}. The following Lemma directly yields the proposition: 

\begin{lemma}[Local linear scalar shrinkage relationship asymptotically] Let the setup of Lemma~\ref{lem:local-linear-danskin} hold. Suppose, at the null ($\bm\beta=\mathbf 0$), $\bm{X}$ is independent of the innovation $\varepsilon$ in $\bm{z}=\alpha+\bm{X}\bm\beta+\varepsilon$ so that $\bm{X}$ is locally independent of $\bm{y}$ and of $\bm{W}_0$; assume $\max_i\|\bm x_i\|^2/n\to 0$, $\|\hat{\bm\beta}_{\mathrm{VB}}\|=o_p(1)$, and $|\alpha_{\mathrm{VB}}-\alpha_0|=o_p(1)$. Let $\E[\bm{W}_0]$ denote the expected value of the diagonal elements of $\bm{W}_0$. Then
\[
(\bm{X}^\top \bm{X})^{-1}\bm{X}^\top \bm{W}_0 \bm{X} \ \xrightarrow{p}\ \E[\bm{W}_0]\;\bm{I}_p,
\qquad
(\bm{X}^\top \bm{X})^{-1}X^\top \bm{W}_0\bm{\iota} \ \xrightarrow{p}\ 0,
\]
with $\E[\bm{W}_0]\in(0,1]$. Consequently,
\[
\;
\hat{\bm\beta}_{\mathrm{AVB}}
=\Big(1-\frac{\E[\bm{W}_0]}{\kappa}\Big)\,\hat{\bm\beta}_{\mathrm{VB}}
\;+\;o_p\!\bigl(\|\hat{\bm\beta}_{\mathrm{VB}}\|\bigr),
\;
\]
i.e., AVB is a first-order \emph{scalar shrinker} of VB with common (model-independent) factor $c:=1-\E[\bm{W}_0]/\kappa\in(0,1)$, and the intercept-induced drift vanishes to first order.
\end{lemma}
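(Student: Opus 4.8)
The plan is to take the finite-$n$ linearization established in Lemma~\ref{lem:local-linear-danskin} as given and reduce the entire statement to the two displayed probability limits, after which the scalar-shrinkage conclusion follows by substitution and by bookkeeping of stochastic orders. The conceptual heart of the argument is that the randomly reweighted Gram matrix $\bm{X}^\top\bm{W}_0\bm{X}$, once normalized by the unweighted Gram matrix $\bm{X}^\top\bm{X}$, collapses to a scalar multiple of the identity. This happens precisely because, at the null $\bm\beta=\bm 0$, the weights $w_{0i}$ are functions of $y_i$ (hence of the innovation $\varepsilon_i$) alone, and are therefore independent of the regressors $\bm x_i$, so the common limiting design covariance cancels in the ratio.

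For the matrix limit I would write $\tfrac1n\bm{X}^\top\bm{W}_0\bm{X}=\tfrac1n\sum_i w_{0i}\,\bm x_i\bm x_i^\top$ and invoke a weak law of large numbers for triangular arrays. The required ingredients are all in place: the weights satisfy $0<w_{0i}\le 1$ by Lemma~\ref{lem:weight-bounds} (so they are uniformly bounded), they are independent of $\bm x_i$ by the null-independence assumption, and the negligibility condition $\max_i\|\bm x_i\|^2/n\to 0$ controls the array so that $\tfrac1n\sum_i w_{0i}\bm x_i\bm x_i^\top\xrightarrow{p}\E[W_0]\,\bm\Sigma$, where $\bm\Sigma:=\mathrm{plim}\,\tfrac1n\bm{X}^\top\bm{X}$ is the (positive definite) limiting design covariance. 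Combining this with $\tfrac1n\bm{X}^\top\bm{X}\xrightarrow{p}\bm\Sigma$ and the continuous mapping theorem gives $(\bm{X}^\top\bm{X})^{-1}\bm{X}^\top\bm{W}_0\bm{X}=(\tfrac1n\bm{X}^\top\bm{X})^{-1}(\tfrac1n\bm{X}^\top\bm{W}_0\bm{X})\xrightarrow{p}\bm\Sigma^{-1}\E[W_0]\bm\Sigma=\E[W_0]\,\bm{I}_p$, the cancellation of $\bm\Sigma$ being the crux. For the drift vector I would treat $\tfrac1n\bm{X}^\top\bm{W}_0\bm\iota=\tfrac1n\sum_i w_{0i}\bm x_i$ analogously: by null-independence and the centering of the columns of $\bm X$ (so that $\E[\bm x_i]=\bm 0$), each summand has mean $\E[W_0]\E[\bm x_i]=\bm 0$, and a Chebyshev/CLT argument with bounded weights gives $\tfrac1n\sum_i w_{0i}\bm x_i=O_p(n^{-1/2})\xrightarrow{p}\bm 0$; premultiplying by the convergent $(\tfrac1n\bm{X}^\top\bm{X})^{-1}\to\bm\Sigma^{-1}$ preserves the limit $\bm 0$.

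Substituting both limits into the linearization finishes the argument. The matrix term becomes $\bm{M}\hat{\bm\beta}_{\mathrm{VB}}=(1-\E[W_0]/\kappa)\hat{\bm\beta}_{\mathrm{VB}}+(\bm{M}-c\,\bm{I})\hat{\bm\beta}_{\mathrm{VB}}$ with $c:=1-\E[W_0]/\kappa$, and since $\|\bm{M}-c\,\bm{I}\|=o_p(1)$ the second piece is $o_p(\|\hat{\bm\beta}_{\mathrm{VB}}\|)$; the drift term is the product of the $o_p(1)$ intercept mismatch $(\alpha_{\mathrm{VB}}-\alpha_0)$ and the $O_p(n^{-1/2})$ vanishing vector, hence asymptotically negligible. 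Finally $c\in(0,1)$ follows because $\E[W_0]\in(0,1]$ (again from Lemma~\ref{lem:weight-bounds}) and $\kappa>1$.

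I expect the main obstacle to be the bookkeeping of stochastic orders in the \emph{relative}-error sense, namely showing that the residual and drift contributions are $o_p$ relative to $\|\hat{\bm\beta}_{\mathrm{VB}}\|$ rather than merely $o_p(1)$; this is delicate precisely in the local-to-null regime where $\|\hat{\bm\beta}_{\mathrm{VB}}\|$ may itself shrink, so one must argue that the $O_p(n^{-1/2})$ drift is dominated by the signal magnitude carried in $\hat{\bm\beta}_{\mathrm{VB}}$. On the probabilistic side, the secondary difficulty is justifying the triangular-array weak law for the weighted Gram matrix with non-identically-distributed but bounded weights under only the Lindeberg-type negligibility condition $\max_i\|\bm x_i\|^2/n\to 0$, which is what guarantees that no single observation dominates the reweighting.
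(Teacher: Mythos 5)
Your proposal follows essentially the same route as the paper's proof: both establish the two probability limits via a law of large numbers exploiting null-independence of $\bm{W}_0$ from $\bm{X}$ together with the centering $\bm{X}^\top\bm{\iota}=\bm 0$, then substitute into the linearization of Lemma~\ref{lem:local-linear-danskin} and invoke Lemma~\ref{lem:weight-bounds} to place $\E[\bm{W}_0]\in(0,1]$ and hence $c\in(0,1)$. If anything, your version is more careful than the paper's (explicit triangular-array LLN under the negligibility condition, the $O_p(n^{-1/2})$ rate for the drift vector, and the relative-order issue you flag for the $o_p(\|\hat{\bm\beta}_{\mathrm{VB}}\|)$ remainder, which the paper's own proof also passes over by only showing the drift is $o_p(1)$).
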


\begin{proof}
With local independence at the null, by LLN and centered columns,
\[
\frac{1}{n}\bm{X}^\top \bm{W}_0 \bm{X} \to_p \E[\bm{W}_0]\cdot \frac{1}{n}\bm{X}^\top \bm{X},
\qquad
\frac{1}{n}\bm{X}^\top \bm{W}_0\bm{\iota} \to_p \E[\bm{W}_0]\cdot \frac{1}{n}\bm{X}^\top \bm{\iota}=0.
\]
Premultiplying by $(\bm{X}^\top \bm{X})^{-1}$ yields the stated limits. Insert these into Lemma~\ref{lem:local-linear-danskin}; the drift term is $(\alpha_{\mathrm{VB}}-\alpha_0)\,(\bm{X}^\top \bm{X})^{-1}\bm{X}^\top \bm{W}_0\bm{\iota}$, which converges in probability to $0$ because $\alpha_{\mathrm{VB}}-\alpha_0=o_p(1)$ and $(\bm{X}^\top \bm{X})^{-1}\bm{X}^\top \bm{W}_0\bm{\iota}\to_p 0$ due to independence of $\bm{X}$ and $\bm{W}_0$ at the null and centered columns of $\bm{X}$ yielding $\bm{X}'\bm{\iota}=0$. The scalar limit for the slope map follows, with $0<\E[\bm{W}_0]\leq1$ following from Lemma~\ref{lem:weight-bounds}.
\end{proof}

\subsection{Proposition 3: VBC inherits VB model selection consistency up to first order}
\label{app:proof_avb_inherits_consistency}

We will first show that, asymptotically and approximately in a neighborhood of $\bm{\beta}=0$, AVB and VB lead to the same likelihood ordering across models $\mathcal{M}_j$. The stated result then follows from the fact that asymptotically, the only differences between model assessments are based on the likelihood ordering. The following lemma is stated for the observed-data log likelihood and is used below for deterministic-link models, where the leading VBC term coincides with the observed-data likelihood via condition (v), which is the only case where we have results for model selection consistency of VBC.

Let $\ell_k(\bm\theta_k):=\log p(\bm y\mid \bm\theta_k,\mathcal M_k)$, with
$\bm\theta_k=(\alpha,\bm\beta_k,\sigma^2)$, and let $(\alpha_0,\sigma_0^2)$ denote the null values.
Condition~(v) and (\ref{eq:asymptotic_vbc}) imply the VBC \emph{BIC-form} for both VB and AVB estimators:
\begin{equation}
\label{eq:S-VBIC}
-2\log p_{\mathrm{VBC},k}^{(\cdot)}(\bm y)
=
-2\,\ell_k\!\big(\bm\theta_k^{(\cdot)}\big) + d_k\log n + O_p(1),
\qquad (\cdot)\in\{\mathrm{VB},\mathrm{AVB}\},
\end{equation}
where $\bm\theta_k^{\mathrm{VB}}=(\hat\alpha_k,\hat{\bm\beta}_k,\hat\sigma_k^2)$ and
$\bm\theta_k^{\mathrm{AVB}}=(\tilde\alpha_k,\tilde{\bm\beta}_k,\tilde\sigma_k^2)$ are the respective plug-ins. We use the local, model-independent scalar-shrinkage relation (Proposition~\ref{prop:local-shrink-simple}):
\begin{equation}
\label{eq:S-shrink}
\tilde{\bm\beta}_k
= c\,\hat{\bm\beta}_k + o_p\!\big(\|\hat{\bm\beta}_k\|\big),
\qquad c\in(0,1),
\end{equation}
uniformly over $k$ in a neighborhood of $\bm\beta=\bm 0$. We also use the standard second-order expansion around the null:
\begin{equation}
\label{eq:S-taylor}
\ell_k(\alpha_0,\bm\beta_k,\sigma_0^2)-\ell_k(\alpha_0,\bm 0,\sigma_0^2)
= s_k^\top \bm\beta_k-\tfrac12\,\bm\beta_k^\top H_k\,\bm\beta_k + r_k(\bm\beta_k),
\qquad r_k(\bm\beta_k)=o(\|\bm\beta_k\|^2),
\end{equation}
with $H_k$ positive definite.

\begin{lemma}[Equivalent likelihood orderings under scalar shrinkage]
\label{lem:S-like-order}
Under \eqref{eq:S-shrink}--\eqref{eq:S-taylor}, for any models $k,\ell$,
\[
\ell_k(\alpha_0,\tilde{\bm\beta}_k,\sigma_0^2)-\ell_\ell(\alpha_0,\tilde{\bm\beta}_\ell,\sigma_0^2)
=
(2c-c^2)\,
\Big(\ell_k(\alpha_0,\hat{\bm\beta}_k,\sigma_0^2)-\ell_\ell(\alpha_0,\hat{\bm\beta}_\ell,\sigma_0^2)\Big)
+ o_p(1),
\]
with $(2c-c^2)\in(0,1)$ independent of $k,\ell$. In particular, pairwise likelihood orderings at $(\alpha_0,\sigma_0^2)$ are preserved with probability $\to1$ as $n\to\infty$.
\end{lemma}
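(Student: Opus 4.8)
The plan is to reduce the whole statement to one elementary fact about maximized quadratics. First I would work with the null-centered profile $L_k(\bm\beta):=\ell_k(\alpha_0,\bm\beta,\sigma_0^2)-\ell_k(\alpha_0,\bm 0,\sigma_0^2)$ and note that the anchor $\ell_k(\alpha_0,\bm 0,\sigma_0^2)$ is the same for every model, since at $\bm\beta=\bm 0$ each $\mathcal M_k$ collapses to the common intercept-only null fit. Hence the two cross-model differences appearing in the statement are unchanged when $\ell_k$ is replaced by $L_k$, and it suffices to relate $L_k(\tilde{\bm\beta}_k)$ to $L_k(\hat{\bm\beta}_k)$ model by model. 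By \eqref{eq:S-taylor}, on the shrinking neighborhood of $\bm\beta=\bm 0$ the function $L_k$ agrees to second order with the concave quadratic $Q_k(\bm\beta):=s_k^\top\bm\beta-\tfrac12\bm\beta^\top H_k\bm\beta$.

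The crux is the scalar identity that for any concave quadratic $Q(\bm\beta)=s^\top\bm\beta-\tfrac12\bm\beta^\top H\bm\beta$ with maximizer $\hat{\bm\beta}=H^{-1}s$, one has $Q(\hat{\bm\beta})=\tfrac12 s^\top H^{-1}s$ and $Q(c\hat{\bm\beta})=(c-\tfrac{c^2}{2})\,s^\top H^{-1}s=(2c-c^2)\,Q(\hat{\bm\beta})$. To bring this to bear I would establish the ingredient that the VB maximizer is, to first order, the argmax of $Q_k$, i.e.\ the stationarity relation $s_k=H_k\hat{\bm\beta}_k+o_p(\|\hat{\bm\beta}_k\|)$. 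This is precisely where $g=n$ matters: the unit-information $g$-prior contributes only $O_p(1)$ to the profiled objective and is asymptotically dominated by the $O_p(n)$ likelihood curvature, so the VB first-order condition collapses to the likelihood score equation (consistent with the factor $g/(1+g)\to 1$ in the CAVI updates of Sec.~\ref{sec:cavi_updates}). A naive substitution of the shrinkage map into $Q_k$ would \emph{not} reproduce the factor $2c-c^2$; it is exactly this optimality of $\hat{\bm\beta}_k$ that makes the linear and quadratic pieces commensurate and yields the clean scaling.

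With this in hand I would substitute the scalar-shrinkage relation \eqref{eq:S-shrink}, $\tilde{\bm\beta}_k=c\hat{\bm\beta}_k+o_p(\|\hat{\bm\beta}_k\|)$, into the quadratic and apply the identity to obtain, uniformly over the finite model set, $L_k(\tilde{\bm\beta}_k)=(2c-c^2)\,L_k(\hat{\bm\beta}_k)+o_p(\|\hat{\bm\beta}_k\|^2)$. Differencing over $k,\ell$ and cancelling the common null anchor then gives the displayed equality, with $2c-c^2=1-(1-c)^2\in(0,1)$ for the $c\in(0,1)$ delivered by the preceding lemma and Lemma~\ref{lem:weight-bounds}. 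Because this common factor is strictly positive, any VB likelihood gap that is bounded away from zero is reproduced with the same sign by AVB once the remainder is dominated, so pairwise orderings --- and hence the $\arg\min$ over the finite collection --- are preserved with probability tending to one.

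The main obstacle is the honest bookkeeping of remainder orders, which live in two regimes that coexist here. The scalar-shrinkage map and the quadratic expansion are valid only in a shrinking neighborhood of $\bm\beta=\bm 0$, whereas the consistency payoff compares the true model against under- and over-specified competitors, where likelihood gaps are $O_p(n)$ or are instead governed by the $d_k\log n$ penalty of \eqref{eq:S-VBIC}. I would therefore argue only at the level of preserved \emph{signs} of pairwise comparisons rather than magnitudes, lean on the uniformity over $k$ already granted by \eqref{eq:S-shrink} and the finiteness of $\{\mathcal M_k\}$, and treat the result --- as the paper itself stresses --- as a local, first-order statement rather than a global guarantee. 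Controlling the interaction between the shrinkage-induced factor $2c-c^2<1$ on the likelihood part and the unshrunk penalty term (the mechanism behind the caveat that AVB requires more likelihood information than VB) is the subtle point, but for this lemma it is confined to the likelihood ordering and is resolved by the strict positivity of $2c-c^2$.
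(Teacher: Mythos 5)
Your proposal follows essentially the same route as the paper's own proof: both arguments rest on the parabola identity $f_k(c\,\bm\beta_k^\star)=(2c-c^2)\,f_k(\bm\beta_k^\star)$ for the quadratic part of the expansion \eqref{eq:S-taylor}, the first-order stationarity of the VB estimator ($\hat{\bm\beta}_k=H_k^{-1}s_k+o_p(n^{-1/2})$, which the paper assumes directly and you justify via domination of the $g$-prior contribution), substitution of the scalar-shrinkage relation \eqref{eq:S-shrink}, and cancellation of the common null anchor $\ell_k(\alpha_0,\bm 0,\sigma_0^2)$ when differencing across models. Your additional remainder bookkeeping and the remark that optimality of $\hat{\bm\beta}_k$ is what produces the clean factor $2c-c^2$ are elaborations of, not departures from, the paper's argument.
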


\begin{proof}[Proof of Lemma~\ref{lem:S-like-order}]
Let $f_k(\bm\beta):=s_k^\top \bm\beta-\tfrac12\,\bm\beta^\top H_k\,\bm\beta$ be the quadratic part in \eqref{eq:S-taylor}. Its unique maximizer is $\bm\beta_k^\star=H_k^{-1}s_k$, and for any $c\in\mathbb R$,
\begin{equation}
\label{eq:S-parabola}
f_k(c\,\bm\beta_k^\star)
= c\,s_k^\top\bm\beta_k^\star - \tfrac12 c^2\,{\bm\beta_k^\star}^\top H_k \bm\beta_k^\star
= (2c-c^2)\,f_k(\bm\beta_k^\star).
\end{equation}
Assume $\hat\beta_k = H_k^{-1}s_k + o_p(n^{-1/2})$ uniformly in $k$.
Using \eqref{eq:S-shrink}, $\tilde{\bm\beta}_k = c\,\bm\beta_k^\star + o_p(\|\bm\beta_k^\star\|)$.
Plugging into \eqref{eq:S-taylor} at $(\alpha_0,\sigma_0^2)$ gives
\begin{align*}
&\ell_k(\alpha_0,\hat{\bm\beta}_k,\sigma_0^2)-\ell_k(\alpha_0,\bm 0,\sigma_0^2)
= f_k(\bm\beta_k^\star)+o_p(1),\\
&\ell_k(\alpha_0,\tilde{\bm\beta}_k,\sigma_0^2)-\ell_k(\alpha_0,\bm 0,\sigma_0^2)
= f_k(c\,\bm\beta_k^\star)+o_p(1).
\end{align*}
Apply \eqref{eq:S-parabola} and subtract the two models’ displays (the common baseline
$\ell(\alpha_0,\bm 0,\sigma_0^2)$ cancels) to obtain the claim, uniformly in $k,\ell$.
\end{proof}

\textbf{Proof of Proposition~\ref{prop:avb-inherits}.}
By Lemma~\ref{lem:S-like-order}, pairwise \emph{likelihood} orderings at $(\alpha_0,\sigma_0^2)$ are preserved (with probability $\to1$) between VB and AVB.
From \eqref{eq:S-VBIC}, for any $k,\ell$ within a fixed method $(\cdot)$,
\[
-2\log p_{\mathrm{VBC},k}^{(\cdot)}(\bm y) + 2\log p_{\mathrm{VBC},\ell}^{(\cdot)}(\bm y)
=
-2\Big(\ell_k(\bm\theta_k^{(\cdot)})-\ell_\ell(\bm\theta_\ell^{(\cdot)})\Big)
+ (d_k-d_\ell)\log n + O_p(1).
\]
Thus the \emph{VBC} pairwise ordering differs from the likelihood ordering only by the deterministic penalty $(d_k-d_\ell)\log n$ (identical for VB and AVB) plus an $O_p(1)$ remainder. Since likelihood orderings are preserved by Lemma~\ref{lem:S-like-order}, VBC orderings are preserved as well with probability $\to1$ as $n\to\infty$.
Because the VB-based VBC selector is consistent by Proposition~\ref{lem:vbc_consistency}, the AVB-based VBC selector is also locally and first-order model-selection consistent:
\[
\Pr\!\left\{\arg\min_{k}\mathrm{VBC}_k^{\mathrm{AVB}}=k^*\right\}\to1.
\]
\qedhere

\subsection{Proposition 4: ELBO and VBC equivalence in probit case}\label{app:lemma4}

For the probit model there is an exact equivalence between the ELBO and the VBC at $(\bm{\hat\theta}, \bm{\hat z})$. Let $\bm{\theta} = (\alpha, \bm{\beta})$. Note that $p(y_i\mid z_i)=1$ for every $z_i$ in the support of $q(\bm{z})$. Therefore, the ELBO can be written as
\[
\mathcal{L} = \mathbb{E}_{q(\bm{\theta})q(\bm{z})}\left[\log\frac{p(\bm{z}\mid\bm{\theta})p(\bm{\theta})}{q(\bm{\theta})q(\bm{z})}\right],
\]
which is equal to
\[
\mathcal{L} = \mathbb{E}_{q(\bm{\theta})q(\bm{z})}\left[\log p(\bm{z}\mid\bm{\theta})\right] + \mathbb{E}_{q(\bm{\theta})}\left[\log p(\bm{\theta})\right] - \mathbb{E}_{q(\bm{\theta})}\left[\log q(\bm{\theta})\right] - \mathbb{E}_{q(\bm{z})}\left[\log q(\bm{z})\right].
\]

All the densities involved are either Gaussian or truncated Gaussian. For Gaussians, the log probability density functions are at most quadratic in the parameters of interest, which permits an exact second-order Taylor expansion. This property also applies to truncated Gaussians provided that the expansion point is chosen where the log pdf is twice continuously differentiable (thus excluding evaluations at the mode if it coincides with a truncation boundary). For the VBC, we evaluate at the expected value of the variational density. For a truncated Gaussian, this expectation necessarily lies in the region where the log pdf is twice differentiable and quadratic as long as we have positive variance, ensuring that the second-order Taylor expansion is exact at this point. 

Let $\hat{\bm{z}} = \mathbb{E}_{q(\bm{z})}[\bm{z}]$ and $\hat{\bm{\theta}} = \mathbb{E}_{q(\bm{\theta})}[\bm{\theta}]$. Expanding each term separately around $(\hat{\bm{z}}, \hat{\bm{\theta}})$ yields
\[
\begin{aligned}
\log p(\bm{z}\mid\bm{\theta}) &= \log p(\hat{\bm{z}}\mid\hat{\bm{\theta}}) + \nabla_{\bm{z},\bm{\theta}}\log p(\hat{\bm{z}}\mid\hat{\bm{\theta}})^\top \begin{pmatrix} \bm{z} - \hat{\bm{z}} \\ \bm{\theta} - \hat{\bm{\theta}} \end{pmatrix} \\
&\quad + \frac{1}{2}\begin{pmatrix} \bm{z} - \hat{\bm{z}} \\ \bm{\theta} - \hat{\bm{\theta}} \end{pmatrix}^\top \nabla^2_{\bm{z},\bm{\theta}}\log p(\hat{\bm{z}}\mid\hat{\bm{\theta}}) \begin{pmatrix} \bm{z} - \hat{\bm{z}} \\ \bm{\theta} - \hat{\bm{\theta}} \end{pmatrix},
\end{aligned}
\]
\[
\begin{aligned}
\log p(\bm{\theta}) &= \log p(\hat{\bm{\theta}}) + \nabla_{\bm{\theta}}\log p(\hat{\bm{\theta}})^\top (\bm{\theta} - \hat{\bm{\theta}}) \\
&\quad + \frac{1}{2} (\bm{\theta} - \hat{\bm{\theta}})^\top \nabla^2_{\bm{\theta}}\log p(\hat{\bm{\theta}}) (\bm{\theta} - \hat{\bm{\theta}}),
\end{aligned}
\]
\[
\begin{aligned}
\log q(\bm{\theta}) &= \log q(\hat{\bm{\theta}}) + \nabla_{\bm{\theta}}\log q(\hat{\bm{\theta}})^\top (\bm{\theta} - \hat{\bm{\theta}}) \\
&\quad + \frac{1}{2} (\bm{\theta} - \hat{\bm{\theta}})^\top \nabla^2_{\bm{\theta}}\log q(\hat{\bm{\theta}}) (\bm{\theta} - \hat{\bm{\theta}}),
\end{aligned}
\]
\[
\begin{aligned}
\log q(\bm{z}) &= \log q(\hat{\bm{z}}) + \nabla_{\bm{z}}\log q(\hat{\bm{z}})^\top (\bm{z} - \hat{\bm{z}}) \\
&\quad + \frac{1}{2} (\bm{z} - \hat{\bm{z}})^\top \nabla^2_{\bm{z}}\log q(\hat{\bm{z}}) (\bm{z} - \hat{\bm{z}}).
\end{aligned}
\]

To recover the ELBO, take expectations over these expansions. Notice that the zero-order terms, such as $\log p(\hat{\bm{z}}\mid\hat{\bm{\theta}})$ or $\log p(\hat{\bm{\theta}})$, are constant with respect to $q(\bm{z})$ and $q(\bm{\theta})$ and can thus be moved outside the expectations. Additionally, the first-order terms vanish in expectation.  Thus, we can write
\begin{equation}
\mathbb{E}_{q(\bm{\theta})q(\bm{z})}\left[\log\frac{p(\bm{z}\mid\bm{\theta})p(\bm{\theta})}{q(\bm{\theta})q(\bm{z})}\right] = \log\frac{p(\hat{\bm{z}}\mid\hat{\bm{\theta}})p(\hat{\bm{\theta}})}{q(\hat{\bm{\theta}})q(\hat{\bm{z}})} + C,
\end{equation}
where $C$ collects the second-order terms of the Taylor expansions. Note the first term on the right-hand side is equal to $\log p_{\text{VBC}}(\bm{y})$. We now proceed to show that for the probit case and under a multivariate Gaussian prior $p(\bm{\theta})$, these second-order terms cancel exactly, i.e., $C=0$. 

The second order term with respect to the truncated Gaussian $q(\bm{z})$ can be worked out to be 
\begin{equation}
\frac{1}{2}(\bm{z}-\hat{\bm{z}})^\top \nabla^2_{\bm{z}} \log q(\hat{\bm{z}})(\bm{z}-\hat{\bm{z}}) = -\frac{1}{2}\sum_{i=1}^n (z_i-\hat{z}_i)^2
\end{equation}
and thus
\begin{equation}
    \mathbb{E}_{q(\bm{z})}\left[-\frac{1}{2}\sum_{i=1}^n (z_i-\hat{z}_i)^2\right]
= -\frac{1}{2}\sum_{i=1}^n \mathbb{V}_{q(z_i)}[z_i].
\end{equation}

For the multivariate Gaussian $q(\bm{\theta})$ with mean $\bm{\mu}_{\bm{\theta}}$ and covariance $\bm{V}_{\bm{\theta}}$, where $\bm{\mu}_{\bm{\theta}} = \hat{\bm{\theta}}$, the second order term in the Taylor expansion is
\[
\frac{1}{2} (\bm{\theta}-\hat{\bm{\theta}})^\top \nabla^2_{\bm{\theta}} \log q(\hat{\bm{\theta}}) (\bm{\theta}-\hat{\bm{\theta}})
= -\frac{1}{2} (\bm{\theta}-\hat{\bm{\theta}})^\top \bm{V}_{\bm{\theta}}^{-1} (\bm{\theta}-\hat{\bm{\theta}}),
\]
where $\bm{V}_{\bm{\theta}}$ is the covariance matrix of $q(\bm{\theta})$.

Taking the expectation under $q(\bm{\theta})$ yields
\[
\mathbb{E}_{q(\bm{\theta})}\!\left[-\frac{1}{2} (\bm{\theta}-\hat{\bm{\theta}})^\top \bm{V}_{\bm{\theta}}^{-1} (\bm{\theta}-\hat{\bm{\theta}})\right]
= -\frac{1}{2}\operatorname{tr}\Bigl(\bm{V}_{\bm{\theta}}^{-1}\mathbb{E}_{q(\bm{\theta})}[(\bm{\theta}-\hat{\bm{\theta}})(\bm{\theta}-\hat{\bm{\theta}})^\top]\Bigr)
= -\frac{1}{2}\operatorname{tr}\bigl(\bm{V}_{\bm{\theta}}^{-1}\bm{V}_{\bm{\theta}}\bigr)
= -\frac{d}{2},
\]
where $d=p+1$ is the dimensionality of $\bm{\theta} = (\alpha, \bm{\beta})$. 

For the prior part, expanding $\log p(\bm{\theta})$ around $\hat{\bm{\theta}}$ gives the second-order term
\[
\frac{1}{2} (\bm{\theta}-\hat{\bm{\theta}})^\top \nabla^2_{\bm{\theta}} \log p(\hat{\bm{\theta}}) (\bm{\theta}-\hat{\bm{\theta}})
= -\frac{1}{2} (\bm{\theta}-\hat{\bm{\theta}})^\top \bm{V}_0^{-1} (\bm{\theta}-\hat{\bm{\theta}}),
\]
where $\bm{V}_0$ is the prior covariance matrix. Taking the expectation with respect to $q(\bm{\theta})$, we obtain
\[
\mathbb{E}_{q(\bm{\theta})}\!\left[-\frac{1}{2} (\bm{\theta}-\hat{\bm{\theta}})^\top \bm{V}_0^{-1} (\bm{\theta}-\hat{\bm{\theta}})\right]
= -\frac{1}{2}\operatorname{tr}\Bigl(\bm{V}_0^{-1}\mathbb{E}_{q(\bm{\theta})}\bigl[(\bm{\theta}-\hat{\bm{\theta}})(\bm{\theta}-\hat{\bm{\theta}})^\top\bigr]\Bigr)
= -\frac{1}{2}\operatorname{tr}\bigl(\bm{V}_0^{-1}\bm{V}_{\bm{\theta}}\bigr).
\]

For the $\log p(\bm{z}|\bm{\theta})$ terms, note that the scale parameter of these Gaussians is fixed to $1$ for the probit model. Hence, for each $i$, the second derivative with respect to $z_i$ is
\[
\frac{\partial^2}{\partial z_i^2}\log p(z_i\mid\bm{\theta}) = -1,
\]
so that the Hessian with respect to $\bm{z}$ is
\[
\nabla^2_{\bm{z},\bm{z}}\log p(\bm{z}\mid\bm{\theta}) = -\bm{I}_n.
\]

For $\bm{\theta}$, differentiating twice with respect to $\bm{\theta}$ yields
\[
\nabla^2_{\bm{\theta},\bm{\theta}}\log p(\bm{z}\mid\bm{\theta}) = -\sum_{i=1}^n \bm{W}_i \bm{W}_i^\top
= -\bm{W}^\top \bm{W},
\]
where $\bm{W} = (\bm{\iota}_n, \bm{X})$.

For the cross-derivatives, note that in expectation with respect to $q(\bm{z})q(\bm{\theta})$, the cross terms vanish, e.g.:
\[
\mathbb{E}_{q(\bm{z})q(\bm{\theta})}\Bigl[(\bm{z}-\hat{\bm{z}})^\top \nabla^2_{\bm{z},\bm{\theta}}\log p(\hat{\bm{z}},\hat{\bm{\theta}})(\bm{\theta}-\hat{\bm{\theta}})\Bigr] 
= \Bigl(\mathbb{E}_{q(\bm{z})}[\bm{z}-\hat{\bm{z}}]\Bigr)^\top \nabla^2_{\bm{z},\bm{\theta}}\log p(\hat{\bm{z}},\hat{\bm{\theta}}) \mathbb{E}_{q(\bm{\theta})}[\bm{\theta}-\hat{\bm{\theta}}] = 0.
\]

For the $\bm{z}$-dependent part we have:
\[
\frac{1}{2}(\bm{z}-\hat{\bm{z}})^\top \bigl(-\bm{I}_n\bigr)(\bm{z}-\hat{\bm{z}})
= -\frac{1}{2}\sum_{i=1}^n (z_i-\hat{z}_i)^2.
\]
Taking the expectation with respect to $q(\bm{z})$ gives:
\[
\mathbb{E}_{q(\bm{z})}\left[-\frac{1}{2}\sum_{i=1}^n (z_i-\hat{z}_i)^2\right]
= -\frac{1}{2}\sum_{i=1}^n \mathbb{E}_{q(z_i)}\left[(z_i-\hat{z}_i)^2\right] = -\frac{1}{2}\sum_{i=1}^n \mathbb{V}_{q(z_i)}[z_i].
\]

Next, for the $\bm{\theta}$-dependent part:
\[
\frac{1}{2}(\bm{\theta}-\hat{\bm{\theta}})^\top \bigl(-\bm{W}^\top \bm{W}\bigr)(\bm{\theta}-\hat{\bm{\theta}})
= -\frac{1}{2}(\bm{\theta}-\hat{\bm{\theta}})^\top \bm{W}^\top \bm{W} (\bm{\theta}-\hat{\bm{\theta}}).
\]
Taking the expectation with respect to $q(\bm{\theta})$ yields:
\[
\mathbb{E}_{q(\bm{\theta})}\left[-\frac{1}{2}(\bm{\theta}-\hat{\bm{\theta}})^\top \bm{W}^\top \bm{W} (\bm{\theta}-\hat{\bm{\theta}})\right]
= -\frac{1}{2}\operatorname{tr}\Bigl(\bm{W}^\top \bm{W}\bm{V}_{\bm{\theta}}\Bigr).
\]

Thus, the total expected second order contribution from $\log p(\bm{z}\mid\bm{\theta})$ is:
\[
\begin{aligned}
\mathbb{E}_{q(\bm{z})q(\bm{\theta})}\Biggl[&\frac{1}{2}(\bm{z}-\hat{\bm{z}})^\top \bigl(-\bm{I}_n\bigr)(\bm{z}-\hat{\bm{z}})
+ \frac{1}{2}(\bm{\theta}-\hat{\bm{\theta}})^\top \bigl(-\bm{W}^\top \bm{W}\bigr)(\bm{\theta}-\hat{\bm{\theta}}) \Biggr] \\
&= -\frac{1}{2}\sum_{i=1}^n \mathbb{V}_{q(z_i)}[z_i]
-\frac{1}{2}\operatorname{tr}\Bigl(\bm{W}^\top \bm{W}\bm{V}_{\bm{\theta}}\Bigr).
\end{aligned}
\]

Collecting all quadratic contributions in $C$ then gives:
\begin{equation}
\begin{aligned}
C &= -\frac{1}{2}\sum_{i=1}^n \mathbb{V}_{q(z_i)}[z_i]
-\frac{1}{2}\operatorname{tr}\Bigl(\bm{W}^\top \bm{W}\bm{V}_{\bm{\theta}}\Bigr) -\frac{1}{2}\operatorname{tr}\bigl(\bm{V}_0^{-1}\bm{V}_{\bm{\theta}}\bigr) + \frac{1}{2}\sum_{i=1}^n \mathbb{V}_{q(z_i)}[z_i] + \frac{d}{2}\\
&=-\frac{1}{2}\operatorname{tr}\Bigl(\bm{W}^\top \bm{W}\bm{V}_{\bm{\theta}}\Bigr) -\frac{1}{2}\operatorname{tr}\bigl(\bm{V}_0^{-1}\bm{V}_{\bm{\theta}}\bigr) + \frac{d}{2}\\
&=-\frac{1}{2}\operatorname{tr}\bigl((\bm{W}^\top \bm{W} +\bm{V}_0^{-1})\bm{V}_{\bm{\theta}}\bigr) + \frac{d}{2}.
\end{aligned}
\end{equation}

From the variational update equations, we have that $\bm{V}_{\bm{\theta}} = (\bm{W}^\top \bm{W} +\bm{V}_0^{-1})^{-1}$. Therefore:
\begin{equation}
\begin{aligned}
C &= -\frac{1}{2}\operatorname{tr}\bigl((\bm{W}^\top \bm{W} +\bm{V}_0^{-1})\bm{V}_{\bm{\theta}}\bigr) + \frac{d}{2}\\
&= -\frac{1}{2}\operatorname{tr}\bigl(\bm{I}_d\bigr) + \frac{d}{2} = -\frac{d}{2} + \frac{d}{2} = 0.
\end{aligned}
\end{equation}

\subsection{Proposition 6: Model selection consistency (probit)}
\label{app:coroll2}

    For the probit model, asymptotic model selection consistency of the VBC directly follows from the results of \citet{zhang2024bayesian}, due to equivalence of $\log p_{\text{VBC}}(\bm{y})$ and the ELBO (Proposition~\ref{lem:probit_vbc_elbo}). However, it is instructive to approach model selection consistency from a different perspective as well: For the probit model, note that $p(y_i|z_i)=1$ for any $z_i$ in the support of $q(\bm{z})$. Let $\bm{\theta} = (\alpha, \bm{\beta})$. Due to Gaussian $p(z_i|\bm{\theta})$ and truncated Gaussian $q(z_i)$, it is easy to show that $\sum_i\log p(z_i|\bm{\theta}) - \log q(z_i) = \sum_{y_i=0}\log(1-\Phi(\mu_i)) + \sum_{y_i=1}\log\Phi(\mu_i) = \log p(\bm{y}|\bm{\theta})$ at the evaluation point $(\hat{ \bm{z}},\hat{\bm{\theta}})$, due to both $p(z_i|\bm{\theta})$ (explicitly) and $q(z_i)$ (implicitly) being evaluated using the same variational location parameter $\hat{\mu}_i$. As a result, condition (v) in Proposition~\ref{lem:vbc_consistency} is fulfilled, and the asymptotic relation (\ref{eq:asymptotic_vbc}) collapses to 
\begin{equation}
    \log p_{\text{VBC}}(\bm{y}) \xrightarrow[n\rightarrow\infty]{} \log p(\bm{y}|\hat{\bm{\theta}}) - \frac{p+1}{2}\log(n) + \mathcal{O}(1),
\end{equation}
which is the form of the BIC. Under the usual assumption that the `true' parameter value scores the highest possible likelihood asymptotically, model selection consistency immediately follows.

\subsection{Proposition 7: Posterior existence (Tobit)}\label{app:lemm5}

We extend the posterior existence results of \citet{steel2024model} to the
standard left-censored Tobit type I regression.  The latent regression is  
\[
  z_i \;=\; \alpha + \mathbf x_i^{\!\top}\boldsymbol\beta + \varepsilon_i,
  \qquad
  \varepsilon_i\stackrel{\text{i.i.d.}}{\sim} N(0,\sigma^2),
  \qquad i=1,\dots ,n,
\]
and the observation equation
\[
  y_i \;=\;
  \begin{cases}
    z_i, & z_i>y_L,\\[2pt]
    y_L, & z_i\le y_L,
  \end{cases}
\]
where $y_L$ is a fixed censoring point. For an \emph{uncensored} outcome $y_i>y_L$ the latent value is observed exactly. For a \emph{censored} observation $y_i=y_L$ we only know that $z_i\le y_L$:
\[
  p(y_i\mid z_i)=\mathbb I\!\bigl(z_i\le y_L\bigr).
\]

For every candidate model $M_k$ with design matrix $\mathbf X_k$
we adopt the usual g-prior on $\bm{\beta_k}$ and the improper prior $p(\alpha, \sigma^2)\;\propto\;\sigma^{-2}$ for the latent intercept and error variance. As conditions, we require the matrix $(\bm{\iota}:\mathbf X_k)$ to be of full rank for every model $M_k$ (\textbf{Condition 1}) and at least two uncensored observations $y_i>y_L$ (\textbf{Condition 2}). If and only if these conditions hold, the posterior distribution for every model $M_k$ is proper under the left-censored tobit model and the prior specified above.

\begin{proof}
    
To show this, denote by $\bm{y}$ the vector of all observations $y_{i}$ and partition $\bm{y}$ as $\bm{y}=(\bm{y}_N',\bm{y}_Z')'$ where $\bm{y}_N$ groups all $n_N$ observations that allow for the integral $\int p(y_i|z_i) d z_i$ to be finite. Now consider the marginal likelihood for model $M_k$
\begin{equation}
p(\bm{y}|M_k)=p(\bm{y}_Z|\bm{y}_N, M_k) p(\bm{y}_N| M_k) 
\label{MLy}
\end{equation}
and we need to show that this marginal likelihood is finite for all values of $\bm{y}$ and for any model $M_k$. First, let us focus on the vector $\bm{y}_N$:
\begin{equation}
p(\bm{y}_N|M_k)=\int p(\bm{y}_N|\bm{z}_N, M_k) p(\bm{z}_N| M_k) d \bm{z}_N,\label{MLyN}
\end{equation}
where $\bm{z}_N$ denotes those $z_{i}$ that correspond to $\bm{y}_N$ and we can write
\begin{equation}
p(\bm{y}_N|\bm{z}_N, M_k)=\prod_{i\in\cal N}p({y}_{i}|{z}_{i}),\label{eq:ygivenzN}
\end{equation}
where $\cal N$ is the set of observation indices of $\bm{y}_N$.
Let us now consider $p(\bm {z}_N| M_k)$. If  the matrix $(\bm{\iota}:\bm{X_k})$ is of full column rank ({\bf Condition 1}) and if $n_N \ge 2$ ({\bf Condition 2}), we can derive that

   \begin{equation}
p(\bm{z}_N|M_k)\propto g^{-\frac{p_k}{2}}
|\bm{X}_k'
\bm{X}_k|^\frac{1}{2}
|\bm{A}_k|^{-\frac{1}{2}}
[\bm{z}_N'\bm {P}_k\bm{z}_N]^{-\frac{n_N-1}{2}},
\label{MLzN}
\end{equation}
where \begin{equation}
\bm{P}_k=\bm{I}_{n_N}- (\bm{\iota} : \bm{X}_{k,N})\left( {\begin{array}{cc}
    n_N^{-1}  & \bm{0}' \\
    \bm{0}  &
    \bm{A}_k^{-1} \\
  \end{array} } \right)  
\left(  \begin{array}{c}
\bm{\iota}'  \\
\bm{X}_{k,N}' \\
\end{array} \right)
\end{equation}
and $\bm{A}_k=\bm{X}_{k,N}'\bm{X}_{k,N} + g^{-1} \bm{X}_{k}'\bm{X}_{k}$. Under Condition 1, $\bm{A}_k$ is invertible and for fixed choices of $g$, the expression in (\ref{MLzN}) is almost surely bounded from above by a finite number, say $c$. Therefore, (\ref{MLyN}) becomes
\begin{equation}
p(\bm{y}_N|M_k)< c\prod_{i\in\cal N}\int p({y}_{i}|{z}_{i}) d {z}_{i},
\label{Int}
\end{equation}
and it is sufficient to show that each of the integrals in the above expression is finite. For every $i \in \cal N$ the observation equals the latent
  value, so
  \[
    p(y_i\mid z_i)\;=\;\delta(y_i-z_i).
  \]

where $\delta(\cdot)$ is the Dirac delta function. Hence,
  \[
    \int p(y_i\mid z_i)\,dz_i
      \;=\;\int \delta(y_i-z_i)\,dz_i
      \;=\;1,
    \qquad i\in\cal N.
  \]
  Inserting this into \eqref{Int} gives
  \[
      p(\bm y_N\mid M_k)
      \;<\;c\prod_{i\in\cal N}1
      \;=\;c
      \;<\;\infty,
  \]
  so the marginal likelihood of the uncensored subsample is finite. Thus, we have a well-defined posterior distribution after taking into account at least 2 uncensored observations. These observations in $\bm{y}_N$ will then update the improper prior into a proper posterior, which can then be used as the (proper) prior for the analysis of the zero observations in $\bm{y}_Z$. Of course, the latter will lead to a proper posterior with a finite integrating constant. Thus, $p(\bm{y}_Z| \bm{y}_{N}, M_k)<\infty$ and we obtain that
$p(\bm{y}|M_k)<\infty$,
which proves the result.

\end{proof}

\subsection{Proposition 8: Model selection consistency (tobit)}\label{app:coroll3}

Let $\bm{\theta} = (\alpha, \bm{\beta}, \sigma^2)$. Note that for uncensored observations, $p(\bm{y}|\bm{\theta}) = p(\bm{z}|\bm{\theta})$. For censored observations, $p(y_i|z_i)=1$ for all $z_i$ in the support of $q(\bm{z})$. We will exploit that, as $p(\bm{z}|\bm{\theta})$ is Gaussian and $q(\bm{z})$ is truncated Gaussian, for censored observations the relationship
\begin{equation}
   \log p(z_i|\bm{\theta})-\log q(z_i) =  \log\left[\Phi\left(\frac{y_L-\mu_i}{\sigma}\right)\right] = \log p(y_i|\bm{\theta}),
\end{equation}
holds if both the Gaussian $p(z_i|\bm{\theta})$ and truncated Gaussian $q(z_i)$ are evaluated at the same location and scale parameters. For any censored observation $z_i$, $q(z_i)$ is a truncated Gaussian with location parameter $\mu_i = \mu_{\alpha} + \bm{x}_i^\top \bm{\mu}_{\beta}$ and scale parameter $\xi = \frac{b}{a}$ (Sec.~\ref{sec:tobit}). The Gaussian term $p(z_i|\bm{\theta})$ is evaluated at the posterior expectation $\bm{\hat\theta}$, resulting in the same location parameter, but variance parameter $\frac{b}{a-1}$ (due to the posterior expectation being the expectation of an inverse gamma variational factor for $\sigma^2$). From the definitions of $a$ and $b$ (Sec.~\ref{sec:cavi_updates}) it is easy to see that both $a$ and $b$ grow strictly positively in $n$ and hence, $\frac{b}{a-1} \to \frac{b}{a}$ as $n\to\infty$, and the scale parameters coincide asymptotically. As a result, the left-censored Tobit likelihood 
\begin{equation}
    \log p(\bm{y}|\bm{\theta}) = \sum_{i: y_i > y_L} \left[-\frac{1}{2}\log(2\pi\sigma^2) - \frac{(y_i-\mu_i)^2}{2\sigma^2}\right] + \sum_{i: y_i = y_L} \log \Phi\!\left(\frac{y_L-\mu_i}{\sigma}\right),
\end{equation}
is recovered asymptotically, condition (v) in Proposition~\ref{lem:vbc_consistency} is fulfilled, and the asymptotic relation (\ref{eq:asymptotic_vbc}) collapses to 
\begin{equation}
    \log p_{\text{VBC}}(\bm{y}) \xrightarrow[n\rightarrow\infty]{} \log p(\bm{y}|\hat{\bm{\theta}}) - \frac{p+2}{2}\log(n) + \mathcal{O}(1),
\end{equation}
which is the form of the BIC. Under the usual assumption that the `true' parameter value scores the highest possible likelihood asymptotically, asymptotic selection consistency immediately follows.

\subsection{Proposition 9: Posterior existence (STAR)}\label{app:lemm6}

To extend the posterior existence results of \citet{steel2024model} to the STAR model considered here, note that  

\begin{equation}
    p(y_i|z_i) = \begin{cases}
    \mathbbm{1}(\log y_i < z_i < \log y_i+1)&  \iff y_i>0\\
    \mathbbm{1}(z_i < 0)&  \iff y_i=0\\
\end{cases}
\end{equation}
and hence 
\begin{equation}
    \int p(y_i|z_i) dz_i = \begin{cases}
    \log(y_i+1)-\log(y_i)&  \iff y_i>0\\
    \infty&  \iff y_i=0.\\
\end{cases}
\end{equation}

As a result, we require at least two observations with $0<y_i$ and thus finite $\int p(y_i|z_i)dz_i$ to be present in the data. Posterior existence then follows along the lines of the derivations and under the same full rank condition on $\bm{X}_k$ as given in Sec.~\ref{app:lemm5}.

\subsection{Proposition 10: Model selection consistency (STAR)}
\label{app:coroll4}

Let $\bm{\theta} = (\alpha, \bm{\beta}, \sigma^2)$. For the STAR model, note that $p(y_i|z_i)=\mathbbm{1}(z_i \in \left(a_i, b_i)\right)=1$ for any $z_i$ in the support of $q(\bm{z})$. In addition, if we evaluate the Gaussian $p(z_i|\bm{\theta})$ and truncated Gaussian $q(z_i)$ at the same location and scale parameters, it is easy to show that at the point $\hat{\bm{\theta}}$
\begin{equation}
    \sum_i\log p(z_i|\bm{\theta}) - \log q(z_i) = \sum_{i=1}^n \log\left[\Phi\left(\frac{b_i-\mu_i}{\sigma}\right) - \Phi\left(\frac{a_i-\mu_i}{\sigma}\right)\right]
 = \log p(\bm{y}|\bm{\theta})
\end{equation}
where $a_i$ and $b_i$ are the respective known lower and upper truncation bounds of observation $i$. This is exactly the form of the likelihood given in \citet{kowal2020simultaneous}. From the variational updates in Sec.~\ref{sec:star}, we have that $q(z_i)$ is a truncated Gaussian with location parameter $\mu_i = \mu_{\alpha} + \bm{x}_i^\top \bm{\mu}_{\beta}$ and scale parameter $\xi = \frac{b}{a}$. The Gaussian term $p(z_i|\bm{\theta})$ is evaluated at the posterior expectation $\bm{\hat\theta}$, resulting in the same location parameter, but variance parameter $\frac{b}{a-1}$ (due to the posterior expectation being the expectation of an inverse gamma variational factor for $\sigma^2$). From the definitions of $a$ and $b$ (Sec.~\ref{sec:cavi_updates}) it is easy to see that both $a$ and $b$ grow strictly positively in $n$ and hence, $\frac{b}{a-1} \to \frac{b}{a}$ as $n\to\infty$, and the scale parameters coincide asymptotically. As a result, condition (v) in Proposition~\ref{lem:vbc_consistency} is fulfilled asymptotically, and the asymptotic relation (\ref{eq:asymptotic_vbc}) collapses to 
\begin{equation}
    \log p_{\text{VBC}}(\bm{y}) \xrightarrow[n\rightarrow\infty]{} \log p(\bm{y}|\hat{\bm{\theta}}) - \frac{p+2}{2}\log(n) + \mathcal{O}(1),
\end{equation}
which is the form of the BIC. Under the usual assumption that the `true' parameter value scores the highest possible likelihood asymptotically, asymptotic selection consistency immediately follows.

\subsection{Proposition 11: AVB shrinks VB (univariate case)}
\label{app:avb-shrinkage-1d}
Consider a mean-field variational approximation with Gaussian factor $q_\beta(\beta)=\mathcal N(\mu,\omega)$ where $\beta\in\R$ is scalar, with the remaining factors
$(q_\alpha,\,q_{\sigma^2},\,q_{\bm z})$.
Write $\lambda:=(q_\alpha,q_{\sigma^2})$. Let $\bm{X}\in\R^{n\times 1}$ be the single centered regressor (so $\bm{X}^\top\bm{\iota}=0$ and $\bm{X}^\top \bm{X}>0$). The ELBO is $\mathcal L(\mu,\omega,\lambda,q_{\bm z})$. Define the profiled (supremum) objectives in $\mu$:
\begin{align*}
\text{VB (full profiling):}\quad
\mathcal L^*(\mu)
&:= \sup_{\omega,\ \lambda,\ q_{\bm z}}\ \mathcal L(\mu,\omega,\lambda,q_{\bm z}),\\
\text{AVB (fix $q_{\bm z}$ at null):}\quad
\bar{\mathcal L}(\mu)
&:= \sup_{\omega,\ \lambda}\ \mathcal L(\mu,\omega,\lambda,\bar q_{\bm z}),
\end{align*}
where $\bar q_{\bm z}$ is the optimizer of $q_{\bm z}$ from the \emph{full} VB problem at $\mu=0$ (the null fit). Since VB optimizes over a superset of variables, for all $\mu$, we have $\bar{\mathcal L}(\mu)\le \mathcal L^*(\mu)$ and $\bar{\mathcal L}(0)=\mathcal L^*(0)$ (by construction of $\bar q_{\bm z}$). Assume $\mathcal L^*$ and $\bar{\mathcal L}$ are $C^2$ and strictly concave in $\mu$ and define the nonnegative \emph{gap} $D(\mu):=\mathcal L^*(\mu)-\bar{\mathcal L}(\mu)\ge 0$ with $D(0)=0$. Assume $D$ is convex and has a unique global minimum at $0$. Then, $D'(\mu)$ has the sign of $\mu$ for $\mu\neq 0$. 

\begin{proposition}[AVB shrinkage -- univariate case]
\label{prop:avb-shrinkage-1d}
Let $\hat\beta_{\mathrm{VB}}:=\arg\max_{\mu}\mathcal L^*(\mu)$ and
$\hat\beta_{\mathrm{AVB}}:=\arg\max_{\mu}\bar{\mathcal L}(\mu)$.
Under the conditions above,
\[
\operatorname{sign}(\hat\beta_{\mathrm{AVB}})=\operatorname{sign}(\hat\beta_{\mathrm{VB}})\quad\text{and}\quad
|\hat\beta_{\mathrm{AVB}}|\le |\hat\beta_{\mathrm{VB}}|,
\]
with equality iff $\hat\beta_{\mathrm{VB}}=0$.
\end{proposition}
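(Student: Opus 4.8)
The plan is to reduce everything to the first-order conditions of the two strictly concave profiled objectives and then run a monotonicity/intermediate-value argument on the derivative of the AVB objective. Since $\mathcal{L}^*$ and $\bar{\mathcal{L}}$ are $C^2$ and strictly concave in $\mu$ (and coercive, so that the maximizers are finite and interior), each maximizer is the unique zero of the corresponding derivative: $\mathcal{L}^{*\prime}(\hat\beta_{\mathrm{VB}})=0$ and $\bar{\mathcal L}'(\hat\beta_{\mathrm{AVB}})=0$. Writing $\bar{\mathcal L}=\mathcal L^*-D$ gives $\bar{\mathcal L}'=\mathcal L^{*\prime}-D'$, so that $\hat\beta_{\mathrm{AVB}}$ is characterized by $\mathcal L^{*\prime}(\hat\beta_{\mathrm{AVB}})=D'(\hat\beta_{\mathrm{AVB}})$.

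First I would record the two sign facts that drive the proof. Because $D$ is $C^1$, convex, and attains its unique global minimum at $0$, we have $D'(0)=0$ and $D'(\mu)$ carries the sign of $\mu$ (this is precisely the hypothesis stated above). Because $\mathcal L^*$ is strictly concave, $\mathcal L^{*\prime}$ is strictly decreasing with its unique zero at $\hat\beta_{\mathrm{VB}}$; hence $\operatorname{sign}\mathcal L^{*\prime}(0)=\operatorname{sign}\hat\beta_{\mathrm{VB}}$ (the derivative is positive to the left of the VB optimum and negative to the right). Likewise $\bar{\mathcal L}'$ is strictly decreasing, so it has at most one zero and its sign at any fixed point pins down the location of $\hat\beta_{\mathrm{AVB}}$ relative to that point.

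The core step is to evaluate $g:=\bar{\mathcal L}'$ at the two landmarks $0$ and $\hat\beta_{\mathrm{VB}}$. At $\mu=0$, $g(0)=\mathcal L^{*\prime}(0)-D'(0)=\mathcal L^{*\prime}(0)$, which carries the sign of $\hat\beta_{\mathrm{VB}}$. At the VB optimum, $g(\hat\beta_{\mathrm{VB}})=\mathcal L^{*\prime}(\hat\beta_{\mathrm{VB}})-D'(\hat\beta_{\mathrm{VB}})=-D'(\hat\beta_{\mathrm{VB}})$, which carries the opposite sign, namely $-\operatorname{sign}\hat\beta_{\mathrm{VB}}$. Taking $\hat\beta_{\mathrm{VB}}>0$ (the case $\hat\beta_{\mathrm{VB}}<0$ is identical after the reflection $\mu\mapsto-\mu$), we obtain $g(0)>0$ and $g(\hat\beta_{\mathrm{VB}})<0$; strict monotonicity of $g$ then forces its unique zero $\hat\beta_{\mathrm{AVB}}$ to lie strictly in $(0,\hat\beta_{\mathrm{VB}})$, which delivers both $\operatorname{sign}(\hat\beta_{\mathrm{AVB}})=\operatorname{sign}(\hat\beta_{\mathrm{VB}})$ and the strict inequality $|\hat\beta_{\mathrm{AVB}}|<|\hat\beta_{\mathrm{VB}}|$. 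For the equality characterization, if $\hat\beta_{\mathrm{VB}}=0$ then $\mathcal L^{*\prime}(0)=0$ and $D'(0)=0$, so $g(0)=0$ and hence $\hat\beta_{\mathrm{AVB}}=0$ by uniqueness of the zero; conversely, the strict sandwiching just established rules out $|\hat\beta_{\mathrm{AVB}}|=|\hat\beta_{\mathrm{VB}}|$ whenever $\hat\beta_{\mathrm{VB}}\neq0$.

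I expect the only real subtlety — the ``hard part'' in an otherwise short argument — to be the bookkeeping around the degenerate and boundary situations: confirming that the maximizers are genuinely interior and finite so that the first-order conditions apply (which I would obtain from strict concavity together with the implicit coercivity of the ELBO in $\mu$), and verifying $D'(0)=0$ cleanly from the stated convexity-plus-unique-minimum hypothesis rather than importing extra regularity. Everything else is a one-line intermediate value argument once the signs of $g(0)$ and $g(\hat\beta_{\mathrm{VB}})$ have been fixed.
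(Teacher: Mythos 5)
Your proof is correct and uses essentially the same argument as the paper: both rest on the decomposition $\bar{\mathcal L}=\mathcal L^*-D$, the sign of $D'$, the first-order conditions, and strict monotonicity of the derivatives under strict concavity. The only (cosmetic) difference is direction: the paper reads off the sign of $\mathcal L^{*\prime}$ at $\hat\beta_{\mathrm{AVB}}$ to place it relative to $\hat\beta_{\mathrm{VB}}$, whereas you evaluate $\bar{\mathcal L}'$ at the landmarks $0$ and $\hat\beta_{\mathrm{VB}}$ and sandwich $\hat\beta_{\mathrm{AVB}}$ by the intermediate value theorem.
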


\begin{proof}
Since $D(0)=0$ and $D\ge 0$ with a unique minimum at $0$, we have $D'(0)=0$ and $D'(\mu)$ has the sign of $\mu$ for $\mu\neq 0$. Because $D'=\mathcal L^{*'}-\bar{\mathcal L}'$, this yields
\[
\bar{\mathcal L}'(0)=\mathcal L^{*'}(0),\qquad
\bar{\mathcal L}'(\mu)=\mathcal L^{*'}(\mu)-D'(\mu).
\]
At $\mu=\hat\beta_{\mathrm{AVB}}$, $\bar{\mathcal L}'(\hat\beta_{\mathrm{AVB}})=0$, hence
$\mathcal L^{*'}(\hat\beta_{\mathrm{AVB}})=D'(\hat\beta_{\mathrm{AVB}})$ has the sign of $\hat\beta_{\mathrm{AVB}}$.
Since $\mathcal L^*$ is strictly concave, its derivative is strictly decreasing and changes sign from $+$ to $-$ at the unique maximizer $\hat\beta_{\mathrm{VB}}$. Therefore, if $\hat\beta_{\mathrm{AVB}}>0$, then $\mathcal L^{*'}(\hat\beta_{\mathrm{AVB}})>0$ implies $0<\hat\beta_{\mathrm{AVB}}<\hat\beta_{\mathrm{VB}}$; similarly, if $\hat\beta_{\mathrm{AVB}}<0$, then $\mathcal L^{*'}(\hat\beta_{\mathrm{AVB}})<0$ implies $\hat\beta_{\mathrm{VB}}<\hat\beta_{\mathrm{AVB}}<0$. If $\hat\beta_{\mathrm{AVB}}=0$, strict concavity gives $\hat\beta_{\mathrm{VB}}=0$. The claims follow.
\end{proof}

\section{Algorithmic Implementation Details}
\label{app:algorithmic}

The following algorithms implement model space exploration using the VBC criterion. For both algorithms, we use an add-delete-swap proposal mechanism where proposal probabilities $r(\mathcal{M}_j|\mathcal{M}_k)$ are computed based on the specific moves between models. Details on these proposal probabilities can be found in the supplement of \citet{steel2024model}. By default, we consider the algorithm converged if the largest relative parameter change from the previous iteration is at most $10^{-6}$, or if the algorithm has performed 10,000 iterations.

\begin{algorithm}[h!]
\setstretch{1.35}
\caption{Model Space Exploration using Variational Bayes (VB) with VBC}
\label{alg:vb_exploration}
\begin{algorithmic}[1]
\setlength{\itemsep}{0pt}
\setlength{\parsep}{0pt}
\Require Data $\bm{y}$, design matrix $\bm{X}$, hyperparameter $g$ (default: $g=n$)
\Require Initial model $\mathcal{M}_k$, number of iterations $T$ (default: $T=10,000$ after burn-in of 2,000)
\State Compute initial approximation $p_{\text{VBC}}(\bm{y}|\mathcal{M}_k)$ using full VB
\For{$t = 1$ to $T$}
    \State Propose new model $\mathcal{M}_j$ using add-delete-swap proposal from $\mathcal{M}_k$
    \State \textbf{Run full VB algorithm for model $\mathcal{M}_j$:}
    \State \quad Initialize $q_j(\alpha)$, $q_j(\bm{\beta})$, $q_j(\sigma^2)$, $q_j(\bm{z})$
    \Repeat
        \State \quad Update $q_j(\alpha)$: $q_j(\alpha) \propto \mathcal{N}(\mu_{\alpha,j}, \omega_{\alpha,j})$
        \State \quad Update $q_j(\bm{\beta})$: $q_j(\bm{\beta}) \propto \mathcal{N}(\bm{\mu}_{\beta,j}, \bm{\Omega}_{\beta,j})$ 
        \State \quad Update $q_j(\sigma^2)$: $q_j(\sigma^2) \propto \mathcal{IG}(a_j, b_j)$ (if not fixed)
        \State \quad Update $q_j(z_i)$ for each $i$ using model-specific updates
    \Until{Convergence}
    \State Compute approximation $p_{\text{VBC}}(\bm{y}|\mathcal{M}_j)$ using equation (3)
    \State Calculate acceptance probability: $\alpha = \min\left\{ 1, \frac{p_{\text{VBC}}(\bm{y}|\mathcal{M}_j)\,p(\mathcal{M}_j)\,r(\mathcal{M}_k|\mathcal{M}_j)}{p_{\text{VBC}}(\bm{y}|\mathcal{M}_k)\,p(\mathcal{M}_k)\,r(\mathcal{M}_j|\mathcal{M}_k)} \right\}$
    \State Generate $u \sim \text{Uniform}(0,1)$
    \If{$u < \alpha$}
        \State Accept: $\mathcal{M}_k \leftarrow \mathcal{M}_j$, $p_{\text{VBC}}(\bm{y}|\mathcal{M}_k) \leftarrow p_{\text{VBC}}(\bm{y}|\mathcal{M}_j)$
    \EndIf
    \State Store current model $\mathcal{M}_k$ and approximation $p_{\text{VBC}}(\bm{y}|\mathcal{M}_k)$
\EndFor
\State \Return Sequence of visited models, parameters and marginal likelihood approximations
\end{algorithmic}
\end{algorithm}

\begin{algorithm}[h!]
\setstretch{1.35}
\caption{Model Space Exploration using Approximate Variational Bayes (AVB) with VBC}
\label{alg:avb_exploration}
\begin{algorithmic}[1]
\Require Data $\bm{y}$, design matrix $\bm{X}$, hyperparameter $g$ (default: $g=n$)
\Require Initial model $\mathcal{M}_k$, number of iterations $T$ (default: $T=10,000$ after burn-in of 2,000)
\State \textbf{Precomputation phase:}
\State Run full VB algorithm for null model $\mathcal{M}_{\text{null}}$ (intercept only)
\State Fix latent variable densities: $\bar{q}(z_i) \leftarrow q_{\text{null}}(z_i)$ from null model for all $i$
\State Compute fixed moments: $\bar{m}_i \leftarrow \mathbb{E}_{\bar{q}(z_i)}[z_i]$, $\bar{s}_i \leftarrow \mathbb{V}_{\bar{q}(z_i)}[z_i]$
\State Compute initial approximation $p_{\text{VBC}}(\bm{y}|\mathcal{M}_k)$ using AVB estimates
\For{$t = 1$ to $T$}
    \State Propose new model $\mathcal{M}_j$ using add-delete-swap proposal from $\mathcal{M}_k$
    \State \textbf{Run AVB algorithm for model $\mathcal{M}_j$:}
    \State \quad Initialize $q_j(\alpha)$, $q_j(\bm{\beta})$, $q_j(\sigma^2)$; Fix $q_j(z_i) \leftarrow \bar{q}(z_i)$ for all $i$
    \Repeat
        \State \quad Update $q_j(\alpha)$ using fixed $\bar{m}_i$: $\mu_\alpha = \frac{1}{n}\sum_{i=1}^n \bar{m}_i$
        \State \quad Update $q_j(\bm{\beta})$: $\bm{\mu}_\beta = \frac{g}{1+g}(\bm{X}_j^\top \bm{X}_j)^{-1}\bm{X}_j^\top \bar{\bm{m}}$
        \State \quad Update $q_j(\sigma^2)$ using fixed $\bar{m}_i$, $\bar{s}_i$ (if not fixed)
    \Until{Convergence}
    \State Compute approximation $p_{\text{VBC}}(\bm{y}|\mathcal{M}_j)$ using equation (3) with fixed $\bar{q}(z_i)$
    \State Calculate acceptance probability: $\alpha = \min\left\{ 1, \frac{p_{\text{VBC}}(\bm{y}|\mathcal{M}_j)\,p(\mathcal{M}_j)\,r(\mathcal{M}_k|\mathcal{M}_j)}{p_{\text{VBC}}(\bm{y}|\mathcal{M}_k)\,p(\mathcal{M}_k)\,r(\mathcal{M}_j|\mathcal{M}_k)} \right\}$
    \State Generate $u \sim \text{Uniform}(0,1)$
    \If{$u < \alpha$}
        \State Accept: $\mathcal{M}_k \leftarrow \mathcal{M}_j$, $p_{\text{VBC}}(\bm{y}|\mathcal{M}_k) \leftarrow p_{\text{VBC}}(\bm{y}|\mathcal{M}_j)$
    \EndIf
    \State Store current model $\mathcal{M}_k$ and approximation $p_{\text{VBC}}(\bm{y}|\mathcal{M}_k)$
\EndFor
\State \Return Sequence of visited models, parameters and marginal likelihood approximations
\end{algorithmic}
\end{algorithm}

\clearpage

\section{Additional Details on Illustrative Models}
\label{sec:additional_details}
\subsection{Master ELBO for Latent Gaussian Regression}
\label{app:common_setup}

Consider latent Gaussian regression with observations $\{(y_i,\bm x_i)\}_{i=1}^n$,
latent variables $\bm z=(z_1,\dots,z_n)^\top$, error variance $\sigma^2$ and linear predictor $\eta_i$
\[
z_i \mid \alpha,\bm\beta,\sigma^2 \ \sim\ \mathcal N\!\big(\eta_i,\ \sigma^2\big),
\qquad
\eta_i := \alpha + \bm x_i^\top\bm\beta.
\]
Let $\bm X$ be the (centered) $n\times p$ design matrix and define $\bm G := \bm X^\top \bm X$. We consider a mean-field variational family
\[
q(\alpha)=\mathcal N(\mu_\alpha,\omega_\alpha),\quad
q(\bm\beta)=\mathcal N(\bm\mu_\beta,\bm\Omega_\beta),\quad
q(\sigma^2)=\mathcal{IG}(a,b)\,,\quad
q(\bm z)=\prod_{i=1}^n q(z_i),
\]
and a $g$-prior $p(\bm\beta\mid\sigma^2)=\mathcal N\!\big(\bm 0,\ g\,\sigma^2\,\bm G^{-1}\big)$ as well as a Jeffreys-type prior $p(\alpha, \sigma^2)\propto\sigma^{-2}$.
Define the variational linear predictor
\[
\mu_i := \mu_\alpha + \bm x_i^\top\bm\mu_\beta,
\qquad
\bm\mu := (\mu_1,\dots,\mu_n)^\top,
\]
and the shared quantities
\[
\tau := \E_q[\sigma^{-2}] = a/b,\qquad
\xi := \frac{b}{a}=\tau^{-1},\qquad
C_\beta := \operatorname{tr}(\bm G\,\bm\Omega_\beta)+\bm\mu_\beta^\top\bm G\,\bm\mu_\beta, \qquad \E_q[\log\sigma^2]=\log b-\psi(a).
\]

\noindent Furthermore, for models involving truncated Gaussian variational factors, for a truncation interval $[\tau_i^{\text{low}},\tau_i^{\text{upp}})$, set
\[
\ell_i=\frac{\tau_i^{\text{low}}-\mu_i}{\sqrt{\xi}},\qquad
u_i=\frac{\tau_i^{\text{upp}}-\mu_i}{\sqrt{\xi}},\qquad
\kappa_i=\Phi(u_i)-\Phi(\ell_i),
\]
and
\[
\lambda_i:=\frac{\phi(\ell_i)-\phi(u_i)}{\kappa_i},\qquad
\chi_i:=1+\frac{\ell_i\phi(\ell_i)-u_i\phi(u_i)}{\kappa_i}-\lambda_i^2.
\]
If $q(z_i)$ is $\mathcal N(\mu_i,\xi)$ truncated to $[\tau_i^{\text{low}},\tau_i^{\text{upp}})$, then by the properties of truncated Gaussians, we have
\[
m_i:=\mathbb{E}_{q(z_i)}[z_i]=\mu_i+\sqrt{\xi}\,\lambda_i,
\qquad
s_i:=\mathbb{V}_{q(z_i)}[z_i]=\xi\,\chi_i.
\]
\noindent Finally, from the properties of the Gaussian and inverse gamma variational factors for $\alpha$, $\bm{\beta}$ and $\sigma^2$, the following entropy terms can be derived:
\begin{align}    
\mathcal H_{\alpha,\beta}
=\tfrac12\log(2\pi e\,\omega_\alpha)
+\tfrac12\log\!\bigl((2\pi e)^p|\bm\Omega_\beta|\bigr),
\quad
\mathcal H_{\sigma^2}
=a+\log b+\log\Gamma(a)-(a+1)\psi(a).
\end{align}

For the variational densities $q(z_i)$, which are Gaussian or truncated Gaussian in our setting, we have the following entropy terms:
\[
\mathcal H_{q(z_i)}=
\begin{cases}
\displaystyle \tfrac12\log(2\pi e\,s_i), & \text{if } q(z_i)=\mathcal N(m_i,s_i),\\[6pt]
\displaystyle \log\!\bigl(\kappa_i\sqrt{2\pi e\,\xi}\bigr)
+\dfrac{\ell_i\phi(\ell_i)-u_i\phi(u_i)}{2\kappa_i}, & \text{if $q(z_i)=\mathcal{TN}(\cdot)$.}
\end{cases}
\]

Then, a \emph{master ELBO for latent Gaussian regression} is given by:
\begin{equation}
\label{eq:master_elbo}
\begin{aligned}
\mathrm{ELBO}
&= \sum_{i=1}^n \E_q[\log p(y_i\mid z_i)]
- \frac{n}{2}\log(2\pi) - \frac{p}{2}\log(2\pi)
- \frac{n+p}{2}\,\E_q[\log\sigma^2]
\\
&\quad - \frac{\tau}{2}\sum_{i=1}^n\Big[s_i+\omega_\alpha+\bm x_i^\top\bm\Omega_\beta\bm x_i+\big(m_i-\mu_i\big)^2\Big]
\\
&\quad - \frac{p}{2}\log g + \frac{1}{2}\log|\bm G|
- \frac{\tau}{2g}\,C_\beta
+ \underbrace{\big(\psi(a)-\log b\big)}_{\E_q[\log p(\sigma^2)]}
\\
&\quad + \sum_{i=1}^n \mathcal H_{q(z_i)} + \mathcal H_{\alpha,\beta} + \mathcal H_{\sigma^2}.
\end{aligned}
\end{equation}

If the model has no $\sigma^2$ (e.g., probit with unit latent variance), set $\tau=1$, $\xi=1$ and drop the ELBO terms related to $\sigma^2$.

\subsection{Probit Model}
\label{app:probit_details}

The probit likelihood is deterministic in $z_i$:
\[
p(y_i\mid z_i)=\mathbbm 1\{z_i>0\}^{\,y_i}\,\mathbbm 1\{z_i\le 0\}^{\,1-y_i}
\ \Rightarrow\ 
\E_q[\log p(y_i\mid z_i)]=0.
\]
The latent prior variance $\sigma^2=1$ is fixed, hence set $\tau=1$, $\xi=1$. By the CAVI update, $q(z_i)$ is truncated to the observation-consistent halfline
\[
(\tau_i^{\text{low}},\tau_i^{\text{upp}})=
\begin{cases}
(0,\infty) & y_i=1,\\
(-\infty,0] & y_i=0,
\end{cases}
\qquad
\ell_i=\begin{cases}-\mu_i & y_i=1,\\-\infty & y_i=0,\end{cases}\qquad u_i=\begin{cases}+\infty & y_i=1,\\-\mu_i & y_i=0.\end{cases}
\]
and the truncated Gaussian moments specialize to
\[
m_i
=\mu_i+\frac{\phi(\mu_i)}{\Phi(\mu_i)}\ \ (y_i=1),
\qquad
m_i
=\mu_i-\frac{\phi(\mu_i)}{1-\Phi(\mu_i)}\ \ (y_i=0),
\]
and $s_i$ follows from the truncated-variance formula with $\xi=1$. The $z$-entropy uses the truncated-Gaussian expression. To derive the probit ELBO, insert the quantities above into \eqref{eq:master_elbo} with $\tau=1$ and drop terms related to $\sigma^2$.

\subsection{Tobit Model (Left-Censored at $y_L$)}
\label{app:tobit_details}

The tobit likelihood is an indicator consistent with the support, hence $\E_q[\log p(y_i\mid z_i)]=0$. For censored observations ($y_i=y_L$), truncate to $(-\infty,y_L]$:
\[
\ell_i=-\infty,\qquad u_i=\eta_i:=\frac{y_L-\mu_i}{\sqrt{\xi}},
\]
\[
m_i=\mu_i-\sqrt{\xi}\,\frac{\phi(\eta_i)}{\Phi(\eta_i)},\qquad
s_i=\xi\Big[1-\eta_i\frac{\phi(\eta_i)}{\Phi(\eta_i)}-\Big(\frac{\phi(\eta_i)}{\Phi(\eta_i)}\Big)^2\Big].
\]
For uncensored observations ($y_i>y_L$), treat $z_i$ as observed:
\[
m_i=y_i,\qquad s_i=0,\qquad \mathcal H_{q(z_i)}=0.
\]
For censored $i$, use the truncated-entropy formula; for uncensored $i$, no $z$-entropy term enters. Plugging into \eqref{eq:master_elbo} gives the tobit ELBO term. Note that only censored indices contribute nonzero $z$-entropy. 

\subsection{STAR Model}
\label{app:star_details}

The STAR likelihood we consider is an interval indicator, thus $\E_q[\log p(y_i\mid z_i)]=0$. All observations use truncated $q(z_i)$ with observation-specific intervals, leading to
\[
[\tau_i^{\text{low}},\tau_i^{\text{upp}})=
\begin{cases}
(-\infty,0) & y_i=0,\\
[\log y_i,\ \log(y_i+1)) & y_i>0,
\end{cases}
\]
\[
m_i=\mu_i+\sqrt{\xi}\,\frac{\phi(\ell_i)-\phi(u_i)}{\kappa_i},\qquad
s_i=\xi\left[1+\frac{\ell_i\phi(\ell_i)-u_i\phi(u_i)}{\kappa_i}
-\left(\frac{\phi(\ell_i)-\phi(u_i)}{\kappa_i}\right)^{2}\right].
\]
The above quantities can be plugged into \eqref{eq:master_elbo} and the ELBO term for STAR models follows.

\subsection{Poisson Log-Normal Model (PLN)}
\label{app:pln_details}

For PLN models, we have
\[
\E_q[\log p(y_i\mid z_i)]
= \E_q\big[y_i z_i - e^{z_i} - \log(y_i!)\big]
= y_i m_i - \exp\!\big(m_i+\tfrac12 s_i\big) - \log(y_i!).
\]
Choosing Gaussian $q(z_i)=\mathcal N(m_i,s_i)$ leads to the entropy terms $\mathcal H_{q(z_i)}=\tfrac12\log(2\pi e\,s_i)$. Substituting the above likelihood block and entropy into \eqref{eq:master_elbo} leads to the PLN ELBO expression. Making the variational parameters explicit, the contribution of observation $i$ can be written as
\[
\text{ELBO}_i = y_i m_i - \exp(m_i + 0.5 s_i) - \frac{a}{2b} (m_i - \mu_{\alpha} - \bm{x}_i^\top \bm{\mu}_{\beta})^2 - s_i \frac{a}{2b} + \frac{1}{2} \log s_i + C.
\]

To optimize the variational parameters $m_i$ and $s_i$, we employ parallel Newton-Raphson algorithms for each $i$, utilizing the first and second partial derivatives of the ELBO with respect to $m_i$ and $s_i$. These derivatives are available in closed form:

\begin{align}
\frac{\partial \text{ELBO}_i}{\partial m_i} &= y_i - \exp(m_i + 0.5 s_i) - \frac{a}{b} \left( m_i - \mu_\alpha - \bm{x}_i^\top \bm{\mu}_\beta \right), \\[8pt]
\frac{\partial^2 \text{ELBO}_i}{\partial m_i^2} &= -\exp(m_i + 0.5 s_i) - \frac{a}{b}, \\[8pt]
\frac{\partial \text{ELBO}_i}{\partial s_i} &= -0.5 \exp(m_i + 0.5 s_i) - \frac{a}{2b} + \frac{1}{2 s_i}, \\[8pt]
\frac{\partial^2 \text{ELBO}_i}{\partial s_i^2} &= -0.25 \exp(m_i + 0.5 s_i) - \frac{1}{2 s_i^2}.
\end{align}

As the ELBO factorizes over $i$, these gradient-based updates can be conducted in parallel for each observation. Note only the first part of this gradient-based update depends on the likelihood $p(y_i|z_i)$ term. Hence, much of these derivations are generic and extend to stochastic links beyond the Poisson case.

Note that we did not use a Laplace approximation for $q(z_i)$, which is an alternative approach. Empirically, however, assuming $q(z_i) = \mathcal{N}(m_i, s_i)$ and directly optimizing the variational parameters tends to yield a more accurate variational approximation compared to a Gaussian approximation centered around the mode. This is due to the potential significant skewness in the posterior of $z_i$ for small counts, where a Gaussian approximation expanded around the mode is not very accurate.

\begin{figure}[!t]
    \centering
    \includegraphics[width=\linewidth]{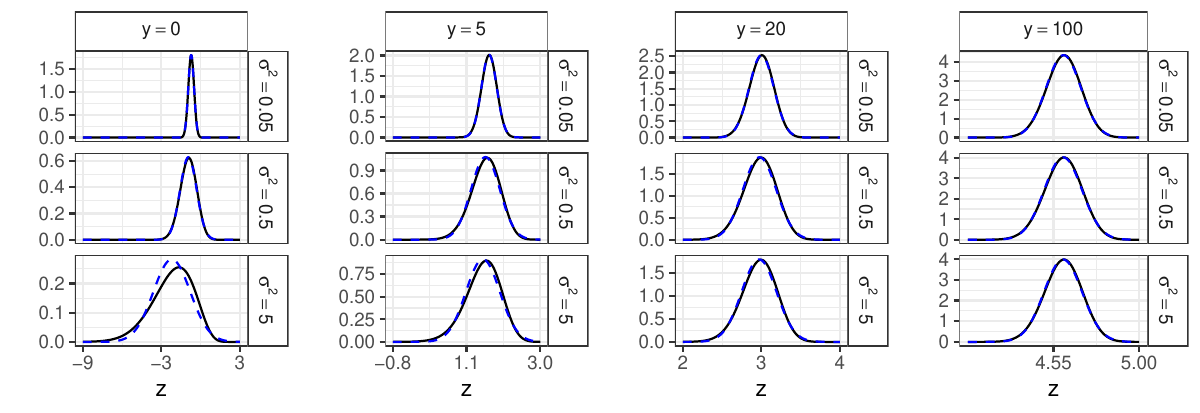}
    \caption{True conditional of $z_i$ (black) and variational approximation (blue dashed) for various settings under a Poisson log-normal model. Columns correspond to size of the underlying count $y_i$. Priors are Gaussian and range from very strong with $\sigma^2=0.05$ to weak with $\sigma^2=5$. All priors are centered at $\log(y+0.5)$, close to the mode of the count likelihood.}
    \label{fig:z_approx}
\end{figure}

Fig.~\ref{fig:z_approx} compares the true conditionals $p(z_i|\cdot)$ with the variational approximations. In line with theoretical expectations, we see that the `worst case' is a small count ($y_i$ close to zero) under a weak Gaussian prior $z_i \sim \mathcal{N}(\alpha + \bm{x}_i^\top\bm{\beta}, \sigma^2)$. In this case, the conditional posterior is skewed, and the variational Gaussian approximation is not able to capture this behavior. However, in virtually all other cases ($y_i$ reasonably large and/or an informative Gaussian prior), a variational Gaussian density is a good description of the true posterior.

\section{Additional Simulation Results}
\label{app:additional_sim}

\subsection{Additional Simulation Settings}

To provide an additional comparison of the AVB and VB algorithms' performance and illustrate the mechanisms underlying the behavior observed in our Afrobarometer real data analysis (Section~\ref{sec:applications}), we conduct the following simulation study. We generate data from a probit model with a true coefficient vector $\bm{\beta} = (0.5, -0.5, 0.25, -0.25, 0.1, \ldots, 0.1)$ and $p=10$ covariates. While the full model represents the true specification, only four covariates exhibit strong relevance, with the remaining six having modest effect sizes.

We vary the sample size as $n \in \{100, 500, 1000, 5000, 10000\}$ and examine two covariate structures to assess likelihood informativeness. In the `independent' setting, predictors are drawn i.i.d.\ from $\mathcal{N}(0,1)$. In the `correlated' setting, observations $\bm{x}_i$ follow a multivariate normal distribution with mean zero and covariance matrix $\bm{\Sigma}$, where $\Sigma_{jk}=\rho^{|j - k|}$ with $\rho = 0.6$. Compared to $\rho = 0.25$ in Section~\ref{sec:numerical}, this higher correlation creates a more challenging setting that emulates datasets with substantial collinearity among predictors.

Furthermore, we investigate the balancedness of the outcome as a second important determinant of likelihood information content in binary settings (\citealp{zens2024ultimate}). We consider two scenarios: a `balanced' case with intercept $\alpha=0$, resulting in approximately 50\% success probability on average, and an `imbalanced' case with $\alpha=-2$, where the negative intercept induces a high proportion of observations with $y_i=0$. These scenarios are deliberately chosen to examine model selection behavior when the amount of likelihood information varies. Scenarios with balanced outcomes and uncorrelated regressors provide more information per observation compared to scenarios with imbalanced outcomes and highly correlated regressors.

We perform full model space enumeration using both AVB and VB, with 40 replications per configuration. Marginal likelihood approximations are based on the VBC. Fig.~\ref{fig:add_sim_res} presents the relationship between posterior inclusion probabilities estimated by the AVB and VB algorithms, averaged across the 40 replications.

In settings with informative likelihood (balanced outcomes and uncorrelated regressors), the asymptotic equivalence of AVB and VB PIPs emerges rapidly as sample size increases. Both algorithms successfully identify the true model with high confidence when $n\geq5,000$. However, in more challenging scenarios characterized by imbalanced outcomes and correlated regressors, differences between the algorithms persist even with $n=10,000$ observations. Here, the AVB algorithm fails to recover the correct median probability model, whereas the VB algorithm achieves accurate model identification. Importantly, as sample size increases, both algorithms progressively shift all PIPs toward unity, consistent with the true underlying model structure. As shown in Section~\ref{sec:numerical}, both algorithms eventually recover the correct model as $n$ grows sufficiently large.

A key observation is that algorithmic disagreement occurs exclusively in the upper-left quadrant of the plots (indicated by green triangles), where AVB excludes variables that VB includes in the median probability model. This pattern underscores the conservative nature of the AVB algorithm, as predicted by Proposition~\ref{prop:local-shrink-simple}. These empirical results underscore that AVB typically requires more likelihood information than VB for reliable model selection and model averaging performance. This additional simulation study provides important context for interpreting the differential algorithm performance observed in the Afrobarometer application (Section~\ref{sec:applications}), where the combination of binary outcomes, correlated predictors, and large model space creates precisely the type of challenging scenario examined here.

\begin{figure}[t]
    \centering
    \includegraphics[width=0.9\textwidth]{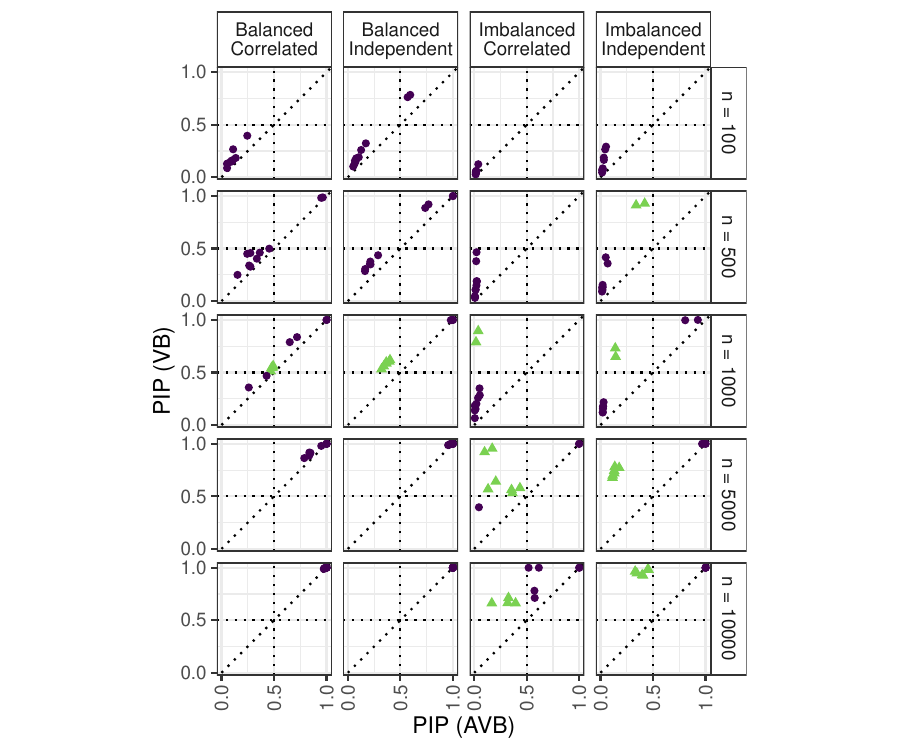}
    \caption{Posterior inclusion probabilities from probit AVB (horizontal axis) versus VB (vertical axis) algorithms, computed via full model space enumeration across multiple simulation scenarios. Each point represents the average PIP across 40 replicated datasets. Green triangles indicate variables where the median probability models differ between algorithms, with AVB excluding variables that VB includes. The data generating process uses $\bm{\beta} = (0.5, -0.5, 0.25, -0.25, 0.1, \dots, 0.1)^\top$ with $p=10$ covariates. Sample sizes range from $n=100$ to $n=10,000$. `Balanced' refers to $\alpha=0$ (approximately 50\% success outcomes), while `imbalanced' uses $\alpha=-2$ (predominantly zero outcomes). `Correlated' regressors have covariance $\Sigma_{jk}=0.6^{|j-k|}$, while `independent' regressors are i.i.d.\ $\mathcal{N}(0,1)$.}
    \label{fig:add_sim_res}
\end{figure}

\clearpage
\subsection{Additional Results}

\begin{figure}[ht]
    \centering
    \includegraphics[width=0.7\textwidth]{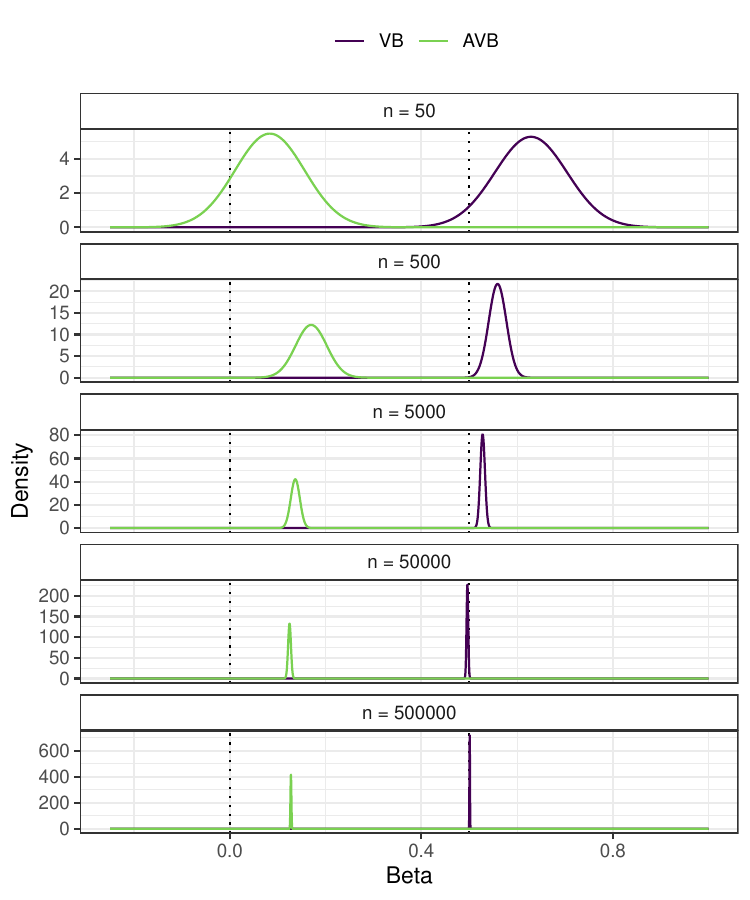}
    \caption{Posterior densities of $\beta_1$ under AVB (green) and VB (purple) algorithms for sample sizes ranging from $n=50$ to $n=500,000$. The data generating process is a Poisson log-normal model with $p=10$ covariates, $\alpha=0$, $\sigma^2=0.1$, and true parameter vector $\bm{\beta} = (0.5,-0.5, 0.25,-0.25, 0, \dots, 0)$. $\beta_1$ is shown on the horizontal axis. Vertical dashed lines indicate zero (left) and the true parameter value $\beta_1=0.5$ (right). The figure illustrates asymptotic posterior consistency of the full VB algorithm and shows that AVB estimates tend to concentrate at a point between 0 and the full VB estimate, consistent with Proposition~\ref{prop:local-shrink-simple}.}
    \label{fig:avb_convergence}
\end{figure}

\begin{figure}[ht]
    \centering
    \includegraphics[width=\textwidth]{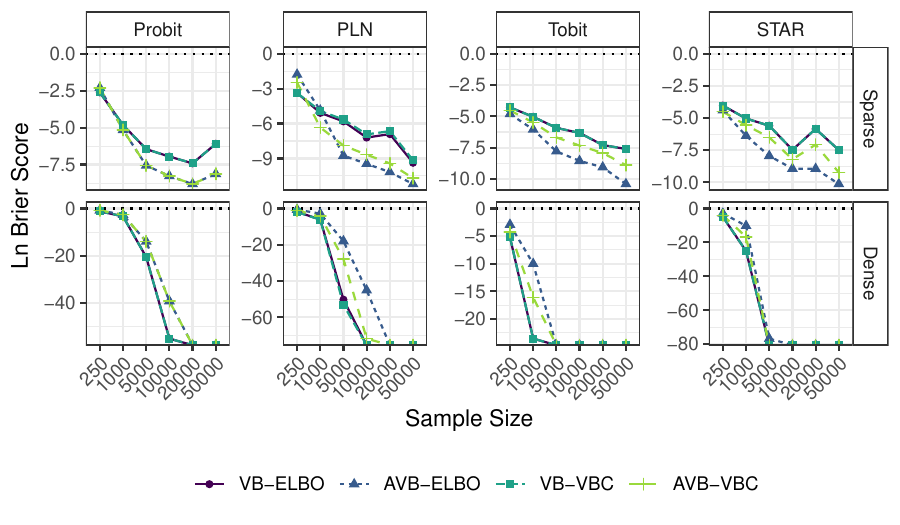}
    \caption{Log scale Brier scores for posterior inclusion probability estimation in the $p=10$ setting with full model space enumeration. Lower values indicate better performance. Comparison includes VBC criterion with full VB (VB-VBC) and approximate variational Bayes (AVB-VBC), and ELBO criterion with full VB (VB-ELBO) and AVB (AVB-ELBO). Results averaged across 40 replicate datasets. Sample sizes range from $n=250$ to $n=50,000$.}
    \label{fig:brier_enumeration_log}
\end{figure}

\begin{figure}[ht]
    \centering
    \includegraphics[width=\textwidth]{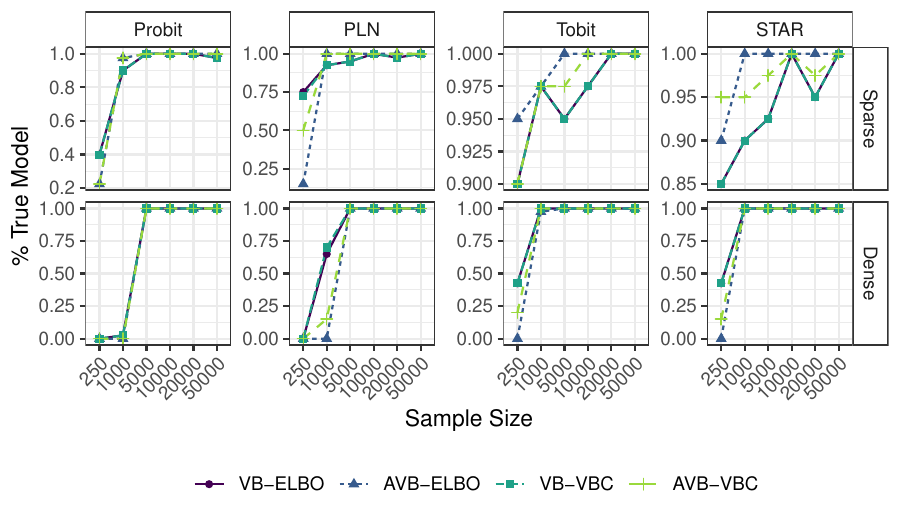}
    \caption{Proportion of simulation runs in which the true model was selected as the highest-scoring model in the $p=10$ setting with full model space enumeration (1,024 models). Comparison of model selection criteria: VBC with full VB (VB-VBC) and AVB (AVB-VBC), ELBO with full VB (VB-ELBO) and AVB (AVB-ELBO). Results based on 40 replicate datasets per configuration. Sample sizes range from $n=250$ to $n=50,000$.}
    \label{fig:true_enumeration}
\end{figure}

\begin{figure}[ht]
    \centering
    \includegraphics[width=\textwidth]{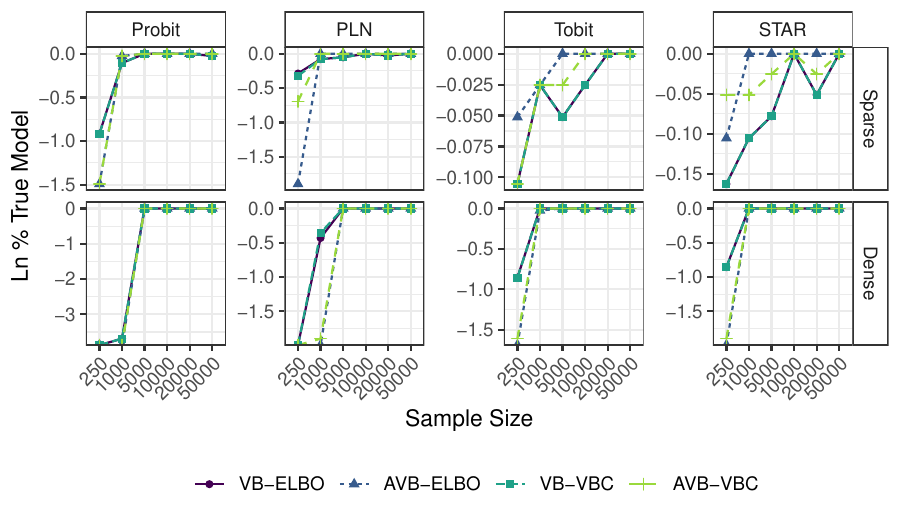}
    \caption{Log scale proportion of simulation runs in which the true model was selected as the highest-scoring model in the $p=10$ setting with full model space enumeration (1,024 models). Comparison of model selection criteria: VBC with full VB (VB-VBC) and AVB (AVB-VBC), ELBO with full VB (VB-ELBO) and AVB (AVB-ELBO). Results based on 40 replicate datasets per configuration. Sample sizes range from $n=250$ to $n=50,000$.}
    \label{fig:true_enumeration_log}
\end{figure}

\begin{figure}[ht]
    \centering
    \includegraphics[width=\textwidth]{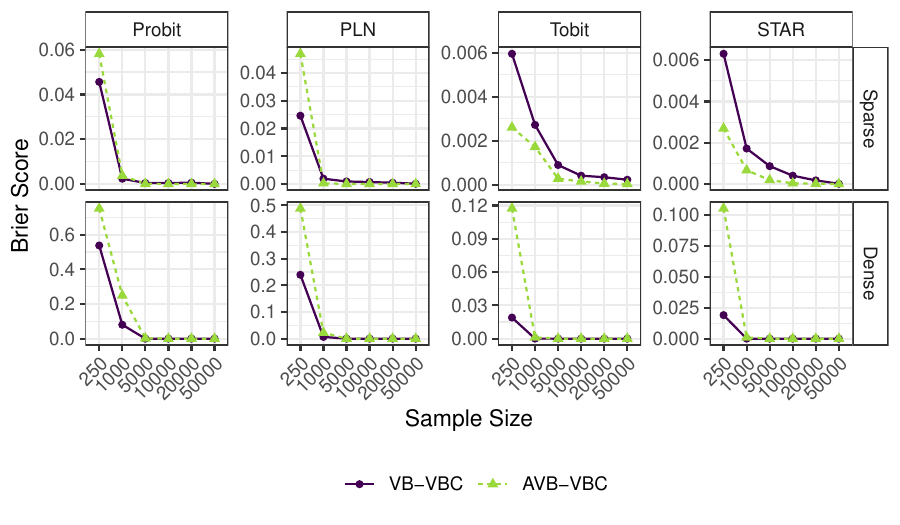}
    \caption{Brier scores for posterior inclusion probability estimation in the $p=30$ setting using model space exploration algorithm (Section~\ref{sec:algorithmic}). Lower values indicate better performance. Comparison of VBC criterion with full VB (VB-VBC) versus AVB (AVB-VBC). Results based on 10,000 models visited after 2,000 burn-in iterations, averaged across 40 replicate datasets. Sample sizes range from $n=250$ to $n=50,000$.}
    \label{fig:brier_exploration}
\end{figure}

\begin{figure}[ht]
    \centering
    \includegraphics[width=\textwidth]{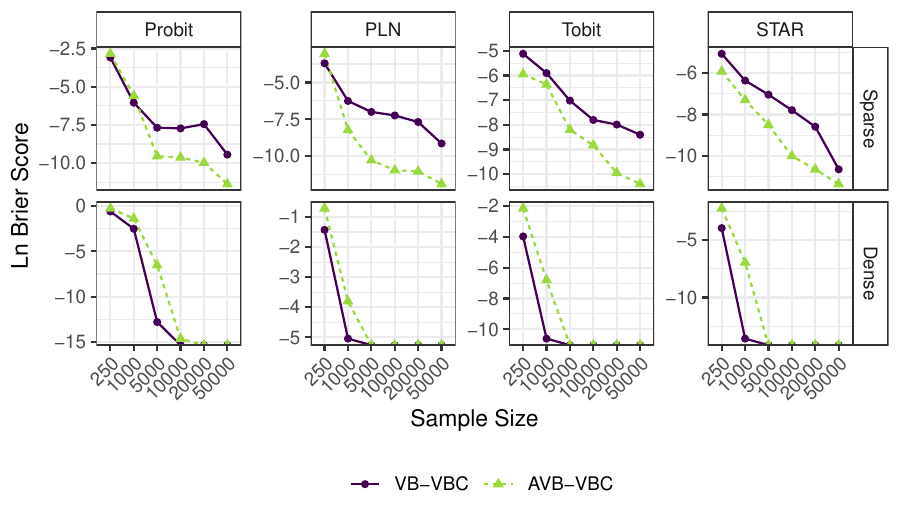}
    \caption{Log scale Brier scores for posterior inclusion probability estimation in the $p=30$ setting using model space exploration algorithm (Section~\ref{sec:algorithmic}). Lower values indicate better performance. Comparison of VBC criterion with full VB (VB-VBC) versus AVB (AVB-VBC). Results based on 10,000 models visited after 2,000 burn-in iterations, averaged across 40 replicate datasets. Sample sizes range from $n=250$ to $n=50,000$.}
    \label{fig:brier_exploration_log}
\end{figure}

\begin{figure}[ht]
    \centering
    \includegraphics[width=\textwidth]{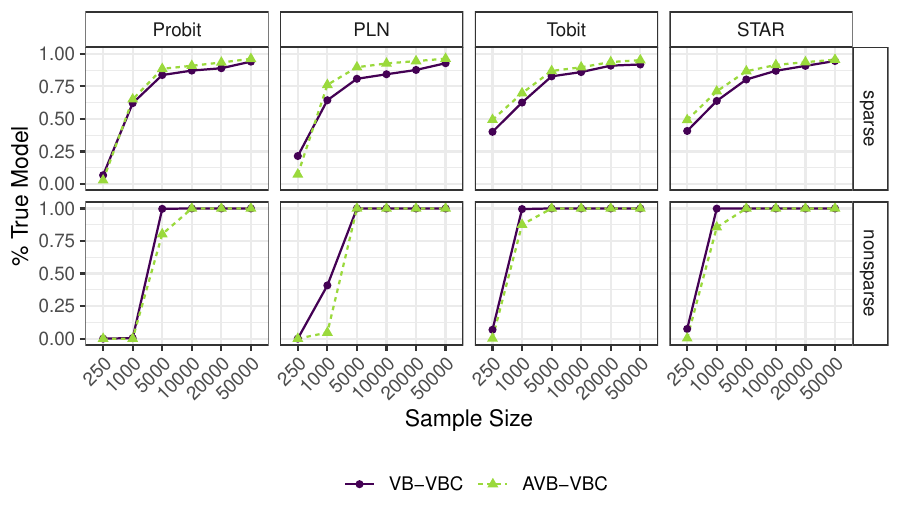}
    \caption{Proportion of iterations in which the true model was visited in the $p=30$ setting using model space exploration algorithm. Comparison of VBC criterion with full VB (VB-VBC) versus AVB (AVB-VBC). Results based on 10,000 models visited after 2,000 burn-in iterations, averaged across 40 replicate datasets. Sample sizes range from $n=250$ to $n=50,000$.}
    \label{fig:true_exploration}
\end{figure}

\begin{figure}[ht]
    \centering
    \includegraphics[width=\textwidth]{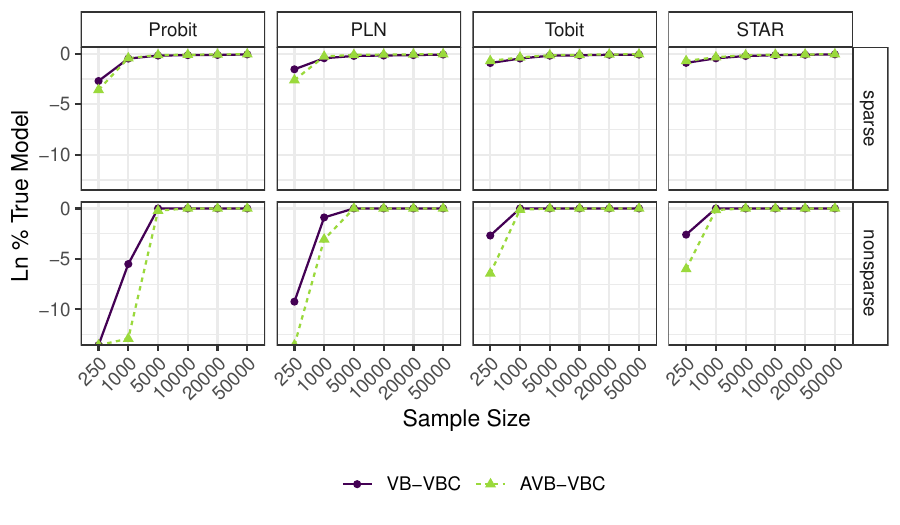}
    \caption{Log scale proportion of iterations in which the true model was visited in the $p=30$ setting using model space exploration algorithm. Comparison of VBC criterion with full VB (VB-VBC) versus AVB (AVB-VBC). Results based on 10,000 models visited after 2,000 burn-in iterations, averaged across 40 replicate datasets. Sample sizes range from $n=250$ to $n=50,000$.}
    \label{fig:true_exploration_log}
\end{figure}

\begin{figure}[ht]
    \centering
    \includegraphics[width=\textwidth]{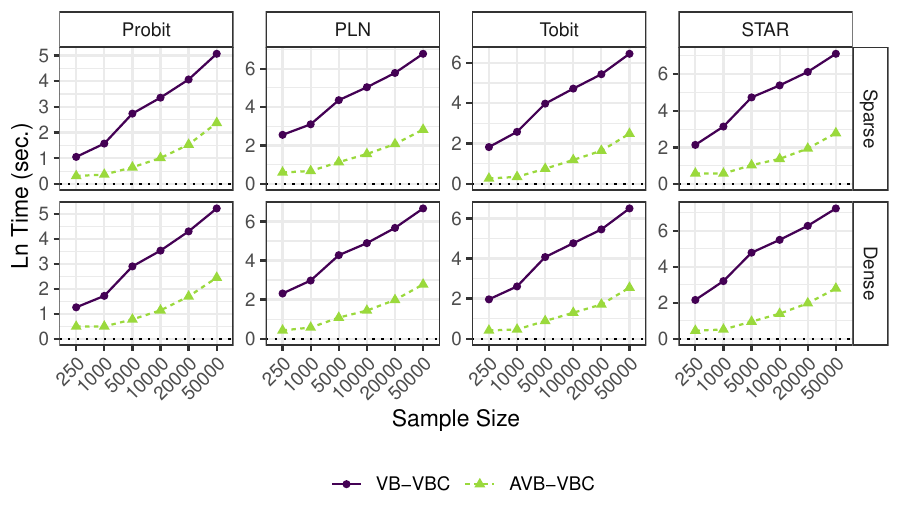}
    \caption{Log scale computation time (in seconds) required to enumerate all 1,024 models in the $p=10$ setting. Comparison of VBC criterion with full VB (VB-VBC) versus AVB (AVB-VBC). Results averaged across 40 replicate datasets. Sample sizes range from $n=250$ to $n=50,000$.}
    \label{fig:timing_enumeration_log}
\end{figure}

\begin{figure}[ht]
    \centering
    \includegraphics[width=\textwidth]{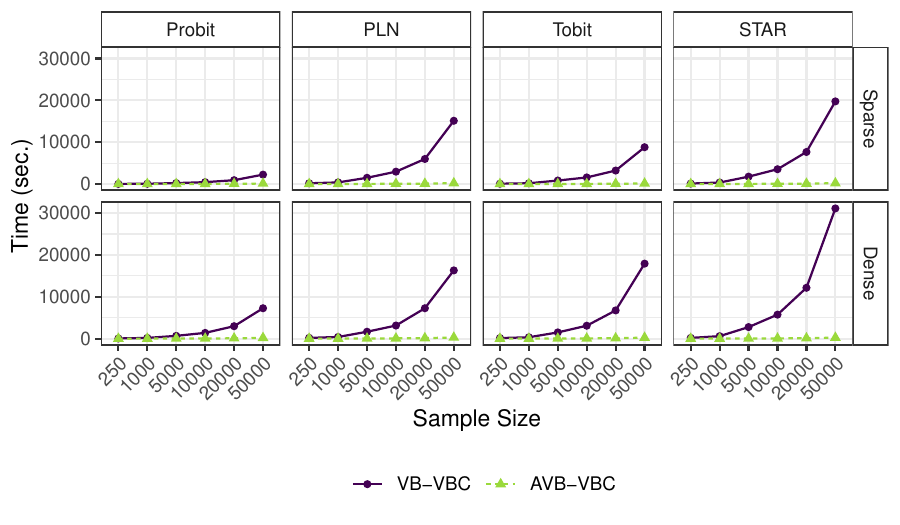}
    \caption{Computation time (in seconds) required to explore 10,000 models in the $p=30$ setting using the model space exploration algorithm. Comparison of VBC criterion with full VB (VB-VBC) versus AVB (AVB-VBC). Results include 2,000 burn-in iterations and are averaged across 40 replicate datasets. Sample sizes range from $n=250$ to $n=50,000$.}
    \label{fig:timing_exploration}
\end{figure}

\begin{figure}[ht]
    \centering
    \includegraphics[width=\textwidth]{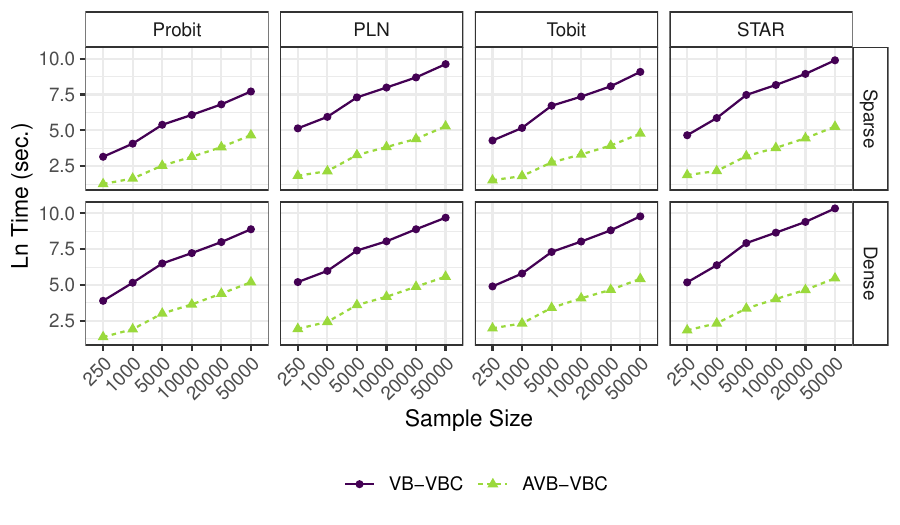}
    \caption{Log scale computation time (in seconds) required to explore 10,000 models in the $p=30$ setting using the model space exploration algorithm. Comparison of VBC criterion with full VB (VB-VBC) versus AVB (AVB-VBC). Results include 2,000 burn-in iterations and are averaged across 40 replicate datasets. Sample sizes range from $n=250$ to $n=50,000$.}
    \label{fig:timing_exploration_log}
\end{figure}

\begin{figure}[ht]
    \centering
    \includegraphics[width=\textwidth]{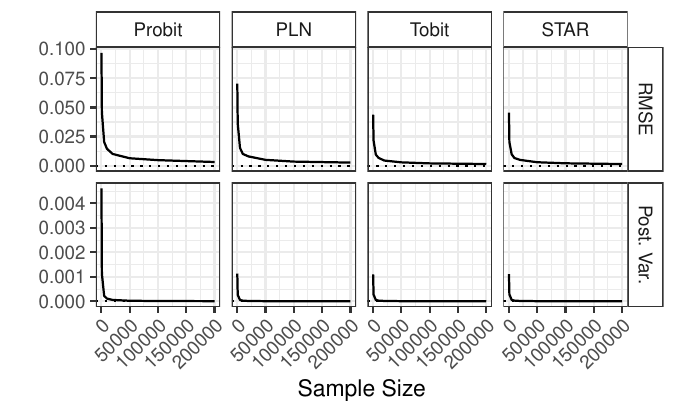}
    \caption{Empirical assessment of posterior consistency for full VB algorithms. Top row: Root mean square errors (RMSE) of parameter estimates relative to true values for $\alpha$, $\bm{\beta}$, and $\sigma^2$ (when applicable). Bottom row: Average posterior variance estimates for the same parameters. Results based on full VB algorithm across four models (Probit, Tobit, STAR, PLN) with sample sizes from $n=250$ to $n=200,000$. All values averaged across 30 replicate datasets. Both RMSE and posterior variances approach zero as $n$ increases, providing empirical evidence of posterior consistency.}
    \label{fig:consistency}
\end{figure}

\begin{figure}[ht]
    \centering
    \includegraphics[width=\textwidth]{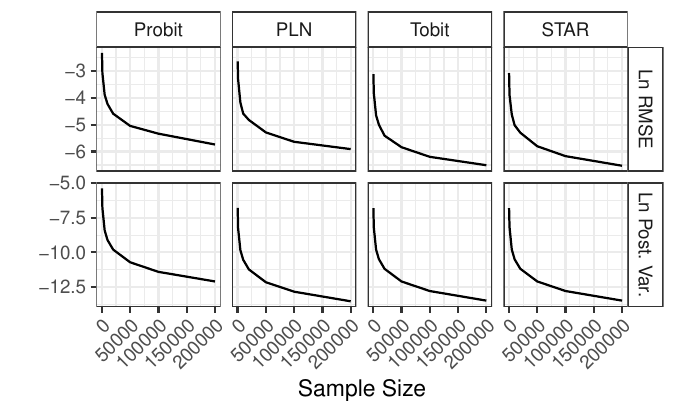}
    \caption{Log scale values of RMSE and posterior variances for empirical assessment of posterior consistency for full VB algorithms. Top row: Root mean square errors (RMSE) of parameter estimates relative to true values for $\alpha$, $\bm{\beta}$, and $\sigma^2$ (when applicable). Bottom row: Average posterior variance estimates for the same parameters. Results based on full VB algorithm across four models (Probit, Tobit, STAR, PLN) with sample sizes from $n=250$ to $n=200,000$. All values averaged across 30 replicate datasets.}
    \label{fig:consistency_log}
\end{figure}

\end{document}